\theoremstyle{plain}
\newtheorem{theorem}{Theorem}
\newtheorem{lemma}{Lemma}
\newtheorem{corollary}{Corollary}[theorem]
\theoremstyle{definition}
\newtheorem{example}{Example}
\theoremstyle{remark}
\newtheorem{remark}{Remark}
\newcommand{\refappendix}[1]{\hyperref[#1]{Appendix~\ref*{#1}}}
\title{Fundamental Limits of Combinatorial Multi-Access Caching}
\author{Federico Brunero\textsuperscript{\orcidlink{0000-0002-6980-3827}} and Petros Elia\textsuperscript{\orcidlink{0000-0002-3531-120X}}%
  \thanks{%
    This work was supported by the European Research Council (ERC) through the EU Horizon 2020 Research and Innovation Program under Grant 725929 (Project DUALITY).

    The authors are with the Communication Systems Department at EURECOM, 450 Route des Chappes, 06410 Sophia Antipolis, France (email: brunero@eurecom.fr; elia@eurecom.fr).
  }
}
\begin{document}

\maketitle

\begin{abstract}
    This work identifies the fundamental limits of Multi-Access Coded Caching (MACC) where each user is connected to multiple caches in a manner that follows a generalized combinatorial topology. This topology stands out as it allows for unprecedented coding gains, even with very modest cache resources. First, we extend the setting and the scheme presented by Muralidhar \emph{et al.} to a much more general topology that supports both a much denser range of users and the coexistence of users connected to different numbers of caches, all while maintaining the astounding coding gains --- here proven to be exactly optimal --- associated with the combinatorial topology. This is achieved, for this generalized topology, with a novel information-theoretic converse that we present here, which establishes, together with the scheme, the exact optimal performance under the assumption of uncoded placement. We subsequently consider different connectivity ensembles, including the very general scenario of the entire ensemble of all possible network connectivities/topologies, where any user can be connected to any subset of caches arbitrarily. For these settings, we develop novel converse bounds on the optimal performance averaged over the ensemble's different connectivities, considering the additional realistic scenario that the connectivity at delivery time is entirely unknown during cache placement. This novel analysis of topological ensembles leaves open the possibility that currently-unknown topologies may yield even higher gains, a hypothesis that is part of the bigger question of which network topology yields the most caching gains.
\end{abstract}

\begin{IEEEkeywords}
  Coded caching, combinatorial topology, index coding, information-theoretic converse, multi-access coded caching (MACC).
\end{IEEEkeywords}

\section{Introduction}

\IEEEPARstart{C}{oded} caching is a coding-based communication technique introduced in~\cite{Maddah-Ali2014FundamentalCaching} that allows for a significant reduction in the amount of data needed to be transferred from a centralized server to its cache-aided receiving users. The idea behind coded caching involves filling preemptively the caches of the users during the \emph{placement phase} without knowing the future requests, where such placement is done in a careful manner so that, once the requests of the users are revealed, the interference is reduced and thus the amount of bits to be transferred during the \emph{delivery phase} is minimized. As a result of the innovative cache placement and the subsequent clique-based delivery scheme, coded caching is able to provide sizeable gains. For instance, if we consider the standard single-stream broadcast channel model with $K$ receiving users each of which is able to store in its cache a fraction $\gamma$ of the main library, then the coding gain\footnote{In a nutshell, such coding gain represents simply the number of users to which a coded message is useful at the same time.} provided by the coded caching framework is equal to $K\gamma + 1$.

Since its original information-theoretic formulation, coded caching has been explored in several variations that involve the interplay between caching and PHY~\cite{Shariatpanahi2016Multi-ServerCaching, Tolli2020Multi-AntennaCaching, Shariatpanahi2018OnRegime, Zhang2017FundamentalFeedback, Lampiris2017Cache-aidedCSIT, MohammadiAmiri2018Cache-AidedChannels, Mohajer2020MISOSubpacketization, Bergel2018Cache-AidedSNR, Naderializadeh2017FundamentalManagement, Shariatpanahi2019Physical-LayerCaching, Cao2020DegreesNetworks}, caching and privacy~\cite{Engelmann2017ACaching, Yan2021FundamentalUsers, Wan2021OnDemands}, information-theoretic converses~\cite{Wan2020AnPlacement, Yu2018ThePrefetching}, and the critical bottleneck of subpacketization~\cite{Yan2017OnScheme, Tang2018CodedCodes, Krishnan2018CodedGraphs,Shangguan2018CentralizedApproach, Lampiris2018AddingSizes, Chittoor2021SubexponentialGeometry, Cheng2021LinearNetworks}, while more recently we saw interesting findings on coded caching under file popularity considerations~\cite{Niesen2017CodedDemands, Zhang2018CodedDistributions, Ji2017Order-OptimalDemands, Hachem2017CodedAccess, Saberali2020FullDemands} or in scenarios with heterogeneous user preferences~\cite{Brunero2021UnselfishCaching, Wan2021OnContent, Wang2019CodedProfiles, Zhang2020OnProfiles, Chang2019CodedCaching, Chang2020OnSets}, as well as on a variety of other scenarios~\cite{Ibrahim2020Device-to-DeviceSizes, Parrinello2020FundamentalPrefetching, Lampiris2020FullUsers, Wei2017NovelPrefetching, Joudeh2021FundamentalTraffic, Veld2020CachingGaussians, Daniel2020OptimizationCaching, Hassanzadeh2020Rate-MemorySources, Zhang2021DeepPopularity}.

\subsection{Multi-Access Coded Caching}\label{sec: MACC Introduction}

The original coded caching model in~\cite{Maddah-Ali2014FundamentalCaching} considered that each user has access to its own single dedicated cache. However, in several scenarios it is conceivable and perhaps more realistic that each cache serves more than one user, and that each user can connect to more than one cache. For instance, in dense cellular networks, the cache-aided Access Points (APs) could have overlapping coverage areas, hence allowing each user to connect to more than one AP. Even more realistically, in a wired setting where a central server wishes to communicate to multiple workers via a shared control channel, each worker could be assisted by multiple memory devices which are again shared among the workers. Such scenarios motivated the work in~\cite{Hachem2017CodedAccess} that introduced the extra dimension of having users that can now have access to multiple caches. In this setting --- in addition to the number of users $K$, the number of library files $N$ and the cache size of $M$ files --- a new  parameter $\lambda$ describes the number of caches that each user can access. This so-called Multi-Access Coded Caching (MACC) model introduced in~\cite{Hachem2017CodedAccess} involved $\Lambda$ users and $\Lambda$ caches, and involved a topology where each user is connected to $\lambda > 1$ consecutive caches in a cyclic wrap-around fashion as in~\Cref{fig: MACC Model Example}, such that each cache serves exactly $\lambda$ users. In the same work, the authors provided a caching-and-delivery procedure which guarantees in its centralized variant\footnote{The original work in~\cite{Hachem2017CodedAccess} proposed the coding scheme with decentralized (stochastic) cache placement. The centralized version can be easily obtained as also mentioned in~\cite{Reddy2020Rate-MemoryPlacement}.} a worst-case load of
\begin{equation}
    \frac{\Lambda(1 - \lambda \gamma)}{\Lambda\gamma + 1} = \frac{K(1 - \lambda \gamma)}{K\gamma + 1} 
\end{equation}
where $\gamma \coloneqq M/N$ is the fraction of the library that each cache is able to store. Such scheme takes advantage of the multi-access nature of the network and allows for an increase of the local caching gain from $\gamma$ to $\lambda\gamma$, without though being able to increase the coding gain, which remains fixed at the gain $\Lambda\gamma + 1$ that only corresponds to the gain in the dedicated cache scenario where $\lambda = 1$.

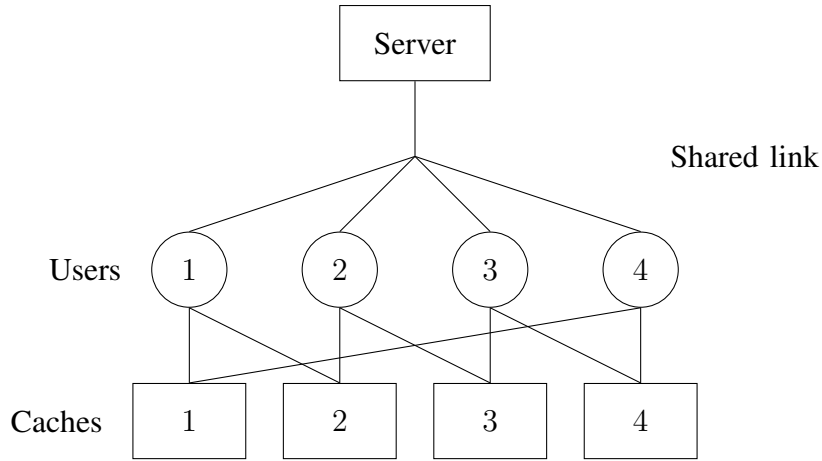
\begin{figure}[!htbp]
    \centering
    \tikzset{cnode/.style={draw, circle, minimum size = 1cm}}
    \tikzset{snode/.style={draw, rectangle, text centered}}
    \begin{tikzpicture}
        \node[snode, minimum height = 1cm, minimum width = 2cm](server){Server};
        \draw (server.south)--++(0, -1);
        \foreach \u/\n in {1/-3, 2/-1, 3/1, 4/3}{
            \draw ($(server.south) + (0, -1)$)--++(\n, -1) node[cnode, below](user\u){$\u$};
        }
        \node at (user1.west)[left, xshift = -0.25cm]{Users};
        \node at ($(server.south) + (3, -1)$)[right, xshift = 0.25cm]{Shared link};
        \foreach \c in {1, 2, 3, 4}{
            \node at ($(user\c.south) + (0, -1)$)[snode, below, minimum height = 1cm, minimum width = 1.5cm](cache\c){$\c$};
        }
        \foreach \c [evaluate = \c as \d using {int(mod(\c, 4)+1)}] in {1, 2, 3, 4}{
            \draw (user\c.south)--(cache\c.north);
            \draw (user\c.south)--(cache\d.north);
        }
        \node at (cache1.west)[left, xshift = -0.25cm]{Caches};
    \end{tikzpicture}
    \caption{MACC model where there are $\Lambda = 4$ caches and each user is connected to $\lambda = 2$ consecutive caches following a cyclic wrap-around topology.}
    \label{fig: MACC Model Example}
\end{figure}

Since the introduction of the aforementioned multi-access cyclic model, various works focused on the design of coding schemes that leverage the fact that each user has access to more cache space. The challenge is always to be able to achieve higher coding gains in a setting where the cache volume seen by a user must be shared among several users. One such work can be found in~\cite{Serbetci2019Multi-accessCache-redundancy}, which proposed a scheme that not only preserves the full local caching gain as in~\cite{Hachem2017CodedAccess}, but also achieves the topology's optimal coding gain of $\Lambda\lambda\gamma + 1$, albeit for the rather unrealistically demanding scenario\footnote{Indeed, thinking about either the wired server-and-workers setting or the dense cellular network scenario previously mentioned, it is more realistic for a user to be connected to very few cache-aided devices.} where $\lambda = (\Lambda - 1)/\Lambda\gamma$. For the same cyclic topology, the interesting work in~\cite{Reddy2020Rate-MemoryPlacement} designed a novel scheme for any $\lambda \geq 1$, which was --- for the similarly demanding regime of $\lambda \geq \Lambda/2$ and $\Lambda\gamma \leq 2$ --- proved to be at a factor of at most $2$ from the optimal under the assumption of uncoded placement.

Other notable works include the extension of the MACC model to support privacy and secrecy constraints~\cite{Namboodiri2021Multi-AccessPrivacy, Liang2021MultiaccessDemands, Namboodiri2021Multi-AccessDelivery}, the connection between MACC and structured index coding problems~\cite{Reddy2020StructuredCaching}, and the application of PDA designs to the multi-access setting~\cite{Sasi2021Multi-accessPDAs, Cheng2021ANetworks}. To the best of our understanding, it is fair to say that, under the assumption of a cyclic wrap-around topology and under realistic assumptions on the cache resources $\Lambda\gamma$ and on $\lambda$, the impact of the MACC setting has remained relatively modest.

Recently, a new MACC paradigm was presented in~\cite{Katyal2021Multi-AccessDesigns, Muralidhar2021ImprovedDesigns}, which involves previously unexplored powerful topologies that deviate from the cyclic topology originally proposed in~\cite{Hachem2017CodedAccess}. These two works in~\cite{Katyal2021Multi-AccessDesigns, Muralidhar2021ImprovedDesigns} drew a clever connection between coding for the MACC problem and employing a topology that is inspired by Cross Resolvable Designs (CRDs), where these CRDs constitute a special class of designs in combinatorics. The authors provided novel placement-and-delivery schemes from CRDs achieving a coding gain equal to $(q + 1)^z$, where $q$ and $z$ are two integer parameters such that each user is connected to $\lambda = qz$ distinct caches via a network topology that is implied by the chosen CRD. While though this gain nicely increases with $\lambda$, it does not increase with the cumulative cache redundancy $\Lambda\gamma$ and thus can remain relatively small as it does not capitalize on this redundancy.

A substantial breakthrough came with the very recent work in~\cite{Muralidhar2021Maddah-Ali-NiesenCaching}, which proposed a MACC model enjoying the same amount of resources $\lambda$ and $\Lambda\gamma$, but where now the users and the caches are connected following the well-known combinatorial topology of combination networks~\cite{Wan2018ACaches, Ji2015OnNetworks, ChiKinNgai2004NetworkNetworks, Xiao2007ANetworks}. This was a breakthrough because it allowed for the deployment of a subsequent scheme --- presented in~\cite{Muralidhar2021Maddah-Ali-NiesenCaching} as a generalization of the original Maddah-Ali and Niesen (MAN) scheme in~\cite{Maddah-Ali2014FundamentalCaching} --- that achieves an astounding coding gain $\binom{\Lambda\gamma + \lambda}{\lambda}$ far exceeding $\Lambda\gamma + 1$ even for small values of $\lambda$ and $\Lambda\gamma$, which is the regime that really matters.  A noticeable drawback of this new approach is that its performance is guaranteed only for a rather sparse range of $K \gg \Lambda$, and that it only captures the scenario where the users must all connect to an identical number of caches.

\subsection{Contributions}

Having as a starting point the combinatorial multi-access system model introduced in~\cite{Muralidhar2021Maddah-Ali-NiesenCaching}, we propose a model extension which allows for a denser range of possible number of users $K$ and for the coexistence of users that are connected to different numbers of caches. For this generalized combinatorial system model we extend the delivery scheme presented in~\cite{Muralidhar2021Maddah-Ali-NiesenCaching} to support the very large coding gains. We then proceed to prove this general scheme to be exactly optimal under the assumption of uncoded placement. We provide optimality by means of an information-theoretic converse that is based on index coding arguments. As a practical consequence of identifying the setting's exact fundamental limits, we now know that a basic and fixed MAN placement can optimally handle any generalized combinatorial network irrespective of having unknown numbers of users connected to different numbers of caches.

Subsequently, we consider a more general scenario that involves various ensembles of connectivities, including the ensemble of all possible connectivities as well as the smaller ensemble of those connectivities that simply abide by the constraint that each user is connected to the same number of $\lambda$ caches, without any additional structural constraint on the connectivity or on the number of users that each cache has to treat. For these settings, we develop information-theoretic converse bounds on the optimal average worst-case load, where the average is taken over the ensemble of interest, and where the optimal is over an optimized fixed placement. In particular, this converse on the average performance assumes optimal delivery for each connectivity and assumes an optimized uncoded placement that is fixed across all connectivities. This reflects a scenario where cache placement cannot be updated every time the topology changes\footnote{This setting may correspond to dense cellular networks where the connectivity may vary due to user mobility. This scenario may also correspond to a wired setting with a central server and multiple workers, where each worker is associated to any set of exactly $\lambda$ caches, but where this set is not known a priori due to a variety of reasons that may include a fluctuating network load or random node failures during data delivery.}.

\subsection{Paper Outline}

The rest of the paper is organized as follows. \Cref{sec: System Model} presents the system model together with some clarifying examples on the setting. Then, \Cref{sec: Main Results} presents the main results. Subsequently, \Cref{sec: Achievability Proof of MACC with Generalized Combinatorial Topology} presents the coding scheme for the generalized combinatorial topology, whereas~\Cref{sec: Converse Proof of MACC with Combinatorial Topology} presents the new matching converse bound. After this, \Cref{sec: Proof of Converse on Optimal Average Worst-Case Load} and~\Cref{sec: Proof of Converse on General Optimal Average Worst-Case Load} provide the proof of the converse on the optimal average worst-case load under uniformly random connectivity for the different considered ensembles, and finally \Cref{sec: Conclusions} concludes the paper. The appendices hold all additional proofs.

\subsection{Notation}

We denote by $\mathbb{Z}^{+}$ the set of positive integers and by $\mathbb{Z}^{+}_{0}$ the set of non-negative integers. For $n \in \mathbb{Z}^{+}$, we define $[n] \coloneqq \{1, \dots, n\}$. If $a, b \in \mathbb{Z}^{+}$ such that $a < b$, then $[a : b] \coloneqq \{a, a + 1, \dots, b - 1, b\}$. For $\alpha, \beta \in \mathbb{Z}^{+}$, we use $\alpha \mid \beta$ to denote that the integer $\alpha$ divides integer $\beta$. For sets we use calligraphic symbols, whereas for vectors we use bold symbols. Given a finite set $\mathcal{A}$, we denote by $|\mathcal{A}|$ its cardinality. Given $N$ sets $\{ \mathcal{S}_n : n \in [N] \}$, we denote by $\mathcal{S}^N = S_1 \times \dots \times S_N = \prod_{n \in [N]} \mathcal{S}_n = \{(s_1, \dots, s_N) : s_n \in \mathcal{S}_n, n \in [N]\}$ its $N$-ary Cartesian product. We use $\binom{n}{k}$ to denote the binomial coefficient $\frac{n!}{k!(n - k)!}$ and we let $\binom{n}{k} = 0$ whenever $n < 0$, $k < 0$ or $n < k$. We use the $\oplus$ symbol to denote the bitwise XOR operation.

\section{System Model}\label{sec: System Model}

We consider the centralized coded caching scenario where one single server has access to a library $\mathcal{L} = \{W_{n} : n \in [N]\}$ containing $N$ files of $B$ bits each. The server is connected to $K$ users through an error-free broadcast link. In the system there are $\Lambda$ caches, each of size $MB$ bits. Each user is associated (i.e., has full access) to a subset of these caches. We assume that the link between the server and the users is the main bottleneck, whereas we assume that the channel between each user and its assigned caches has infinite capacity. As is common, we assume that $N \geq K$. 

\subsection{Description of Connectivity}

Each of the $K$ users is connected to a subset of the $\Lambda$ caches. The way these connections are set up defines the network topology or connectivity. In the general case, different users are connected to a different number $\lambda \in [0 : \Lambda]$ of caches, where this value $\lambda$, depending on the user, can range from $\lambda = 0$ (corresponding to users that are not assisted by any cache) up to $\lambda = \Lambda$ (corresponding to the users that happen to be connected to all caches). What we will refer to as \emph{connectivity} will be here defined by the number of users that each $\lambda$-tuple $\mathcal{U}$ of caches is \emph{exactly} and \emph{uniquely} connected to, where again some of these sets $\mathcal{U}$ of caches can have size $\lambda = 1$ (sets consisting of a single cache), $\lambda = 2$ (where each set is a pair of caches), and so on. For instance, for the model in~\Cref{fig: MACC Model Example} the sets of $2$ caches $\{1, 2\}$, $\{2, 3\}$, $\{3, 4\}$ and $\{1, 4\}$ serve one user each. For the same model, we consider the set of $2$ caches $\{1, 3\}$, $\{2, 4\}$ to be serving no user, since there is no user connected exactly and uniquely to the caches in the set $\{1, 3\}$ (i.e., there is no user connected to only cache $1$ \emph{and} to cache $3$) as well as there is no user associated exactly and uniquely to the caches in the set $\{2, 4\}$ (i.e., there is no user connected to only cache $2$ \emph{and} to cache $4$). Similarly, we consider also the sets $\{1\}$, $\{2\}$, $\{3\}$ and $\{4\}$ to be serving no user, since there is no user connected exactly and uniquely to one cache only. On the other hand, for the model in~\Cref{fig: Connectivity Example} we can see that there are two users connected exactly and uniquely to cache $\{1\}$, there is one user associated exactly and uniquely to caches $\{1, 2, 3\}$ and there is one user connected exactly and uniquely to caches $\{2, 3, 4\}$. Hence, for such example the connectivity is defined by the number of users connected to the set of caches $\{1\}$, $\{1, 2, 3\}$ and $\{2, 3, 4\}$.

\begin{figure}[!htbp]
    \centering
    \tikzset{cnode/.style={draw, circle, minimum size = 1cm}}
    \tikzset{snode/.style={draw, rectangle, text centered}}
    \begin{tikzpicture}
        \node[snode, minimum height = 1cm, minimum width = 2cm](server){Server};
        \draw (server.south)--++(0, -1);
        \foreach \u/\n in {1/-3, 2/-1, 3/1, 4/3}{
            \draw ($(server.south) + (0, -1)$)--++(\n, -1) node[cnode, below](user\u){$\u$};
        }
        \node at (user1.west)[left, xshift = -0.25cm]{Users};
        \node at ($(server.south) + (3, -1)$)[right, xshift = 0.25cm]{Shared link};
        \foreach \c in {1, 2, 3, 4}{
            \node at ($(user\c.south) + (0, -1)$)[snode, below, minimum height = 1cm, minimum width = 1.5cm](cache\c){$\c$};
        }
        \draw (user1.south)--(cache1.north);
        \draw (user2.south)--(cache1.north);
        \draw (user3.south)--(cache1.north); \draw (user3.south)--(cache2.north); \draw (user3.south)--(cache3.north);
        \draw (user4.south)--(cache2.north); \draw (user4.south)--(cache3.north); \draw (user4.south)--(cache4.north);
        \node at (cache1.west)[left, xshift = -0.25cm]{Caches};
    \end{tikzpicture}
    \caption{Example of connectivity for the MACC model with $\Lambda = 4$.}
    \label{fig: Connectivity Example}
\end{figure}
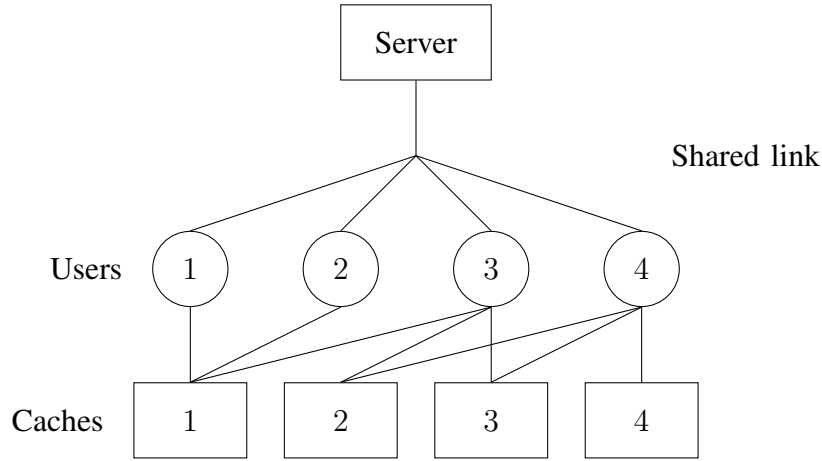

Let $\mathcal{B}$ be the set of all possible connectivites corresponding to the most general scenario, where each user can be arbitrarily connected to any set of caches without any constraint or structure. Each connectivity $b \in \mathcal{B}$ will be defined by the vector $\bm{K}_b = (K_{\mathcal{U}, b} : \mathcal{U} \subseteq [\Lambda])$, where we denote by $K_{\mathcal{U}, b} \in \mathbb{Z}^{+}_{0}$ the number of users associated exactly and uniquely to the caches in the set $\mathcal{U}$. Naturally, it holds that
\begin{equation}
    K = \sum_{\lambda \in [0 : \Lambda]} \sum_{\mathcal{U} \subseteq [\Lambda] : |\mathcal{U}| = \lambda} K_{\mathcal{U}, b}.
\end{equation}
For a given connectivity $b \in \mathcal{B}$, and for any set $\mathcal{U} \subseteq [\Lambda]$, we denote by $\mathcal{U}_{k}$ the $k$-th user connected\footnote{Note that this reflects the fact that connectivity is defined irrespective of any permutation of users. For example, consider $\Lambda = 2$ caches and $K = 3$ users. Then, if we have $K_{\{1\}, b} = 2$ and $K_{\{2\}, b} = 1$ for some connectivity $b \in \mathcal{B}$, it does not make any difference whether it is user~$1$ to be assigned to cache~$2$, and user~$2$ and user~$3$ to be assigned to cache~$1$; whether it is user~$2$ to be assigned to cache~$2$, and user~$1$ and user~$3$ to be assigned to cache~$1$; or whether it is user~$3$ to be assigned to cache~$2$, and user~$1$ and user~$2$ to be assigned to cache~$1$. All these scenarios correspond in fact to the same connectivity.} to the $|\mathcal{U}|$ caches in $\mathcal{U}$, where naturally $k \in [K_{\mathcal{U}, b}]$. If we let $\mathcal{K}_\lambda$ be the set of users which are each connected to exactly and uniquely $\lambda$ caches, it holds that
\begin{align}
    \mathcal{K} & = \bigcup_{\lambda \in [0 : \Lambda]} \{ \mathcal{K}_\lambda \} \\
    \mathcal{K}_\lambda & = \bigcup_{\mathcal{U} \subseteq [\Lambda] : |\mathcal{U}| = \lambda} \bigcup_{k \in [K_{\mathcal{U}, b}]}\{ \mathcal{U}_{k} \}
\end{align}
where $\mathcal{K}$ denotes the entire set of $K$ users in the system. This all holds for the most general setting corresponding to the ensemble $\mathcal{B}$, where this ensemble represents the set of all connectivities irrespective of how different these connectivities are from one another. We will revisit this general scenario later, when we will calculate the converse bound on the optimal average performance over all connectivities in $\mathcal{B}$.

A more restricted class of connectivities can be found in the ensemble $\mathcal{B}_\lambda$, which consists of all those connectivities for which each user is connected to exactly $\lambda$ caches for some fixed number $\lambda \in [0 : \Lambda]$. Any connectivity that satisfies this constraint belongs to $\mathcal{B}_\lambda$. For example, a connectivity $b$ belongs to $\mathcal{B}_2$ if and only if this connectivity guarantees that every user is connected to exactly $\lambda = 2$ caches. The well-known cyclic wrap-around connectivity depicted in Figure~\ref{fig: MACC Model Example} is one of the connectivities in this class $\mathcal{B}_2$, since it guarantees that each user is connected to exactly $\lambda = 2$ caches. There exist many additional connectivities that belong to $\mathcal{B}_2$. We will revisit this class $\mathcal{B}_\lambda$ when we will calculate the average optimal performance, averaged across all its connectivities. 

A broader class of connectivities is simply the $(\Lambda + 1)$-ary Cartesian product
\begin{equation}
    \mathcal{B}^{\Lambda + 1} = \prod_{\lambda = 0}^{\Lambda} \mathcal{B}_\lambda = \{(b_0, \dots, b_\Lambda) : b_\lambda \in \mathcal{B}_\lambda, \lambda \in [0 : \Lambda]\}
\end{equation}
which consists of all connectivities that guarantee that some $K'_\lambda$ users --- from the $K = \sum_{\lambda \in [0 : \Lambda]} K'_\lambda$ users in total --- are each connected to exactly $\lambda$ caches for each $\lambda \in [0 : \Lambda]$ and for a fixed set of integers $K'_0, K'_1, \dots, K'_\Lambda$. For example, a connectivity $b$ belongs to $\mathcal{B}_2 \times \mathcal{B}_3$ if and only if this connectivity guarantees that $K'_2$ users are connected to $2$ caches, and the rest $K'_3 = K - K'_2$ users are connected to $3$ caches, for any fixed pair $K'_2, K'_3$ such that $K'_2 + K'_3 = K$. For us this class is important because it captures the generalization of the aforementioned combinatorial\footnote{As already specified in~\Cref{sec: MACC Introduction}, the topology introduced in~\cite{Muralidhar2021Maddah-Ali-NiesenCaching} follows the well-known combinatorial nature of combination networks. To avoid any confusion with the well-defined term \emph{combination-network topology} which is prevalent in the literature of network coding, we will use the simplified term \emph{combinatorial topology} to refer to the topology in~\cite{Muralidhar2021Maddah-Ali-NiesenCaching}.} topology, for which we will calculate the optimal performance under the assumption of uncoded placement.

To avoid heavy notation, the dependence on the connectivity $b$ will always be suppressed and left implied. For example, while $\bm{K}_b$ is a function of the specific connectivity, this dependence will be implied when we henceforth use the notation $\bm{K}$ instead of $\bm{K}_b$.  An exception to this rule will be allowed when considering the number of users $K_{\mathcal{U}, b}$ associated to the caches in $\mathcal{U}$ for some $\mathcal{U} \subseteq [\Lambda]$. 

In terms of file-requests, we use the notation $W_{d_{\mathcal{U}_{k}}}$ to denote the file requested by the user identified by $\mathcal{U}_{k}$, which we remind the reader is simply the $k$-th user connected exactly and uniquely to the caches in the set $\mathcal{U}$. For the sake of simplicity, we denote by $\bm{d} = (\bm{d}_0, \dots, \bm{d}_\Lambda)$ the demand vector containing the indices of the files requested by the users in the system, where
\begin{equation}
    \bm{d}_{\lambda} \coloneqq \left(\bm{d}_{\mathcal{U}, [K_{\mathcal{U}, b}]} : \mathcal{U} \subseteq [\Lambda], |\mathcal{U}| = \lambda \right)
\end{equation}
represents the vector containing the indices of the files requested by all the users connected to exactly $\lambda$ caches and where $\bm{d}_{\mathcal{U}, [K_{\mathcal{U}, b}]} \coloneqq (d_{\mathcal{U}_{1}}, \dots, d_{\mathcal{U}_{K_{\mathcal{U}, b}}})$ represents the vector containing the indices of the files requested by the $K_{\mathcal{U}, b}$ users connected to the caches in the set $\mathcal{U}$ for a given connectivity $b \in \mathcal{B}$. To account for the possibility that this notation is hard to follow, we offer clarifying examples later on.

\subsection{Generalized Combinatorial Topology}\label{sec: Generalized Combinatorial Topology}

Directly from the aforementioned Cartesian product class, there is a particular connectivity (topology) $ b\in\mathcal{B}^{\Lambda + 1}$ that is of special interest to us. This topology, which we refer to as the \emph{generalized combinatorial topology}, guarantees that any one set of $\lambda$ caches is uniquely assigned to $K_\lambda$ users, and this holds for every $\lambda \in [0 : \Lambda]$. Given the nature of the connectivity, we have that\footnote{As one would expect, we assume for such combinatorial connectivity that $\binom{\Lambda}{\lambda} \mid K'_\lambda$ for each $\lambda \in [0 : \Lambda]$.} $K_{\mathcal{U}, b} = K_\lambda = K'_\lambda/\binom{\Lambda}{\lambda}$ for each $\mathcal{U} \subseteq [\Lambda]$ with $|\mathcal{U}| = \lambda$ and $\lambda \in [0 : \Lambda]$. We also have that
\begin{equation}
    K = \sum_{\lambda = 0}^{\Lambda}K_\lambda\binom{\Lambda}{\lambda}.
\end{equation}
To clarify, the term $K'_\lambda$ again describes the total number of users each of which is associated to exactly $\lambda$ caches, while the term $K_\lambda$ is the normalization of $K'_\lambda$ and it describes the total number of users uniquely served by any one set of $\lambda$ caches. For this generalized combinatorial topology, we collect all the $K_\lambda$ terms in the vector $\bm{K}_{\text{comb}} = (K_0, \dots, K_\Lambda)$.

\begin{example}[$\Lambda = 4, \bm{K}_{\text{comb}} = (0, 0, 1, 0, 0)$]
    Consider the MACC problem with the original combinatorial topology in~\cite{Muralidhar2021Maddah-Ali-NiesenCaching} and $\Lambda = 4$ caches. Since $\bm{K}_{\text{comb}} = (0, 0, 1, 0, 0)$, each set of $\lambda = 2$ caches is uniquely assigned to $K_2 = 1$ user, hence there are $K = \sum_{\lambda = 0}^{\Lambda}K_\lambda\binom{\Lambda}{\lambda}  = \binom{4}{2} = 6$ users in total. Recalling that each user is identified by the set of the $\lambda = 2$ caches it is connected to as well as by its index $k \in [K_2]$, we write the set of users $\mathcal{K}$ as
    \begin{align}
        \mathcal{K} & = \bigcup_{\lambda \in [0 : \Lambda]} \bigcup_{\mathcal{U} \subseteq [\Lambda] : |\mathcal{U}| = \lambda} \bigcup_{k \in [K_\lambda]}\{\mathcal{U}_{k}\} \\
                    & = \bigcup_{\mathcal{U} \subseteq [\Lambda] : |\mathcal{U}| = 2} \bigcup_{k \in [K_2]}\{\mathcal{U}_{k}\} \\
                    & = \left\{ \{1, 2\}_1, \{1, 3\}_1, \{1, 4\}_1, \{2, 3\}_1, \{2, 4\}_1, \{3, 4\}_1 \right\}.
    \end{align}
    Notice that for ease of notation we will often omit braces and commas when indicating sets, hence we can also write $\mathcal{K} = \left\{ {12}_1, {13}_1, {14}_1, {23}_1, {24}_1, {34}_1 \right\}$. These are the $6$ users in the system. The demand vector is given by $\bm{d} = (\bm{d}_0, \bm{d}_1, \bm{d}_2, \bm{d}_3, \bm{d}_4) = (0, \dots, 0, \bm{d}_2, 0, \dots, 0)$, where
    \begin{align}
        \bm{d}_2 & = \left(\bm{d}_{\mathcal{U}, [K_2]} : \mathcal{U} \subseteq [\Lambda], |\mathcal{U}| = 2 \right) \\
                 & = \left(d_{{12}_1}, d_{{13}_1}, d_{{14}_1}, d_{{23}_1}, d_{{24}_1}, d_{{34}_1}\right)
    \end{align}
    recalling that $\bm{d}_{\mathcal{U}, [K_2]} = (d_{\mathcal{U}_1}, \dots, d_{\mathcal{U}_{K_2}})$. The setting described in this example corresponds to the model introduced in~\cite{Muralidhar2021Maddah-Ali-NiesenCaching}, as it will be pointed out also later.
\end{example}

\begin{example}[$\Lambda = 4, \bm{K}_{\text{comb}} = (0, 0, 2, 2, 0)$]
    Let us consider now the following more involved MACC problem with a generalized combinatorial topology and again $\Lambda = 4$ caches. Since $\bm{K}_{\text{comb}} = (0, 0, 2, 2, 0)$, each set of $2$ caches is uniquely connected to $K_2 = 2$ users and each set of $3$ caches is uniquely connected to $K_3 = 2$ users, which tells us that there are $K = \sum_{\lambda = 0}^{\Lambda} K_\lambda \binom{\Lambda}{\lambda} = 2\binom{4}{2} + 2\binom{4}{3} = 20$ users in total. The set of users $\mathcal{K}$ is given by
    \begin{align}
        \mathcal{K} & = \bigcup_{\lambda \in [0 : \Lambda]} \bigcup_{\mathcal{U} \subseteq [\Lambda] : |\mathcal{U}| = \lambda} \bigcup_{k \in [K_\lambda]}\{\mathcal{U}_{k}\} \\
                    & = \bigcup_{\lambda \in [2 : 3]} \bigcup_{\mathcal{U} \subseteq [\Lambda] : |\mathcal{U}| = \lambda} \bigcup_{k \in [K_\lambda]}\{\mathcal{U}_{k}\} \\
                    & = \{ {12}_1, {12}_2, {13}_1, {13}_2, {14}_1, {14}_2, {23}_1, {23}_2, {24}_1, {24}_2, {34}_1, {34}_2,  \\
                    & \phantom{{} = {} \{ } {123}_1, {123}_2, {124}_1, {124}_2, {134}_1, {134}_2, {234}_1, {234}_2 \}.
    \end{align}
    As a small reminder, users ${12}_1,{12}_2$ are the first and second users connected to the pair of caches $\{1, 2\}$, while user ${234}_2$ is the second of two users connected to the caches in the triplet $\{2, 3, 4\}$ and to no other cache. In this case, the demand vector is given by $\bm{d} = (\bm{d}_0, \bm{d}_1, \bm{d}_2, \bm{d}_3, \bm{d}_4) = (0, \dots, 0, \bm{d}_2, \bm{d}_3, 0, \dots, 0)$, where
    \begin{align}
        \bm{d}_2 & = \left(\bm{d}_{\mathcal{U}, [K_2]} : \mathcal{U} \subseteq [\Lambda], |\mathcal{U}| = 2 \right) \\
                 & = \left(d_{{12}_1}, d_{{12}_2}, d_{{13}_1}, d_{{13}_2}, d_{{14}_1}, d_{{14}_2}, d_{{23}_1}, d_{{23}_2}, d_{{24}_1}, d_{{24}_2}, d_{{34}_1}, d_{{34}_2}\right) \\
        \bm{d}_3 & = \left(\bm{d}_{\mathcal{U}, [K_3]} : \mathcal{U} \subseteq [\Lambda], |\mathcal{U}| = 3 \right) \\
                 & = \left(d_{{123}_1}, d_{{123}_2}, d_{{124}_1}, d_{{124}_2}, d_{{134}_1}, d_{{134}_2}, d_{{234}_1}, d_{{234}_2}\right)
    \end{align}
    recalling that $\bm{d}_{\mathcal{U}, [K_2]} = (d_{\mathcal{U}_1}, \dots, d_{\mathcal{U}_{K_2}})$ and $\bm{d}_{\mathcal{U}, [K_3]} = (d_{\mathcal{U}_1}, \dots, d_{\mathcal{U}_{K_3}})$.
\end{example}

\subsection{Worst-Case Load and Average Worst-Case Load}

As in the original coded caching scenario, the communication procedure is split into the \emph{placement phase} and the \emph{delivery phase}. During the placement phase --- which typically occurs well before the delivery phase --- the central server fills the caches without any knowledge of the future requests from the users. The delivery phase --- which typically happens when the network is saturated and interference-limited --- commences when the file-requests of all users are simultaneously revealed. During delivery, the server sends coded messages over the bottleneck shared link, so that each user can retrieve the missing information from the broadcast transmission. The users cancel the interference terms that appear in the broadcast transmission, and do so by means of the cached contents they have access to, eventually decoding their own messages.

The aforementioned placement phase may or may not be aware of the given topology $b \in \mathcal{B}$ that will be encountered during delivery. Both these scenarios of \emph{topology-aware} and \emph{topology-agnostic} cache placement will be addressed.

In the topology-aware scenario, given a unique topology $b \in \mathcal{B}$ that is known throughout placement and delivery, the worst-case communication load $R_{b}$ is defined as the total number of transmitted bits, normalized by the file size $B$, that can guarantee the correct delivery of any $K$-tuple of requested files in the worst-case scenario. The optimal communication load $R^\star_{b}$ is consequently defined as
\begin{equation}
    R^\star_{b}(M) \coloneqq \inf\{R_{b} : \text{ $(M, R_{b})$ is achievable}\}
\end{equation}
where the tuple $(M, R_{b})$ is said to be \emph{achievable} if there exists a caching-and-delivery procedure for which, for any possible demand, a load $R_{b}$ can be guaranteed. This metric captures the optimal performance, optimized over all topology-aware placement-and-delivery schemes. This is indeed the metric that is common in the majority of works on coded caching. For the particular case of the generalized combinatorial topology, this worst-case communication load will be denoted by $R^{\star}_{\text{comb}}$.

On the other hand, to capture the fact that topologies may vary in time often much faster than the rate with which caches can be updated, we will also consider the scenario where a fixed placement must be designed to handle an ensemble of possible topologies\footnote{This means that the connectivity is not known a priori and the cache placement cannot be modified whenever a new connectivity is presented. }. In this topology-agnostic scenario, we will employ an average metric that captures the average performance of coded caching over the ensemble of topologies. In particular, we will consider the average worst-case communication load $R_{\text{avg}} = \mathbb{E}_b[R_b]$, which is defined as the expected number of transmitted bits (averaged over a specified ensemble of connectivities, and normalized by the file size $B$) that can guarantee the correct delivery of all requested files irrespective of the request. 
When the averaging is done over the entire connectivity ensemble $\mathcal{B}$ of equiprobable connectivities, the corresponding optimal average worst-case load $R_{\text{avg}, \mathcal{B}}^\star$ is defined and denoted as
\begin{equation}
    R_{\text{avg}, \mathcal{B}}^\star(M) \coloneqq \inf\{R_{\text{avg}, \mathcal{B}} : \text{ $(M, R_{\text{avg}, \mathcal{B}})$ is achievable}\}
\end{equation}
where the tuple $(M, R_{\text{avg}, \mathcal{B}})$ is said to be \emph{achievable} if there exists a joint placement-and-delivery method with an optimized fixed placement phase\footnote{Here, the optimal performance is over the class of all schemes that employ a cache placement which can be chosen and optimized, but which must remain fixed for all connectivities in the ensemble. The scheme is free to employ delivery methods that are fully aware of the current topology and can adapt to it. For every choice of fixed placement, and then for every connectivity, there is a minimum amount of bits to be sent. We are interested in minimizing the average of these amounts of bits to be transmitted, averaged over all the connectivities in the ensemble of focus.} for which an average load $R_{\text{avg}, \mathcal{B}}$ can be guaranteed, where the averaging is over the connectivity ensemble of focus. Similarly, when the averaging is done over the smaller symmetric ensemble $\mathcal{B}_\lambda$, the optimal average performance will be denoted by $R^{\star}_{\text{avg}, \mathcal{B}_\lambda}$.

\section{Main Results}\label{sec: Main Results}

We present in this section the main results. Firstly, we identify the fundamental limits of the system model described in~\Cref{sec: Generalized Combinatorial Topology} corresponding to the unique generalized combinatorial topology. This will identify the optimal performance of the generalized combinatorial connectivity, optimized over all coded caching schemes under the assumption of uncoded prefetching. We will then proceed to study the performance over ensembles of connectivities. Taking into account the scenario where there are $K = K'_\lambda$ users and each of them is connected to a set of exactly $\lambda$ caches, where such set is picked uniformly at random among all possible sets of $\lambda$ caches, we develop a converse bound on the optimal average worst-case load assuming the connectivities in $\mathcal{B}_\lambda$ to be equiprobable for a fixed $\lambda \in [\Lambda]$. Finally, such bound is further extended to consider the most general ensemble $\mathcal{B}$ of all possible connectivities.

\subsection{Multi-Access Coded Caching with Generalized Combinatorial Topology}

Our first result is obtained by extending the achievable scheme proposed in~\cite{Muralidhar2021Maddah-Ali-NiesenCaching} and by developing a matching converse bound based on the well-known acyclic subgraph index coding method. The result is formally stated in the following theorem.

\begin{theorem}\label{thm: MACC with Generalized Combinatorial Topology}
    Consider the multi-access coded caching problem with $\Lambda$ caches and the generalized combinatorial topology described in~\Cref{sec: Generalized Combinatorial Topology}. Under the assumption of uncoded cache placement, the optimal worst-case communication load $R^\star_{\text{comb}}$ is a piece-wise linear curve with corner points
    \begin{equation}
        (M, R^\star_{\text{comb}}) = \left(t \frac{N}{\Lambda}, \sum_{\lambda = 0}^{\Lambda - t}K_\lambda \frac{\binom{\Lambda}{t + \lambda}}{\binom{\Lambda}{t}} \right), \quad \forall t \in [0 : \Lambda].
    \end{equation}
\end{theorem}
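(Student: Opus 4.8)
The plan is to establish the theorem in two halves: an achievability (upper bound) part that extends the MAN-type scheme of~\cite{Muralidhar2021Maddah-Ali-NiesenCaching} to the generalized combinatorial topology, and a converse (lower bound) part via acyclic-subgraph index coding. For the achievability, I would use the standard MAN uncoded placement with parameter $t = \Lambda\gamma = \Lambda M/N \in [0:\Lambda]$: split each file $W_n$ into $\binom{\Lambda}{t}$ subfiles $W_{n,\mathcal{T}}$ indexed by $t$-subsets $\mathcal{T}\subseteq[\Lambda]$, and store $W_{n,\mathcal{T}}$ in cache $i$ for all $i\in\mathcal{T}$. A user connected to a set $\mathcal{U}$ of $\lambda$ caches then has access to every subfile whose label $\mathcal{T}$ intersects $\mathcal{U}$ in... actually has access to $W_{n,\mathcal{T}}$ whenever $\mathcal{T}\subseteq\mathcal{U}$ is false in general — the key point is it has cached $W_{n,\mathcal{T}}$ iff $\mathcal{T}\cap\mathcal{U}\neq\emptyset$ is not quite it either; precisely, user $\mathcal{U}_k$ has $W_{n,\mathcal{T}}$ iff $\mathcal{T}$ contains at least one element of $\mathcal{U}$, i.e. $\mathcal{T}\cap\mathcal{U}\neq\emptyset$. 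So the missing subfiles of user $\mathcal{U}_k$ are exactly those $W_{d_{\mathcal{U}_k},\mathcal{T}}$ with $\mathcal{T}\cap\mathcal{U}=\emptyset$, i.e. $\mathcal{T}\subseteq[\Lambda]\setminus\mathcal{U}$. For delivery I would form, for each $(t+1)$-subset... more generally for each $(t+\lambda)$-subset $\mathcal{S}\subseteq[\Lambda]$ of caches and each choice of how to partition it, XORs of the form $\bigoplus W_{d_{\mathcal{U}_k},\mathcal{S}\setminus\mathcal{U}}$ over the $\lambda$-subsets $\mathcal{U}\subseteq\mathcal{S}$, sending one such XOR per user-index slot; each user $\mathcal{U}_k$ with $\mathcal{U}\subseteq\mathcal{S}$ can decode its term since every other term in the XOR is a subfile $W_{\cdot,\mathcal{S}\setminus\mathcal{U}'}$ with $\mathcal{U}'\neq\mathcal{U}$, $\mathcal{U}'\subseteq\mathcal{S}$, so $\mathcal{S}\setminus\mathcal{U}'\supseteq\mathcal{S}\setminus(\mathcal{S}\setminus\mathcal{U})$ meets $\mathcal{U}$... the interference cancellation works because $(\mathcal{S}\setminus\mathcal{U}')\cap\mathcal{U}\neq\emptyset$ whenever $\mathcal{U}'\neq\mathcal{U}$ are both $\lambda$-subsets of the $(t+\lambda)$-set $\mathcal{S}$. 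Counting transmissions: for each $\lambda$ we use $\binom{\Lambda}{t+\lambda}$ sets $\mathcal{S}$ of size $t+\lambda$, and for each we send $K_\lambda$ coded symbols (one per user-index, combining the $\binom{t+\lambda}{\lambda}$ users — wait, need $\binom{t+\lambda}{\lambda}$ users per XOR but only $K_\lambda$ per slot), each of size $B/\binom{\Lambda}{t}$, giving total load $\sum_{\lambda=0}^{\Lambda-t} K_\lambda \binom{\Lambda}{t+\lambda}/\binom{\Lambda}{t}$, matching the claimed corner points. The convexity/memory-sharing argument then fills in the piece-wise linear curve between integer $t$.

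For the converse, the approach is the now-standard reduction to an index coding problem: fix the MAN-type subpacketization implied by uncoded placement (invoking a Maddah-Ali--Niesen / Yu--Maddah-Ali--Avestimehr--style argument that an optimal uncoded-placement scheme can be taken to place whole subfiles), then for each demand vector $\bm d$ build the side-information graph whose vertices are the missing subfiles $W_{d_{\mathcal{U}_k},\mathcal{T}}$ and whose directed edges encode which user has which subfile cached. A lower bound on any linear (indeed any) index code is the size of the largest acyclic induced subgraph. I would construct an explicit large acyclic subgraph by ordering vertices cleverly: order first by $\lambda$ (the size of the connecting cache-set) and within each $\lambda$-stratum, order the $\lambda$-subsets $\mathcal{U}$ and the subfile-labels $\mathcal{T}$ so that no directed cycle survives — the canonical choice is to pick, for each $\mathcal{U}$, the subfiles $W_{d_{\mathcal{U}_k},\mathcal{T}}$ with $\mathcal{T}\subseteq[\Lambda]\setminus\mathcal{U}$ and order the resulting list so that whenever user $\mathcal{U}_k$'s vertex can "see" (has cached) another vertex $W_{d_{\mathcal{U}'_{k'}},\mathcal{T}'}$, that vertex comes later in the order; summing $|\mathcal{T}|$-counts over this acyclic set should reproduce exactly $\sum_{\lambda=0}^{\Lambda-t} K_\lambda \binom{\Lambda}{t+\lambda}$ subfiles, i.e. load $\sum_\lambda K_\lambda\binom{\Lambda}{t+\lambda}/\binom{\Lambda}{t}$. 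Taking the worst case over demands (choosing all distinct demanded files so that no subfile coincidences help the server) and then optimizing/averaging over the choices of $t$-strata gives the matching lower bound on $R^\star_{\text{comb}}$ for every $M = tN/\Lambda$; a standard convexity (cut-set-style interpolation) argument extends the bound to all $M$ and shows the corner points are exactly optimal.

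The main obstacle I anticipate is the converse's acyclic-subgraph construction: in the classical MAN setting the largest acyclic subgraph corresponds to ordering users and then taking, for user $i$, subfiles not cached by any of $i,i+1,\dots$, and the count is clean; here the presence of users attached to cache-sets of many different sizes $\lambda$ simultaneously means the side-information structure is heterogeneous — a subfile $W_{\cdot,\mathcal{T}}$ is cached by user $\mathcal{U}_k$ iff $\mathcal{T}\cap\mathcal{U}\neq\emptyset$, so the "visibility" relation between a $\lambda$-user and a $\lambda'$-user is genuinely asymmetric and size-dependent. Getting an ordering that is provably acyclic \emph{and} tight (i.e. whose vertex-count matches the achievable load exactly, not just up to a constant) is the delicate combinatorial heart of the argument; I would expect to need a careful nested ordering — outer loop over a chosen permutation of the $\Lambda$ caches, middle loop over $\lambda$, inner loop over the lexicographic order of $\mathcal{U}$ and $\mathcal{T}$ relative to that permutation — together with an averaging over all $\Lambda!$ cache-permutations to symmetrize and extract the binomial-ratio expression. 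The achievability side, by contrast, I expect to be largely a bookkeeping extension of~\cite{Muralidhar2021Maddah-Ali-NiesenCaching} once the right indexing of transmissions by $(t+\lambda)$-subsets is set up.
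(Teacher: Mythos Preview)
Your proposal is essentially correct and follows the same route as the paper: MAN placement plus the $(t+\lambda)$-subset XOR delivery of~\cite{Muralidhar2021Maddah-Ali-NiesenCaching} applied separately for each $\lambda$ on the achievability side, and on the converse side an index-coding acyclic-subgraph bound built from a permutation $\bm{c}$ of the $\Lambda$ caches, followed by averaging over all demands and all $\Lambda!$ permutations and then a convex optimization over the placement profile. Two small sharpenings worth flagging: (i) for the converse you should not ``fix the MAN-type subpacketization'' but rather take the generic uncoded split into $2^{\Lambda}$ subfiles $W_{n,\mathcal{T}}$, introduce the variables $x_t=\sum_{n}\sum_{|\mathcal{T}|=t}|W_{n,\mathcal{T}}|/(NB)$, and minimize $\sum_t f(t)x_t$ subject to $\sum_t x_t=1$ and $\sum_t t\,x_t\le \Lambda M/N$ --- the claimed corner-point expression emerges only after this optimization (via convexity of $f$ and Jensen), not directly from a single acyclic set; (ii) the precise acyclic ordering the paper uses is a \emph{level} structure indexed by the position $i$ in $\bm{c}$: at level $i$ you place all users $\mathcal{U}^i_k$ with $\mathcal{U}^i\subseteq\{c_1,\dots,c_i\}$ and $c_i\in\mathcal{U}^i$ (for every $\lambda$ simultaneously), together with their subfiles labeled by $\mathcal{T}_i\subseteq[\Lambda]\setminus\{c_1,\dots,c_i\}$ --- this is exactly the ``nested ordering over a cache permutation'' you anticipated, and it is what makes the heterogeneous-$\lambda$ acyclicity go through cleanly.
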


\begin{proof}
    The coded caching scheme is described in~\Cref{sec: Achievability Proof of MACC with Generalized Combinatorial Topology}, whereas the information-theoretic converse is presented in~\Cref{sec: Converse Proof of MACC with Combinatorial Topology}.
\end{proof}

Directly from the converse in~\Cref{thm: MACC with Generalized Combinatorial Topology}, and from the application as in~\Cref{sec: Achievability Proof of MACC with Generalized Combinatorial Topology} of the scheme in~\cite{Muralidhar2021Maddah-Ali-NiesenCaching}, we now have the following corollary. To place the corollary in context, we note that typically (see for example~\cite{Reddy2020Rate-MemoryPlacement}) cache placements are specifically calibrated to reflect the cache-connectivity capability of each user.

\begin{corollary}\label{cor: Universality of MAN Placement}
    The basic $\Lambda$-cache MAN placement allows for the optimal performance for any instance of the generalized combinatorial topology. This means that, as long as the connectivity is from the generalized combinatorial topology, then the single MAN placement yields a uniformly optimal performance irrespective of the cache-connectivity capability of each user, i.e., irrespective of how many users are connected to how many caches. 
\end{corollary}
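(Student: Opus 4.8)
The plan is to observe that \Cref{cor: Universality of MAN Placement} follows almost directly from \Cref{thm: MACC with Generalized Combinatorial Topology}, once one keeps track of \emph{which} placement realizes its achievability side. Concretely, I would first fix an integer $t \in [0 : \Lambda]$, set $M = t N/\Lambda$, and recall the $\Lambda$-cache MAN placement of~\cite{Maddah-Ali2014FundamentalCaching}: split each file $W_n$ into $\binom{\Lambda}{t}$ equal subfiles $\{W_{n, \mathcal{T}} : \mathcal{T} \subseteq [\Lambda], |\mathcal{T}| = t\}$ and let cache $c \in [\Lambda]$ store $W_{n, \mathcal{T}}$ for every $\mathcal{T} \ni c$ and every $n \in [N]$. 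The crucial structural remark is that this placement is a function of $\Lambda$ and $t$ only --- it never references the vector $\bm{K}_{\text{comb}}$, nor which users are attached to which caches --- so it is literally the \emph{same} placement for every instance of the generalized combinatorial topology.

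Next I would invoke the achievability part of \Cref{thm: MACC with Generalized Combinatorial Topology}, namely the scheme detailed in \Cref{sec: Achievability Proof of MACC with Generalized Combinatorial Topology}, which is the generalization of the delivery of~\cite{Muralidhar2021Maddah-Ali-NiesenCaching} built \emph{on top of} exactly the above MAN placement. This scheme, with that fixed placement, guarantees the worst-case load $\sum_{\lambda = 0}^{\Lambda - t} K_\lambda \binom{\Lambda}{t + \lambda}/\binom{\Lambda}{t}$ for every profile $\bm{K}_{\text{comb}}$. On the other side, the converse of \Cref{sec: Converse Proof of MACC with Combinatorial Topology} lower bounds $R^\star_{\text{comb}}$ by the same quantity over the \emph{entire} class of schemes with uncoded placement --- in particular, over all placements calibrated to the connectivity. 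Since the achievable load of the connectivity-agnostic MAN placement already meets this universal lower bound, that placement is optimal for each instance, and, being identical across instances, is uniformly optimal; in particular, tailoring the cache contents to the cache-connectivity capabilities of the users buys nothing. For memory sizes strictly between two consecutive corner points $t N/\Lambda$ and $(t+1)N/\Lambda$, the same conclusion follows by the standard memory-sharing argument between the two neighbouring MAN placements, which is again oblivious to $\bm{K}_{\text{comb}}$, thereby reproducing the full piece-wise linear optimal curve of \Cref{thm: MACC with Generalized Combinatorial Topology} from a single (memory-shared) placement.

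I do not expect a substantive obstacle here: the corollary is, by design, a reading of \Cref{thm: MACC with Generalized Combinatorial Topology}, and the only point that genuinely requires care is that the delivery phase can indeed be executed on top of the \emph{unmodified} MAN placement for an arbitrary $\bm{K}_{\text{comb}}$, including coexisting users connected to different numbers of caches --- and this is exactly what the construction in \Cref{sec: Achievability Proof of MACC with Generalized Combinatorial Topology} establishes, so no new argument is needed. Everything else is bookkeeping: matching the achievable expression to the converse expression, and noting that ``optimal over all uncoded placements $\leq$ achievable with the MAN placement'' together with ``achievable with the MAN placement $\leq$ optimal over all uncoded placements'' (the latter being the converse) forces equality. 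The real content of the corollary is thus its interpretation rather than its derivation.
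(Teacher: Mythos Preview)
Your proposal is correct and mirrors the paper's own justification: the corollary is stated as following ``directly from the converse in \Cref{thm: MACC with Generalized Combinatorial Topology}, and from the application as in \Cref{sec: Achievability Proof of MACC with Generalized Combinatorial Topology} of the scheme in~\cite{Muralidhar2021Maddah-Ali-NiesenCaching}'', with no further argument given. Your write-up simply unpacks this one-line remark --- noting that the MAN placement used in the achievability does not depend on $\bm{K}_{\text{comb}}$ and yet meets the placement-optimized converse --- which is exactly the intended reading.
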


\subsection{Topology Ensembles Analysis and Converse Bound Development on the Optimal Average Worst-Case Load}

It is conceivable to have a scenario where each user is connected to the same number of caches $\lambda \in [\Lambda]$, but where the actual connectivity is not known in advance. Our second contribution is the development of a converse bound on the optimal average worst-case load for a fixed value\footnote{Clearly, the scenario $\lambda = 0$ is trivial. Indeed, if all users $K$ are connected to no cache, there is only one connectivity which is optimally served with load equal to $K$.} of $\lambda \in [\Lambda]$, corresponding to the ensemble $\mathcal{B}_\lambda$, where each connectivity in $\mathcal{B}_\lambda$ is assumed to be equiprobable. The result is stated in the following theorem.

\begin{theorem}\label{thm: Converse on Optimal Average Worst-Case Load}
    Consider the ensemble $\mathcal{B}_\lambda$ of multi-access coded caching problems with $\Lambda$ caches and $K = K'_\lambda$ users each connected to exactly $\lambda \in [\Lambda]$ caches. Under the assumption of fixed uncoded cache placement and equiprobable connectivities, the optimal average worst-case communication load $R^\star_{\text{avg}, \mathcal{B}_\lambda}$ is lower bounded by $R_{\text{avg}, \mathcal{B}_\lambda, \text{LB}}$ which is a piece-wise linear curve with corner points
    \begin{equation}
        (M, R_{\text{avg}, \mathcal{B}_\lambda, \text{LB}}) = \left(t \frac{N}{\Lambda}, \frac{K'_\lambda \binom{\Lambda}{t + \lambda}}{\binom{\Lambda}{\lambda}\binom{\Lambda}{t}} + A_t \right), \quad \forall t \in [0 : \Lambda - \lambda + 1]
    \end{equation}
    where we define
    \begin{equation}
        A_t \coloneqq \frac{K'_\lambda}{\left| \mathcal{B}_\lambda \right|}\binom{\Lambda - t}{\lambda}\left(1 - \frac{1}{\binom{t + \lambda}{\lambda}} \right).
    \end{equation}
\end{theorem}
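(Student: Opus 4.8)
The plan is to prove the bound connectivity by connectivity and then average over $\mathcal{B}_\lambda$, reusing the acyclic-subgraph index-coding argument that underlies the converse of \Cref{thm: MACC with Generalized Combinatorial Topology} but now keeping exact track of the extra load caused by ``unbalanced'' connectivities.

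I would start by fixing an arbitrary uncoded placement, described by the subfile lengths $x_{n,\mathcal{T}} = |W_{n,\mathcal{T}}|/B$, where $W_{n,\mathcal{T}}$ is the part of $W_n$ cached in exactly the caches of $\mathcal{T} \subseteq [\Lambda]$. Averaging the eventual bound over all relabelings of the $N$ files — and, since $\mathcal{B}_\lambda$ is invariant under cache permutations, over all relabelings of the $\Lambda$ caches — lets me replace the placement by the symmetric parameters $a_t = \tfrac{1}{N}\sum_{n}\sum_{|\mathcal{T}|=t} x_{n,\mathcal{T}}$, subject to $\sum_{t=0}^{\Lambda} a_t = 1$ and $\sum_{t=0}^{\Lambda} t\,a_t \le \Lambda M/N$ (the cumulative memory of the $\Lambda$ caches). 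The goal becomes to show $R^\star_{\text{avg}, \mathcal{B}_\lambda} \ge \sum_{t} a_t\,g(t)$ where $g(t)$ equals the right-hand side of the theorem at each integer $t \in [0:\Lambda-\lambda+1]$ and $g(t)=0$ for $t \ge \Lambda-\lambda+1$ (since then every $t$-subset of caches meets every $\lambda$-subset, so all subfiles are already cached at every user); a standard memory-sharing / linear-programming argument, using that the stated corner points are in convex position, then pins the lower bound to the claimed piecewise-linear curve.

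The core is the per-connectivity bound. For fixed $b \in \mathcal{B}_\lambda$ and worst-case (all-distinct) demands I would form the index-coding side-information graph — user $\mathcal{C}$ holds $W_{n,\mathcal{T}}$ iff $\mathcal{T}\cap\mathcal{C}\neq\emptyset$ — and apply the acyclic-subgraph bound along a \emph{structured} family of user orderings, e.g.\ the orderings induced by revealing the $\Lambda$ caches one at a time along a uniformly random permutation of $[\Lambda]$ and processing each user as soon as all of its $\lambda$ caches have been revealed. Averaging over this family and over demands yields $R_b \ge \sum_{t} a_t \sum_{|\mathcal{T}|=t} w_\mathcal{T}(b)$, where $w_\mathcal{T}(b)$ is a weighted count of the users whose $\lambda$-cache-set avoids $\mathcal{T}$, organized by how those users populate the $\binom{\Lambda-t}{\lambda}$ available $\lambda$-subsets of $[\Lambda]\setminus\mathcal{T}$. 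The crucial feature of this ordering is that, in the perfectly balanced (combinatorial) occupancy, a user $u$ with cache-set $\mathcal{C}_u$ disjoint from $\mathcal{T}$ gets to count its missing subfile $W_{d_u,\mathcal{T}}$ exactly when all $\lambda$ caches of $\mathcal{C}_u$ precede all $t$ caches of $\mathcal{T}$ in the revelation order — probability $1/\binom{t+\lambda}{\lambda}$ — so that each missing subfile is effectively ``claimed'' by one of the $\binom{t+\lambda}{\lambda}$ candidate cache-subsets of $\mathcal{T}\cup\mathcal{C}_u$; empty or multiply-occupied $\lambda$-subsets perturb this count, and the task is to carry those perturbations with the correct sign rather than relax them away.

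Finally I would average over the equiprobable connectivities of $\mathcal{B}_\lambda$, which by exchangeability of the $\lambda$-subsets reduces $\mathbb{E}_b[w_\mathcal{T}(b)]$ to a counting problem over the placements of $K'_\lambda$ indistinguishable users into $\binom{\Lambda}{\lambda}$ slots. The expected number of users avoiding $\mathcal{T}$ is $K'_\lambda\binom{\Lambda-t}{\lambda}/\binom{\Lambda}{\lambda}$; weighted by the $1/\binom{t+\lambda}{\lambda}$ above and summed over the $\binom{\Lambda}{t}$ sets $\mathcal{T}$ at level $t$, it reproduces the first term $\tfrac{K'_\lambda\binom{\Lambda}{t+\lambda}}{\binom{\Lambda}{\lambda}\binom{\Lambda}{t}}$, while the contribution of the non-combinatorial occupancies, after an exact enumeration, is exactly $A_t$ — there the factor $\binom{\Lambda-t}{\lambda}$ is the number of $\lambda$-subsets that can avoid $\mathcal{T}$, $1/|\mathcal{B}_\lambda|$ is the ensemble weight of a single connectivity, and $1-1/\binom{t+\lambda}{\lambda}$ is the per-subset deficit of the realized coding gain relative to the fully balanced case (consistently, for $K'_\lambda = 1$ there is no coding at all, the two pieces combine to the clean $\binom{\Lambda-\lambda}{t}/\binom{\Lambda}{t}$, and the bound is trivially tight). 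I expect this enumeration to be the main obstacle: the acyclic bound must be arranged so the ``imbalance surplus'' is extracted \emph{exactly} — a crude relaxation, such as applying Jensen's inequality to the $1/(n_{\mathcal{T}}+1)$ factor (with $n_{\mathcal{T}}$ the number of users whose cache-set meets $\mathcal{T}$), would recover only the first term with a weaker denominator and lose $A_t$ entirely — and the enumeration of the non-combinatorial configurations and their exact contributions must be carried out without over- or under-counting. With $\mathbb{E}_b[R_b] \ge \sum_t a_t\,g(t)$ in hand and $g$ matching the stated corner values, the memory-sharing argument closes the proof.
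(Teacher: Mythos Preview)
Your overall framework---acyclic-subgraph index coding bound, averaging over cache-reveal permutations, then averaging over equiprobable connectivities, and finally the linear program in the level weights $a_t$---matches the paper's route. But there is a genuine gap at the step where you expect the extra term $A_t$ to emerge.

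With the uniform cache-permutation family you propose, the contribution of a user with cache-set $\mathcal{U}$ to the coefficient of any subfile $W_{n,\mathcal{T}}$ (with $\mathcal{T}\cap\mathcal{U}=\emptyset$, $|\mathcal{T}|=t$) is exactly the probability $1/\binom{t+\lambda}{\lambda}$ that $\mathcal{U}$ precedes $\mathcal{T}$, independently of how many other users share $\mathcal{U}$ or of any occupancy imbalance. Summing over all users and averaging over $\mathcal{B}_\lambda$ then gives, for each $t$,
\[
\frac{1}{|\mathcal{B}_\lambda|}\sum_{b}\sum_{\mathcal{U}\subseteq[\Lambda]\setminus\mathcal{T}}K_{\mathcal{U},b}\cdot\frac{1}{\binom{t+\lambda}{\lambda}}
=\frac{K'_\lambda}{\binom{\Lambda}{\lambda}}\cdot\binom{\Lambda-t}{\lambda}\cdot\frac{1}{\binom{t+\lambda}{\lambda}}
=\frac{K'_\lambda\binom{\Lambda}{t+\lambda}}{\binom{\Lambda}{\lambda}\binom{\Lambda}{t}},
\]
which is only the first term. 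There is no ``imbalance surplus'' to carry: the acyclic bound with uniform permutations is linear in the per-slot occupancies, so unbalanced connectivities contribute neither more nor less than balanced ones. Your intuition that multiply-occupied $\lambda$-subsets perturb the count is not realized by this family of orderings.

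The paper obtains $A_t$ by a different, very concrete device: it singles out the $\binom{\Lambda}{\lambda}$ degenerate connectivities $\mathcal{B}_{\lambda,\text{a}}$ in which all $K'_\lambda$ users sit at the same $\lambda$-set $\mathcal{U}$, and for each such $b$ it replaces the uniform permutation average by the \emph{single} permutation putting $\mathcal{U}$ in the first $\lambda$ positions (repeated $\Lambda!$ times to keep the normalization). For that permutation every subfile $W_{n,\mathcal{T}}$ with $\mathcal{T}\cap\mathcal{U}=\emptyset$ is counted with weight $1$ rather than $1/\binom{t+\lambda}{\lambda}$, so each of the $\binom{\Lambda-t}{\lambda}$ relevant degenerate connectivities contributes an excess of $K'_\lambda\bigl(1-1/\binom{t+\lambda}{\lambda}\bigr)$; dividing by $|\mathcal{B}_\lambda|$ yields exactly $A_t$. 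In other words, $A_t$ is not an artifact of generic imbalance but the surplus harvested from tailoring the acyclic family to the fully collapsed connectivities. Once you make this split explicit, the remaining enumeration over $\mathcal{B}_{\lambda,\text{b}}$ (via weak compositions and the hockey-stick identity) and the convexity/Jensen step go through as you outlined.
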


\begin{proof}
  The proof is reported in~\Cref{sec: Proof of Converse on Optimal Average Worst-Case Load}.
\end{proof}

In the following we offer an interesting comparison between the results in~\Cref{thm: MACC with Generalized Combinatorial Topology} and in~\Cref{thm: Converse on Optimal Average Worst-Case Load}, after setting $K = K'_\lambda = K_\lambda \binom{\Lambda}{\lambda}$ to be the same in both cases.

\begin{corollary}\label{cor: Combinatorial Topology and Converse Comparison}
    For a fixed unique $\lambda$ and a fixed $K = K'_\lambda$ such that $\binom{\Lambda}{\lambda} \mid K'_\lambda$, then
    \begin{equation}
        R_{\text{avg}, \mathcal{B}_\lambda, \text{LB}} > R^\star_{\text{comb}}, \quad \forall t \in [\Lambda - \lambda]
    \end{equation}
which simply says that, for non-trivial values of $t$, the optimal average worst-case communication load in~\Cref{thm: Converse on Optimal Average Worst-Case Load} is strictly larger than the optimal worst-case communication load of the combinatorial connectivity in~\Cref{thm: MACC with Generalized Combinatorial Topology}.
\end{corollary}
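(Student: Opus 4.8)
The plan is to reduce the claimed strict inequality to the strict positivity of the additive gap $A_t$, after noticing that the two curves being compared share an identical leading term. First I would specialize \Cref{thm: MACC with Generalized Combinatorial Topology} to the combinatorial topology whose vector $\bm{K}_{\text{comb}} = (K_0, \dots, K_\Lambda)$ has exactly one nonzero entry, namely $K_\lambda = K'_\lambda/\binom{\Lambda}{\lambda}$, with $K_\mu = 0$ for all $\mu \neq \lambda$. Substituting this choice into the corner-point formula of \Cref{thm: MACC with Generalized Combinatorial Topology}, every summand indexed by $\mu \neq \lambda$ drops out and one is left, at each $M = tN/\Lambda$, with
\begin{equation}
    R^\star_{\text{comb}} = K_\lambda \frac{\binom{\Lambda}{t + \lambda}}{\binom{\Lambda}{t}} = \frac{K'_\lambda \binom{\Lambda}{t + \lambda}}{\binom{\Lambda}{\lambda}\binom{\Lambda}{t}}, \qquad t \in [0 : \Lambda],
\end{equation}
the expression being zero as soon as $t > \Lambda - \lambda$.

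Next I would place this side by side with the corner-point formula for $R_{\text{avg}, \mathcal{B}_\lambda, \text{LB}}$ in \Cref{thm: Converse on Optimal Average Worst-Case Load}. The key observation is that the displayed expression for $R^\star_{\text{comb}}$ is \emph{verbatim} the first term of $R_{\text{avg}, \mathcal{B}_\lambda, \text{LB}}$. Since both are piece-wise linear curves with corner points located at the same memory values $M = tN/\Lambda$, and since $[\Lambda - \lambda] \subseteq [0 : \Lambda - \lambda + 1] \subseteq [0 : \Lambda]$ so that every $t$ of interest lies in the domain of both formulas, evaluating at a corner point $t \in [\Lambda - \lambda]$ yields
\begin{equation}
    R_{\text{avg}, \mathcal{B}_\lambda, \text{LB}} - R^\star_{\text{comb}} = A_t .
\end{equation}
Hence the corollary is equivalent to proving $A_t > 0$ for every $t \in [\Lambda - \lambda]$.

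Finally I would establish $A_t > 0$ by checking, one factor at a time, that each ingredient of
\begin{equation}
    A_t = \frac{K'_\lambda}{|\mathcal{B}_\lambda|}\binom{\Lambda - t}{\lambda}\left(1 - \frac{1}{\binom{t + \lambda}{\lambda}}\right)
\end{equation}
is strictly positive on the range $t \in [\Lambda - \lambda]$: the prefactor $K'_\lambda/|\mathcal{B}_\lambda|$ is positive in the non-trivial case $K'_\lambda \ge 1$, as $\mathcal{B}_\lambda$ is a nonempty finite set of connectivities; the binomial $\binom{\Lambda - t}{\lambda}$ is at least $1$ because $t \le \Lambda - \lambda$ forces $\Lambda - t \ge \lambda$; and $1 - 1/\binom{t + \lambda}{\lambda} > 0$ because $t \ge 1$ together with $\lambda \ge 1$ (recall $\lambda \in [\Lambda]$ in \Cref{thm: Converse on Optimal Average Worst-Case Load}) forces $\binom{t + \lambda}{\lambda} \ge \binom{2}{1} = 2 > 1$. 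The product of these quantities is therefore strictly positive, which proves the corollary.

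I do not expect any genuine obstacle here: the argument is a direct comparison of two closed forms followed by an elementary sign check. The only point deserving a little care is the bookkeeping of the index range, which also explains the two exclusions built into the statement: at $t = 0$ one has $\binom{t + \lambda}{\lambda} = \binom{\lambda}{\lambda} = 1$, so $A_0 = 0$ and the two loads coincide (both equal $K$, as they must at $M = 0$), whereas for $t > \Lambda - \lambda$ the combinatorial load already vanishes and the comparison is vacuous; the strict separation is thus a genuine feature precisely of the non-degenerate regime $t \in [\Lambda - \lambda]$.
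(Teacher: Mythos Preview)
Your proposal is correct and follows exactly the paper's approach: the paper's own proof consists of the single observation that $A_t > 0$ for each $t \in [\Lambda - \lambda]$, which is precisely what you reduce the statement to and then verify factor by factor. Your additional bookkeeping on the index range and the boundary cases $t = 0$ and $t > \Lambda - \lambda$ is accurate and simply makes explicit what the paper leaves implicit.
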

\begin{proof}
    The proof follows immediately after observing that $A_t > 0$ for each $t \in [\Lambda - \lambda]$.
\end{proof}

We now transition to the most general scenario where any possible connectivity is allowed, namely, where each user can be arbitrarily connected to any set of caches without any constraint or structure. This corresponds to the ensemble $\mathcal{B}$. Our third contribution is the development of a converse bound on the optimal average worst-case load assuming the connectivities in the set $\mathcal{B}$ to be equiprobable.

\begin{theorem}\label{thm: Converse on General Optimal Average Worst-Case Load}
    Consider the ensemble $\mathcal{B}$ of multi-access coded caching problems with $\Lambda$ caches and $K$ users, where each of them can arbitrarily connect to any set of caches without any constraints. Under the assumption of fixed uncoded cache placement and equiprobable connectivities, the optimal average worst-case communication load $R^\star_{\text{avg}, \mathcal{B}}$ is lower bounded by $R_{\text{avg}, \mathcal{B}, \text{LB}}$ which is a piece-wise linear curve with corner points
    \begin{equation}
        (M, R_{\text{avg}, \mathcal{B}, \text{LB}}) = \left(t \frac{N}{\Lambda}, \sum_{\lambda = 0}^{\Lambda}\frac{K \binom{\Lambda}{t + \lambda}}{2^{\Lambda}\binom{\Lambda}{t}} + A_{t, \lambda} \right), \quad \forall t \in [0 : \Lambda]
    \end{equation}
    where we define
    \begin{equation}
        A_{t, \lambda} \coloneqq \frac{K}{\left| \mathcal{B} \right|}\binom{\Lambda - t}{\lambda}\left(1 - \frac{1}{\binom{t + \lambda}{\lambda}} \right).
    \end{equation}
\end{theorem}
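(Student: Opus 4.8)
The plan is to mirror the proof of \Cref{thm: Converse on Optimal Average Worst-Case Load}, now additionally averaging over the \emph{random} number of users that connect to cache-sets of each size. First I would fix an arbitrary uncoded placement and split each file $W_n$ into subfiles $W_{n,\mathcal{T}}$, $\mathcal{T}\subseteq[\Lambda]$, indexed by the exact set of caches storing them. Since the ensemble $\mathcal{B}$ is invariant under permutations of the $\Lambda$ caches and under relabelings of the files, a standard symmetrization (average the placement over these permutations, using that $R^\star_b$ is convex in the subfile sizes and that $\mathbb{E}_b[R^\star_b(\cdot)]$ is permutation-invariant) shows it suffices to prove the bound for symmetric placements, i.e. $|W_{n,\mathcal{T}}|/B = x_{|\mathcal{T}|}/\binom{\Lambda}{|\mathcal{T}|}$ with $\sum_j x_j=1$ and $\sum_j j x_j=\Lambda M/N$; worst-case demands (which I take to be $K$ distinct files, using $N\ge K$) are dispatched by the usual demand-averaging argument.

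Next, for each connectivity $b\in\mathcal{B}$ I would produce an acyclic-subgraph index-coding lower bound that exposes the combinatorial gain. Draw a uniformly random permutation $\sigma$ of $[\Lambda]$; for a user attached to the cache-set $\mathcal{U}$, retain every subfile $W_{d,\mathcal{T}}$ with $\mathcal{T}\cap\mathcal{U}=\emptyset$ such that every cache of $\mathcal{T}$ precedes every cache of $\mathcal{U}$ in $\sigma$. Ordering the users by $\min_{c\in\mathcal{U}}\sigma^{-1}(c)$ (ties broken arbitrarily), every side-information edge between retained subfiles is seen to point forward, so the retained set induces an acyclic subgraph; the acyclic-subgraph bound, after taking $\mathbb{E}_\sigma$ and using that a fixed $j$-set disjoint from a fixed $\lambda$-set precedes it with probability $1/\binom{j+\lambda}{\lambda}$, gives $R^\star_b\ge\sum_\lambda K'_\lambda(b)\sum_j x_j\,\binom{\Lambda-\lambda}{j}/(\binom{\Lambda}{j}\binom{j+\lambda}{\lambda})=\sum_\lambda K'_\lambda(b)\sum_j x_j\,\binom{\Lambda}{j+\lambda}/(\binom{\Lambda}{\lambda}\binom{\Lambda}{j})$, where $K'_\lambda(b)$ is the number of users connected to exactly $\lambda$ caches and the last equality is the identity $\binom{\Lambda}{\lambda}\binom{\Lambda-\lambda}{j}=\binom{\Lambda}{j+\lambda}\binom{j+\lambda}{j}$. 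In addition, for each of the $2^\Lambda$ \emph{concentrated} connectivities --- all $K$ users attached to one fixed cache-set $\mathcal{U}_0$ --- a plain cut-set bound gives $R^\star_b\ge K\sum_j x_j\,\binom{\Lambda-|\mathcal{U}_0|}{j}/\binom{\Lambda}{j}$, larger than the generic estimate by exactly the factor $\binom{j+|\mathcal{U}_0|}{|\mathcal{U}_0|}$ that the generic argument loses to multicasting, the nonnegative surplus being $K\sum_j x_j\,\frac{\binom{\Lambda-|\mathcal{U}_0|}{j}}{\binom{\Lambda}{j}}\bigl(1-1/\binom{j+|\mathcal{U}_0|}{|\mathcal{U}_0|}\bigr)$.

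Then I would average over the equiprobable $b\in\mathcal{B}$: using the generic bound for every $b$ and adding the surplus only for the $2^\Lambda$ concentrated ones yields $R^\star_{\mathrm{avg},\mathcal{B}}\ge\sum_j x_j c_j$ with $c_j=\sum_\lambda\mathbb{E}_b[K'_\lambda(b)]\,\binom{\Lambda}{j+\lambda}/(\binom{\Lambda}{\lambda}\binom{\Lambda}{j})+\frac{K}{|\mathcal{B}|}\sum_\lambda\binom{\Lambda-j}{\lambda}\bigl(1-1/\binom{j+\lambda}{\lambda}\bigr)$, where I used $\binom{\Lambda}{\lambda}\binom{\Lambda-\lambda}{j}/\binom{\Lambda}{j}=\binom{\Lambda-j}{\lambda}$. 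The only ensemble-specific computation is $\mathbb{E}_b[K'_\lambda(b)]=\binom{\Lambda}{\lambda}K/2^\Lambda$, which follows by the symmetry of $\mathcal{B}$ from $\mathbb{E}_b[K_{\mathcal{U},b}]=K/2^\Lambda$; substituting it makes $c_t$ equal the claimed corner value $\sum_\lambda K\binom{\Lambda}{t+\lambda}/(2^\Lambda\binom{\Lambda}{t})+\sum_\lambda A_{t,\lambda}$. Finally, minimizing the linear functional $\sum_j x_j c_j$ over the memory-sharing polytope $\{x_j\ge0,\ \sum_j x_j=1,\ \sum_j jx_j=\Lambda M/N\}$ produces a piecewise-linear curve in $M$, and an elementary (if tedious) check that $(c_t)_{t\in[0:\Lambda]}$ is a convex sequence confirms that the corner points are exactly those in the statement.

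I expect the two genuinely delicate points to be: (i) the acyclicity verification and the bookkeeping that turns $\mathbb{E}_\sigma$ into the ratio $1/\binom{j+\lambda}{\lambda}$ --- this is precisely where the combinatorial coding gain is pinned down quantitatively --- and (ii) recognizing that the corrective terms $A_{t,\lambda}$ are generated by exactly the $2^\Lambda$ fully concentrated connectivities and by no others, i.e. that for every non-concentrated $b$ some multicasting is possible and the generic $\sigma$-bound is what one keeps, whereas the cut-set improvement is available only when all users share identical side information. The symmetrization, the chain of binomial identities, and the final linear program over the memory-sharing polytope are routine and largely inherited from the proof of \Cref{thm: Converse on Optimal Average Worst-Case Load}.
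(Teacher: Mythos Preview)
Your proposal is correct and follows essentially the same route as the paper: split the ensemble into the $2^\Lambda$ ``concentrated'' connectivities (all $K$ users on one fixed cache-set) versus the rest, apply the acyclic-subgraph index-coding bound of \Cref{lem: Acyclic Lemma} with a carefully chosen permutation for the concentrated ones and with all permutations for the others, average, and solve the resulting linear program. Your ``generic bound plus surplus'' bookkeeping is algebraically identical to the paper's separate treatment of $\mathcal{B}_{\text{a}}$ and $\mathcal{B}_{\text{b}}$.

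One presentational difference worth noting: you compute $\mathbb{E}_b[K'_\lambda(b)]=\binom{\Lambda}{\lambda}K/2^\Lambda$ directly from the symmetry $\mathbb{E}_b[K_{\mathcal{U},b}]=K/2^\Lambda$, whereas the paper evaluates the corresponding sum over $\mathcal{B}_{\text{b}}$ explicitly via weak compositions and the hockey-stick identity; your shortcut is cleaner and obviously valid. Conversely, your upfront symmetrization step (reducing to symmetric placements via convexity of $R^\star_b$ in the subfile sizes) is not needed: the paper simply averages the linear acyclic-subgraph lower bounds over demands, cache permutations, and connectivities, and the result automatically depends only on the aggregates $x_t$. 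If you keep your symmetrization argument, the convexity of $R^\star_b$ in the subfile sizes is the one point that would need a careful justification, so it is simpler to drop it and follow the direct averaging.
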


\begin{proof}
  The proof is reported in~\Cref{sec: Proof of Converse on General Optimal Average Worst-Case Load}.
\end{proof}

\section{Achievability Proof of \texorpdfstring{\Cref{thm: MACC with Generalized Combinatorial Topology}}{Theorem~\ref{thm: MACC with Generalized Combinatorial Topology}}}\label{sec: Achievability Proof of MACC with Generalized Combinatorial Topology}

We devote this section to presenting the general placement-and-delivery scheme, which will allow us to prove that the load performance in~\Cref{thm: MACC with Generalized Combinatorial Topology} is indeed achievable. Recall that this is for the case of the generalized combinatorial topology presented in~\Cref{sec: Generalized Combinatorial Topology}. As a quick reminder, under this topology, out of the total of $K = \sum_{\lambda = 0}^\Lambda K'_\lambda$ users, there exist $K'_\lambda$ users each of which is associated to exactly $\lambda$ caches. Similarly, the normalized $K_\lambda$ simply describes the total number of users uniquely served by any one set of $\lambda$ caches. As it will be clear in a short while, we point out that the general delivery scheme here proposed is a properly calibrated orthogonal concatenation of the transmitted sequences in~\cite{Muralidhar2021Maddah-Ali-NiesenCaching} for different values of $\lambda \in [0 : \Lambda]$.

\subsection{Description of the General Scheme}\label{sec: Description of the General Scheme}

The communication process is split into the placement phase and the delivery phase. Both phases are designed with full knowledge\footnote{Indeed, there is generally some flexibility and the placement can be designed for a specific topology, when such topology is known. In fact, for the generalized combinatorial topology the basic $\Lambda$-cache MAN placement is enough to achieve the optimal performance, as mentioned in~\Cref{cor: Universality of MAN Placement}.} of the topology, i.e., with knowledge of the fact that during delivery the users are connected to the caches according to the unique generalized combinatorial topology in~\Cref{sec: Generalized Combinatorial Topology}. 

\subsubsection{Placement Phase}\label{sec: Placement Phase}

This procedure is performed by the central server without knowing the future requests of each user. Let $M \coloneqq t N/\Lambda$ be the volume of data, in units of file, that each of the $\Lambda$ caches can store, where $t \in [0 : \Lambda]$ is an integer value. Each file is split into $\binom{\Lambda}{t}$ equal-sized non-overlapping subfiles as follows
\begin{equation}
  W_{n} = \{W_{n, \mathcal{T}} : \mathcal{T} \subseteq [\Lambda], |\mathcal{T}| = t\}, \quad \forall n \in [N]
\end{equation}
and the memory of the $\ell$-th cache is filled as 
\begin{equation}
  \mathcal{Z}_{\ell} = \{W_{n, \mathcal{T}} : n \in [N], \mathcal{T} \subseteq [\Lambda], |\mathcal{T}| = t, \ell \in \mathcal{T}\}
\end{equation}
for each $\ell \in [\Lambda]$. A quick calculation allows us to verify that each cache $\ell \in [\Lambda]$ stores a total of
\begin{equation}
    |\mathcal{Z}_{\ell}| = N\binom{\Lambda - 1}{t - 1}\frac{B}{\binom{\Lambda}{t}} = t\frac{N}{\Lambda}B = MB
\end{equation}
bits, which guarantees that the memory-size constraint is satisfied.

\subsubsection{Delivery Phase}

The delivery phase takes place once the file-requests of the users are revealed. Let $\bm{d} = (\bm{d}_0, \dots, \bm{d}_\Lambda)$ be the demand vector containing the indices of the files demanded by the $K$ users in $\mathcal{K}$. Then, the server transmits $X_{\bm{d}} = (X_0, \dots, X_{\Lambda - t})$, where $X_\lambda = \left( X_{\lambda, 1}, \dots, X_{\lambda, K_\lambda}\right)$ and where
\begin{equation}\label{eqn: Sequence of Multicast Messages}
    X_{\lambda, k} = \left( \bigoplus_{\mathcal{U} \subseteq \mathcal{S} : |\mathcal{U}| = \lambda } W_{d_{\mathcal{U}_{k}}, \mathcal{S} \setminus \mathcal{U}} : \mathcal{S} \subseteq [\Lambda], |\mathcal{S}| = t + \lambda \right)
\end{equation}
for each $k \in [K_\lambda]$ and for each $\lambda \in [0: \Lambda]$. Simply, $X_{\lambda, k}$ corresponds to the broadcast transmission which successfully delivers all the missing information to all the $k$-th users connected to any $\lambda$ caches. 
Notice that users connected to more than $\Lambda - t$ caches have access to the entire library, hence no transmission is needed for them.

\begin{remark}
    We point out that for any one fixed value of $\lambda \in [0 : \Lambda]$ and $k \in [K_\lambda]$, the sequence of multicast messages in~\eqref{eqn: Sequence of Multicast Messages} matches the scheme in~\cite{Muralidhar2021Maddah-Ali-NiesenCaching}. However, whereas the work in~\cite{Muralidhar2021Maddah-Ali-NiesenCaching} considers $K_\lambda = 1$ and only one single value of $\lambda \in [0 : \Lambda]$ at a time (i.e., considers that every user is connected to exactly $\lambda$ caches), the scheme here considers any $K_\lambda \in \mathbb{Z}^{+}_{0}$ and most importantly considers all possible values of $\lambda \in [0 : \Lambda]$ at the same time, thus capturing a rather general scenario where the set of $K$ users involves users with unequal cache-connectivity capabilities (i.e., users that have access to an unequal number of caches). The most interesting outcome of our generalization is the rather surprising fact that a basic TDMA-like approach of treating groups of users with different cache-connectivity capabilities is in fact optimal, as we will see soon. This is indeed surprising, because users with different cache-connectivity capabilities still maintain an abundance of common side information, which could have conceivably been exploited using joint encoding across the groups. The optimality of our scheme reveals that there is no need to encode across users with different cache-connectivity capabilities, and that a TDMA-like approach is optimal, even though there is abundant additional opportunities to encode across different groups of users that are now treated separately. 
\end{remark}

\begin{remark}
    An additional interesting consideration regards the adopted cache placement. To the best of our knowledge, the cache placement in the multi-access scenario usually changes substantially as $\lambda$ changes, e.g., this is what happens for the cyclic wrap-around topology (see~\cite{Reddy2020Rate-MemoryPlacement}). However, as described in \Cref{cor: Universality of MAN Placement}, for the combinatorial topology in~\cite{Muralidhar2021Maddah-Ali-NiesenCaching} the same MAN cache placement holds for any value of $\lambda \in [0 : \Lambda]$. And such MAN placement not only works for any distinct value of $\lambda$, but even allows to handle \emph{optimally} all the $\lambda$-instances when all such instances appear simultaneously at the same time. In other words, the MAN cache placement --- which is independent of the number of caches each user is connected to, and is a function of the number of caches $\Lambda$ and of the cache redundancy $\Lambda\gamma$ only --- can be considered a single unified cache placement approach for which the least possible worst-case communication load is achieved under uncoded cache placement, even when each of the $\lambda$-settings in~\cite{Muralidhar2021Maddah-Ali-NiesenCaching} is taken into account simultaneously.
\end{remark}

Even though the proof of correctness of the delivery in~\eqref{eqn: Sequence of Multicast Messages}, for a specific $\lambda$, was reported in~\cite{Muralidhar2021Maddah-Ali-NiesenCaching}, we briefly describe for completeness how decoding is achieved in~\eqref{eqn: Sequence of Multicast Messages}. Consider a user $\mathcal{U}'_k$ for some $\mathcal{U}' \subseteq [\Lambda]$ with $|\mathcal{U}'| = \lambda$, for some $k \in [K_\lambda]$ and for some $\lambda \in [0 : \Lambda]$. Consider a specific $\mathcal{S} \subseteq [\Lambda]$ for which $|\mathcal{S}| = t + \lambda$ and $\mathcal{U}' \subseteq \mathcal{S}$. For such set $\mathcal{S}$, the coded transmission
\begin{equation}
    \bigoplus_{\mathcal{U} \subseteq \mathcal{S} : |\mathcal{U}| = \lambda } W_{d_{\mathcal{U}_{k}}, \mathcal{S} \setminus \mathcal{U}}
\end{equation}
is sent. Since $\mathcal{U}' \subseteq \mathcal{S}$, we can rewrite the coded transmission as
\begin{equation}
    \bigoplus_{\mathcal{U} \subseteq \mathcal{S} : |\mathcal{U}| = \lambda } W_{d_{\mathcal{U}_{k}}, \mathcal{S} \setminus \mathcal{U}} = W_{d_{\mathcal{U}'_{k}}, \mathcal{S} \setminus \mathcal{U}'} \oplus  \underbrace{\bigoplus_{\mathcal{U} \subseteq \mathcal{S} : |\mathcal{U}| = \lambda, \mathcal{U} \neq \mathcal{U}' } W_{d_{\mathcal{U}_{k}}, \mathcal{S} \setminus \mathcal{U}}}_{\text{interference}}.
\end{equation}
Notice that user $\mathcal{U}'_k$ can correctly decode the subfile $W_{d_{\mathcal{U}'_{k}}, \mathcal{S} \setminus \mathcal{U}'}$. Indeed, this user has access to all subfiles in the interference term, considering that $\mathcal{S} \setminus \mathcal{U} \cap \mathcal{U}' \neq \emptyset$ since $\mathcal{U} \neq \mathcal{U}'$. User $\mathcal{U}'_k$ can consequently decode a distinct subfile for each $\mathcal{S} \subseteq [\Lambda]$ with $|\mathcal{S}| = t + \lambda$ and $\mathcal{U}' \subseteq \mathcal{S}$. Since there is a total of $\binom{\Lambda - \lambda}{t}$ such sets $\mathcal{S}$ and user $\mathcal{U}'_k$ misses a total of $\binom{\Lambda - \lambda}{t}$ subfiles, we can conclude that user $\mathcal{U}'_k$ correctly decodes all missing subfiles from the coded transmission in~\eqref{eqn: Sequence of Multicast Messages}. Clearly, the same holds also for any other user, hence showing the decodability of the scheme.

\subsection{Performance Calculation}

To evaluate the performance of the scheme proposed in~\Cref{sec: Description of the General Scheme}, it is enough to calculate $|X_{\bm{d}}|/B$. Since it can be easily checked that
\begin{equation}
    \frac{|X_{\lambda, k}|}{B} = \frac{\binom{\Lambda}{t + \lambda}}{\binom{\Lambda}{t}}
\end{equation}
the achievable load performance is equal to
\begin{align}
    R_{\text{comb}, \text{UB}} & = \frac{|X_{\bm{d}}|}{B} \\
                               & = \sum_{\lambda = 0}^{\Lambda - t}\frac{|X_\lambda|}{B} \\
                               & = \sum_{\lambda = 0}^{\Lambda - t}\sum_{k = 1}^{K_\lambda}\frac{|X_{\lambda, k}|}{B} \\
                               & = \sum_{\lambda = 0}^{\Lambda - t}K_\lambda \frac{\binom{\Lambda}{t + \lambda}}{\binom{\Lambda}{t}}
\end{align}
and thus it holds that $R^\star_{\text{comb}} \leq R_{\text{comb}, \text{UB}}$, where this upper bound is a piece-wise linear curve with corner points
\begin{equation}
  (M, R_{\text{comb}, \text{UB}}) = \left(t\frac{N}{\Lambda}, \sum_{\lambda = 0}^{\Lambda - t}K_\lambda \frac{\binom{\Lambda}{t + \lambda}}{\binom{\Lambda}{t}} \right), \quad \forall t \in [0 : \Lambda]
\end{equation}
and where memory sharing is used between any two consecutive integer values of $t$. In~\Cref{sec: Converse Proof of MACC with Combinatorial Topology} this performance will be shown to be optimal under the assumption of uncoded placement.

\begin{remark}
    The coding scheme and its performance fully incorporate the following known scenarios.
    \begin{enumerate}
        \item Case $\bm{K}_{\text{comb}} = (K_0, 0, \dots, 0)$. This scenario corresponds to the case where there are $K_0$ users and none of them is connected to any of the $\Lambda$ caches. In this case, uncoded delivery is optimal and the load performance is equal to $K_0$, independently of the memory value, as expected.
        \item Case $\bm{K}_{\text{comb}} = (0, 1, 0, \dots, 0)$. This scenario corresponds to the well-known dedicated-caches case where there is only one user connected to any one of the $\Lambda$ caches. This implies that there are $K = \Lambda$ users in total and this case corresponds to the standard MAN setting, for which the MAN scheme was shown in~\cite{Wan2020AnPlacement} (see also~\cite{Yu2018ThePrefetching}) to be optimal under uncoded cache placement.
        \item Case $\bm{K}_{\text{comb}} = (0, K_1, 0, \dots, 0)$ with $K_1 > 1$. This scenario corresponds to the shared-caches setting with uniform user-to-cache association profile, where the corresponding achievable load was also shown to be optimal (cf.~\cite{Parrinello2020FundamentalPrefetching}).
        \item Case $\bm{K}_{\text{comb}} = (0, \dots, 0, K_\lambda, 0, \dots, 0)$ with $K_\lambda = 1$ for some $\lambda \in [2 : \Lambda]$. As already mentioned, this scenario corresponds to the setting considered in~\cite{Muralidhar2021Maddah-Ali-NiesenCaching}. No optimality result was stated.\qed
    \end{enumerate}
\end{remark}

\section{Converse Proof of \texorpdfstring{\Cref{thm: MACC with Generalized Combinatorial Topology}}{Theorem~\ref{thm: MACC with Generalized Combinatorial Topology}}}\label{sec: Converse Proof of MACC with Combinatorial Topology}

The converse relies on the well-known acyclic subgraph index coding bound, which has been extensively used in various other settings (see for example~\cite{Wan2020AnPlacement, Parrinello2020FundamentalPrefetching, Brunero2021UnselfishCaching} to name a few) in order to derive lower bounds on the optimal worst-case load in caching under the assumption of uncoded prefetching. To clarify the connection between this bound and our setting, we first provide a brief presentation of the index coding problem and its connection to coded caching. 

\subsection{Definition of the Index Coding Problem}

The index coding problem \cite{Bar-Yossef2011IndexInformation, Arbabjolfaei2018FundamentalsCoding, Thapa2017InterlinkedCliques, Vaddi2018OptimalStructure, Arbabjolfaei2013OnCoding} consists of a central server having access to $N'$ independent messages as well as consists of $K'$ users that are connected to the server via a shared error-free broadcast channel. Each user $k \in [K']$ has a set of desired messages $\mathcal{M}_k \subseteq [N']$, which is called the \emph{desired message set}, while also having access to another subset of messages $\mathcal{A}_k \subseteq [N']$, which is called the \emph{side information set}. To avoid trivial scenarios, it is commonly assumed that $\mathcal{M}_k \neq \emptyset$, $\mathcal{A}_k \neq [N']$ and $\mathcal{M}_k \cap \mathcal{A}_k = \emptyset$ for each $k \in [K']$.

The index coding problem is usually described in terms of its \emph{side information graph}. Let $M_i$ be the $i$-th message in the set $[N']$. Then, such graph is a directed graph where each vertex is a desired message and where there exists an edge from a desired message $M_i$ to a desired message $M_j$ if and only if message $M_i$ is in the side information set of the user requesting message $M_j$. The derivation of our converse is based on the following bound from \cite[Corollary 1]{Arbabjolfaei2013OnCoding}.

\begin{lemma}[{\cite[Corollary 1]{Arbabjolfaei2013OnCoding}}]\label{lem: Acyclic Subgraph Converse Bound}
    Consider an index coding problem with $N'$ messages $M_i$ for $i \in [N']$. The minimum number of transmitted bits $\rho$ is lower bounded as
    \begin{equation}
        \rho \geq \sum_{i \in \mathcal{J}}|M_i|
    \end{equation}
    for any acyclic subgraph $\mathcal{J}$ of the problem's side information graph.
\end{lemma}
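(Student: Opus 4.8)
The plan is to reprove the bound of \cite[Corollary 1]{Arbabjolfaei2013OnCoding} by the standard entropy argument, the only substantive ingredient being a decoding order extracted from the acyclicity of $\mathcal{J}$. Fix any valid index code: a deterministic encoding $X = \phi(M_1, \dots, M_{N'})$ of length $\rho$ bits together with decoders such that each user $k$ recovers every message in $\mathcal{M}_k$ from $X$ and from $\{M_i : i \in \mathcal{A}_k\}$. Without loss of generality take the $N'$ messages to be mutually independent and uniformly distributed, so that $H(M_i) = |M_i|$ and, for any $\mathcal{S} \subseteq [N']$, $H(M_{\mathcal{S}}) = \sum_{i \in \mathcal{S}} |M_i|$, where $M_{\mathcal{S}}$ denotes the tuple of messages indexed by $\mathcal{S}$.

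The first step is to show that all messages indexed by $\mathcal{J}$ can be recovered from $X$ together with the messages indexed by $\mathcal{J}^c \coloneqq [N'] \setminus \mathcal{J}$. Since the subgraph induced by $\mathcal{J}$ is acyclic, it has a source vertex; let $i_1 \in \mathcal{J}$ have no incoming edge within $\mathcal{J}$. By definition of the side information graph, an edge into $M_{i_1}$ would correspond to a side-information message of the user requesting $M_{i_1}$; hence that user's side information is entirely contained in $M_{\mathcal{J}^c}$, and so $M_{i_1}$ is decodable from $(X, M_{\mathcal{J}^c})$. Deleting $i_1$ leaves an acyclic induced subgraph, and iterating yields an order $i_1, i_2, \dots, i_{|\mathcal{J}|}$ in which $M_{i_j}$ is decodable from $X$, $M_{\mathcal{J}^c}$, and the already-recovered $M_{i_1}, \dots, M_{i_{j-1}}$. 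Composing these decoders gives $H(M_{\mathcal{J}} \mid X, M_{\mathcal{J}^c}) = 0$.

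The bound then follows from the chain
\begin{equation}
  \rho \ge H(X) \ge H(X \mid M_{\mathcal{J}^c}) = I(X; M_{\mathcal{J}} \mid M_{\mathcal{J}^c}) = H(M_{\mathcal{J}} \mid M_{\mathcal{J}^c}) - H(M_{\mathcal{J}} \mid X, M_{\mathcal{J}^c}) = \sum_{i \in \mathcal{J}} |M_i|,
\end{equation}
where the first equality uses $H(X \mid M_{[N']}) = 0$ (the encoder is deterministic, so conditioning on all messages gives $H(X \mid M_{\mathcal{J}}, M_{\mathcal{J}^c}) = 0$), the second is the definition of conditional mutual information, and the last uses the independence of the messages together with the zero conditional entropy established above. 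For variable-length or stochastic encodings one simply replaces $\rho$ by the expected codeword length and the same inequalities apply.

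The delicate point — and the part I would double-check carefully — is the bookkeeping of the decoding order: one must verify that ``source vertex of the induced subgraph'' translates exactly into ``the requesting user has no side-information message indexed in $\mathcal{J}$'' (immediate from the edge convention $M_i \to M_j \iff M_i \in \mathcal{A}_{k(j)}$), and that deleting a source preserves acyclicity so the induction closes. Everything else is the routine entropy manipulation displayed above.
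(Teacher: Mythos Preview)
Your proof is correct and is the standard entropy argument for the acyclic-subgraph index coding bound. Note, however, that the paper does not actually supply its own proof of this lemma: it is quoted verbatim from \cite[Corollary~1]{Arbabjolfaei2013OnCoding} and used as a black box, so there is no ``paper's proof'' to compare against---your derivation simply fills in what the paper outsources to the reference.
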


\subsection{Main Proof of the Converse in~\texorpdfstring{\Cref{thm: MACC with Generalized Combinatorial Topology}}{Theorem~\ref{thm: MACC with Generalized Combinatorial Topology}}}

The first step in our converse proof consists of dividing, in the most generic manner, each file into a maximum of $2^\Lambda$ disjoint subfiles as
\begin{equation}\label{eqn: File Splitting}
    W_{n} = \left\{W_{n, \mathcal{T}} : \mathcal{T} \subseteq [\Lambda] \right\}, \quad \forall n \in [N]
\end{equation}
where we identify with $W_{n, \mathcal{T}}$ the subfile which is exclusively stored by the caches in $\mathcal{T}$. Such placement is designated as \emph{uncoded} because the bits of the library files are simply copied within the caches.

\subsubsection{Constructing the Index Coding Bound}

Assuming that each user requests a distinct\footnote{Notice that the set of worst-case demands may not include the set of demand vectors $\bm{d}$ with all distinct entries. However, this is not a problem, since our goal is to derive a converse bound on the worst-case load. Indeed, our choice of treating distinct demands yields a converse bound, which --- while it does not need to be, a priori, the tightest bound --- is a valid bound. In our case, the bound proves to be tight.} file, we consider the index coding problem with $K' = K = \sum_{\lambda = 0}^{\Lambda} K_\lambda \binom{\Lambda}{\lambda}$ users and $N' = \sum_{\lambda = 0}^{\Lambda} K_\lambda \binom{\Lambda}{\lambda} 2^{\Lambda - \lambda}$ independent messages, where each such message represents a subfile requested by some user (who naturally does not have access to it via a cache). Recalling that $W_{d_{\mathcal{U}_{k}}}$ denotes the file requested by the user identified by $\mathcal{U}_{k}$, the desired message set and the side information set are respectively given, in their most generic form, by
\begin{align}
    \mathcal{M}_{\mathcal{U}_k} & = \{W_{d_{\mathcal{U}_{k}}, \mathcal{T}} : \mathcal{T} \subseteq [\Lambda] \setminus \mathcal{U}\} \\
    \mathcal{A}_{\mathcal{U}_k} & = \{W_{n, \mathcal{T}} : n \in [N], \mathcal{T} \subseteq [\Lambda], \mathcal{T} \cap \mathcal{U} \neq \emptyset\}
\end{align}
for each user $\mathcal{U}_k$ with $k \in [K_{|\mathcal{U}|}]$ and $\mathcal{U} \subseteq [\Lambda]$. Here, the side information graph consists of a directed graph where each vertex is a subfile, and where there is an edge from the subfile $W_{d_{\mathcal{P}_{k_1}}}$ to the subfile $W_{d_{\mathcal{Q}_{k_2}}}$ if and only if $W_{d_{\mathcal{P}_{k_1}}} \in \mathcal{A}_{\mathcal{Q}_{k_2}}$ with $\mathcal{P} \subseteq [\Lambda]$, $\mathcal{Q} \subseteq [\Lambda]$, $k_1 \in [K_{|\mathcal{P}|}]$, $k_2 \in [K_{|\mathcal{Q}|}]$ and $\mathcal{P}_{k_1} \neq \mathcal{Q}_{k_2}$. Since our aim is to apply \Cref{lem: Acyclic Subgraph Converse Bound}, we need to consider acyclic sets of vertices $\mathcal{J}$ in the side information graph. Toward this, we take advantage of the following lemma, which holds for any connectivity $b \in \mathcal{B}$.

\begin{lemma}\label{lem: Acyclic Lemma}
    Let $\bm{d} = (\bm{d}_0, \dots, \bm{d}_\Lambda)$ be a demand vector and let $\bm{c} = (c_1, \dots, c_\Lambda)$ be a permutation of the $\Lambda$ caches. The following set of vertices
    \begin{equation}
        \bigcup_{k \in [K_{\emptyset, b}]} \bigcup_{\mathcal{T} \subseteq [\Lambda]} \left\{ W_{d_{\emptyset_k, \mathcal{T}}} \right\} \cup \bigcup_{\lambda \in [\Lambda]}  \bigcup_{i \in [\lambda : \Lambda]}  \bigcup_{\substack{\mathcal{U}^i \subseteq \{c_1, \dots, c_i\} : |\mathcal{U}^i| = \lambda,\\ c_i \in \mathcal{U}^i}}  \bigcup_{k \in [K_{\mathcal{U}^i, b}]}  \bigcup_{\mathcal{T}_i \subseteq [\Lambda] \setminus \{c_1, \dots, c_i\}} \left\{ W_{d_{\mathcal{U}^i_k, \mathcal{T}_i}} \right\}
    \end{equation}
    is acyclic for any connectivity $b \in \mathcal{B}$.
\end{lemma}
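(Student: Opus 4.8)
The plan is to show that the stated set of vertices contains no directed cycle by exhibiting a total order (a topological-style ordering) on these vertices that is consistent with every edge of the side information graph restricted to the set; once such an order exists, no cycle can exist. Recall that there is an edge from $W_{d_{\mathcal{P}_{k_1}, \mathcal{S}}}$ to $W_{d_{\mathcal{Q}_{k_2}, \mathcal{T}}}$ only when $W_{d_{\mathcal{P}_{k_1}, \mathcal{S}}} \in \mathcal{A}_{\mathcal{Q}_{k_2}}$, i.e. only when $\mathcal{S} \cap \mathcal{Q} \neq \emptyset$. So to kill all cycles it suffices to order the vertices so that whenever $\mathcal{S} \cap \mathcal{Q} \neq \emptyset$ (with the two vertices distinct and both in $\mathcal{J}$), the vertex indexed by $(\mathcal{Q}, k_2, \mathcal{T})$ comes \emph{after} the one indexed by $(\mathcal{P}, k_1, \mathcal{S})$ — or rather, we only need that the order is such that no edge points "backwards" around a cycle, which the existence of a consistent strict order guarantees.

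First I would fix the permutation $\bm{c} = (c_1, \dots, c_\Lambda)$ and describe each vertex in $\mathcal{J}$ by the key data that the construction attaches to it. For a user $\mathcal{U}^i_k$ appearing in the $\lambda$-block with index $i$, we have $\mathcal{U}^i \subseteq \{c_1, \dots, c_i\}$, $c_i \in \mathcal{U}^i$, and the stored-set of the subfile satisfies $\mathcal{T}_i \subseteq [\Lambda] \setminus \{c_1, \dots, c_i\}$, i.e. $\mathcal{T}_i \subseteq \{c_{i+1}, \dots, c_\Lambda\}$. The empty-cache users ($K_{\emptyset,b}$ of them) are a degenerate case with $i$ effectively $0$ and $\mathcal{T} \subseteq [\Lambda]$ arbitrary. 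The crucial observation to extract is: for a vertex in the $i$-block, the index $i$ is exactly $\max\{ j : c_j \in \mathcal{U}^i \}$, and moreover $\mathcal{U}^i$ is disjoint from $\mathcal{T}_i$, while every cache in $\mathcal{T}_i$ has $\bm{c}$-position strictly larger than $i$. I would then order vertices primarily by $i$ \emph{descending} (empty-cache vertices, $i=0$, last), i.e. vertices whose user-cache-set "reaches furthest right" in the permutation come first; ties broken by any fixed order on $(\mathcal{U}^i, k, \mathcal{T}_i)$.

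Next I would verify that this order is consistent with all edges inside $\mathcal{J}$. Take an edge from vertex $v_1 = W_{d_{\mathcal{U}^{i_1}_{k_1}, \mathcal{T}_{i_1}}}$ to $v_2 = W_{d_{\mathcal{U}^{i_2}_{k_2}, \mathcal{T}_{i_2}}}$, so $\mathcal{T}_{i_1} \cap \mathcal{U}^{i_2} \neq \emptyset$ and the vertices are distinct. Pick $c_m \in \mathcal{T}_{i_1} \cap \mathcal{U}^{i_2}$. Since $c_m \in \mathcal{T}_{i_1} \subseteq \{c_{i_1+1}, \dots, c_\Lambda\}$ we get $m > i_1$; since $c_m \in \mathcal{U}^{i_2} \subseteq \{c_1, \dots, c_{i_2}\}$ we get $m \le i_2$. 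Hence $i_1 < i_2$, so $v_2$ has a strictly larger $i$-value and therefore comes \emph{strictly before} $v_1$ in our ordering; in particular no edge goes from an earlier vertex to a later vertex in a way that could close a cycle — along any directed path the $i$-value strictly increases, so a directed cycle is impossible. (For the empty-cache vertices one checks the two sub-cases: an edge out of such a vertex would need $\mathcal{T} \cap \mathcal{U}^{i_2} \neq \emptyset$ with $i_2 \ge 1 > 0$, consistent with "$i$ increases"; and there is no edge into an empty-cache vertex since its user is connected to no cache, so $\mathcal{U} = \emptyset$ and the intersection condition $\mathcal{S}\cap\emptyset\neq\emptyset$ is never met — equivalently $\mathcal{A}_{\emptyset_k}$ contains no requested subfile of another user in a way that produces such an edge; in any case the "$i$ strictly increases along edges" statement already forbids cycles through them.)

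The main obstacle, and the part that needs to be written carefully rather than waved through, is the bookkeeping in the definition: making sure that for \emph{every} vertex placed in $\mathcal{J}$ the set $\mathcal{T}_i$ really is forced to live inside $[\Lambda]\setminus\{c_1,\dots,c_i\}$ and that $i$ is genuinely the maximal index of a cache in $\mathcal{U}^i$ (this is what the clause $c_i \in \mathcal{U}^i$, together with $\mathcal{U}^i \subseteq \{c_1,\dots,c_i\}$, buys us), and then confirming that this is the \emph{only} structural fact about the side-information graph needed — namely that an edge exists only when the tail's stored-set meets the head's user-cache-set. Everything else is the short argument above: $c_m \in \mathcal{T}_{i_1}\cap\mathcal{U}^{i_2}$ pins $i_1 < m \le i_2$, the $i$-value strictly increases along every edge, hence acyclicity. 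I would close by noting that nothing in the argument used any property of the connectivity $b$ beyond the definition of $\mathcal{U}_k$, $\mathcal{A}_{\mathcal{U}_k}$ and the edge rule, so the lemma holds for all $b \in \mathcal{B}$, as claimed.
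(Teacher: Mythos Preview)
Your proposal is correct and follows essentially the same approach as the paper's own proof: both arguments hinge on the observation that for any edge inside $\mathcal{J}$ the ``level'' index $i$ strictly increases (the paper phrases this as ``no user in the $i$-th level has access to the subfiles in its level or in higher levels,'' you phrase it via the inequality $i_1 < m \le i_2$), and both handle the empty-cache users by noting that no edge can enter such a vertex. Your framing as an explicit topological order is slightly more formal but otherwise identical in content.
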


\begin{proof}
    The proof is reported in~\refappendix{app: Proof of Acyclic Lemma}.
\end{proof}

Consider a demand vector $\bm{d} = (\bm{d}_0, \dots, \bm{d}_\Lambda)$ and a permutation $\bm{c} = (c_1, \dots, c_\Lambda)$ of the set $[\Lambda]$. If we specialize the acyclic set presented in~\Cref{lem: Acyclic Lemma} to the generalized combinatorial topology, then applying \Cref{lem: Acyclic Subgraph Converse Bound} yields the following lower bound
\begin{equation}\label{eqn: Index Coding Lower Bound 1}
    BR^\star_{\text{comb}} \geq \sum_{k \in [K_0]} \sum_{\mathcal{T} \subseteq [\Lambda]} \left| W_{d_{\emptyset_k, \mathcal{T}}} \right| + \sum_{\lambda \in [\Lambda]} \sum_{i \in [\lambda : \Lambda]} \sum_{\substack{\mathcal{U}^i \subseteq \{c_1, \dots, c_i\} : |\mathcal{U}^i| = \lambda,\\ c_i \in \mathcal{U}^i}} \sum_{k \in [K_\lambda]} \sum_{\mathcal{T}_i \subseteq [\Lambda] \setminus \{c_1, \dots, c_i\}} \left| W_{d_{\mathcal{U}^i_k, \mathcal{T}_i}} \right|.
\end{equation}

\subsubsection{Constructing the Optimization Problem}

Now our goal is to create several bounds as the one in~\eqref{eqn: Index Coding Lower Bound 1} considering any vector $\bm{d} \in \mathcal{D}$ and any vector $\bm{c} \in \mathcal{C}$, where we denote by $\mathcal{D}$ and $\mathcal{C}$ the set of possible demand vectors with distinct entries and the set of possible permutation vectors of the set $[\Lambda]$, respectively. Our aim is then to average all these bounds to obtain in the end a useful lower bound on the optimal worst-case load. Considering that $|\mathcal{D}| = \binom{N}{K}K!$ and $|\mathcal{C}| = \Lambda!$, we aim to simplify the expression given by
\begin{equation}\label{eqn: Complete Lower Bound 1}
    \begin{split}
        \binom{N}{K}K!\Lambda! BR^\star_{\text{comb}} & \geq \sum_{\bm{d} \in \mathcal{D}} \sum_{\bm{c} \in \mathcal{C}} \Bigg(\sum_{k \in [K_0]} \sum_{\mathcal{T} \subseteq [\Lambda]} \left| W_{d_{\emptyset_k, \mathcal{T}}} \right| + \\ 
                                                      & + \sum_{\lambda \in [\Lambda]} \sum_{i \in [\lambda : \Lambda]} \sum_{\substack{\mathcal{U}^i \subseteq \{c_1, \dots, c_i\} : |\mathcal{U}^i| = \lambda, \\ c_i \in \mathcal{U}^i}} \sum_{k \in [K_\lambda]} \sum_{\mathcal{T}_i \subseteq [\Lambda] \setminus \{c_1, \dots, c_i\}} \left| W_{d_{\mathcal{U}^i_k, \mathcal{T}_i}} \right|\Bigg).
    \end{split}
\end{equation}
The next step consists of simplifying the expression in~\eqref{eqn: Complete Lower Bound 1}. Toward simplifying, we count how many times each subfile $W_{n, \mathcal{T}}$ --- for any given $n \in [N]$, $\mathcal{T} \subseteq [\Lambda]$ and $|\mathcal{T}| = t$ for some $t \in [0 : \Lambda]$ --- appears in~\eqref{eqn: Complete Lower Bound 1}.

Assume that the file $W_n$ is demanded by user $\emptyset_{k}$ for some $k \in [K_0]$. Out of the entire set $\mathcal{D}$ of all possible distinct demands, we find a total of $\binom{N}{K}K!/N$ distinct demands for which a file is requested by the same user. Hence, since there are $\binom{N}{K}K!/N$ distinct demands for which the file $W_{n}$ is requested by user $\emptyset_{k}$, the subfile $W_{n, \mathcal{T}}$ is counted a total of $\Lambda!\binom{N}{K}K!/N$ times within the set of demands for which such file is requested by user $\emptyset_{k}$. The $\Lambda!$ term comes from the fact that the set
\begin{equation}
    \bigcup_{k \in [K_0]} \bigcup_{\mathcal{T} \subseteq [\Lambda]} \{W_{d_{\emptyset_k, \mathcal{T}}}\}
\end{equation}
does not depend on the permutation vector $\bm{c}$, thus the subfile $W_{n, \mathcal{T}}$ appears in such set independently of the vector $\bm{c}$. Since there are $\Lambda!$ such vectors, the subfile $W_{n, \mathcal{T}}$ is counted $\Lambda!$ times any time it is requested by the user $\emptyset_{k}$. The same reasoning follows for each $k \in [K_0]$, hence we can conclude that the subfile $W_{n, \mathcal{T}}$ is counted
\begin{equation}
    K_0\Lambda!\frac{\binom{N}{K}K!}{N}
\end{equation}
times when we span all cases of distinct demand vectors for which the file $W_n$ is requested by the set of users connected to $0$ caches.

Assume now that the file $W_n$ is demanded by user $\mathcal{U}_{k}$ for some $k \in [K_1]$ and for some $\mathcal{U} \subseteq [\Lambda] \setminus \mathcal{T}$ with $|\mathcal{U}| = 1$. Recall also that such file is requested by user $\mathcal{U}_{k}$ a total of $\binom{N}{K}K!/N$ times. Then, within the set of demands for which such user requests the file $W_n$, the subfile $W_{n, \mathcal{T}}$ is counted only when the elements in the set $\mathcal{U}$ appear in the vector $\bm{c}$ before\footnote{Indeed, the subfile $W_{n, \mathcal{T}}$ appears in the acyclic graph chosen as in~\Cref{lem: Acyclic Lemma} for all those permutations $\bm{c} = (c_1, \dots, c_\Lambda)$ for which $\mathcal{U} \subseteq \{c_1, \dots, c_i\}$ and $\mathcal{T} \subseteq \{c_{i + 1}, \dots, c_{\Lambda}\}$, i.e., this happens whenever the elements in $\mathcal{T}$ are after the elements in $\mathcal{U}$ in the permutation vector $\bm{c}$.} the elements in the set $\mathcal{T}$. Since there is a total of $t!(\Lambda - 1 - t)!\binom{\Lambda}{t + 1}$ such vectors $\bm{c}$ in the set $\mathcal{C}$, the subfile $W_{n, \mathcal{T}}$ is counted a total of $t!(\Lambda - 1 - t)!\binom{\Lambda}{t + 1}\binom{N}{K}K!/N$ times within the set of demands for which user $\mathcal{U}_{k}$ requests the file $W_n$. The same reasoning follows for each $k \in [K_1]$ and for each $\mathcal{U} \subseteq [\Lambda] \setminus \mathcal{T}$ with $|\mathcal{U}| = 1$, hence the subfile $W_{n, \mathcal{T}}$ is counted a total of 
\begin{equation}
    K_1 (\Lambda - t) t!(\Lambda - 1 - t)!\binom{\Lambda}{t + 1}\frac{\binom{N}{K}K!}{N}
\end{equation}
times across all the demands for which the file $W_n$ is requested by the set of users connected to $1$ cache.

Let us consider now that the file $W_n$ is demanded by user $\mathcal{U}_{k}$ for some $k \in [K_\lambda]$ and for some $\mathcal{U} \subseteq [\Lambda] \setminus \mathcal{T}$ with $|\mathcal{U}| = \lambda$. Recall also that such file is requested by user $\mathcal{U}_{k}$ a total of $\binom{N}{K}K!/N$ times. Then, within the set of demands for which such user requests the file $W_n$, the subfile $W_{n, \mathcal{T}}$ is counted only when the elements in the set $\mathcal{U}$ appear in the vector $\bm{c}$ before the elements in the set $\mathcal{T}$. Since there is a total of $\lambda!t!(\Lambda - \lambda - t)!\binom{\Lambda}{t + \lambda}$ such vectors $\bm{c}$, the subfile $W_{n, \mathcal{T}}$ is counted a total of $\lambda!t!(\Lambda - \lambda - t)!\binom{\Lambda}{t + \lambda}\binom{N}{K}K!/N$ times within the set of demands for which user $\mathcal{U}_{k}$ requests the file $W_n$. The same reasoning follows for each $k \in [K_\lambda]$ and for each $\mathcal{U} \subseteq [\Lambda] \setminus \mathcal{T}$ with $|\mathcal{U}| = \lambda$, hence the subfile $W_{n, \mathcal{T}}$ is counted a total of
\begin{equation}
     K_\lambda \binom{\Lambda - t}{\lambda} \lambda!t!(\Lambda - \lambda - t)!\binom{\Lambda}{t + \lambda} \frac{\binom{N}{K}K!}{N}
\end{equation}
times within the set of demands for which the file $W_n$ is requested by the set of users connected to $\lambda$ caches.

Consequently, if we consider all distinct demands in the set $\mathcal{D}$, the subfile $W_{n, \mathcal{T}}$ is counted a total of
\begin{equation}
    \sum_{\lambda = 0}^{\Lambda - t} K_\lambda \binom{\Lambda - t}{\lambda} \lambda!t!(\Lambda - \lambda - t)!\binom{\Lambda}{t + \lambda} \frac{\binom{N}{K}K!}{N}
\end{equation}
times, which gives us the number of times this same subfile appears in~\eqref{eqn: Complete Lower Bound 1}. The same reasoning follows for any $n \in [N]$ and for any $\mathcal{T} \subseteq [\Lambda]$ with $|\mathcal{T}| = t$. Thus, the expression in~\eqref{eqn: Complete Lower Bound 1} can be rewritten as
\begin{align}
    R^\star_{\text{comb}} & \geq \frac{1}{\binom{N}{K}K!\Lambda!} \sum_{t = 0}^{\Lambda} \sum_{\lambda = 0}^{\Lambda - t} K_\lambda \binom{\Lambda - t}{\lambda} \lambda!t!(\Lambda - \lambda - t)!\binom{\Lambda}{t + \lambda} \binom{N}{K}K! x_t \\
                          & = \sum_{t = 0}^{\Lambda} \sum_{\lambda = 0}^{\Lambda - t} K_\lambda \frac{\binom{\Lambda}{t + \lambda}}{\binom{\Lambda}{t}} x_t \\
                          & = \sum_{t = 0}^{\Lambda} f(t) x_t
\end{align}
where we define
\begin{align}
    f(t) & \coloneqq \sum_{\lambda = 0}^{\Lambda - t} K_\lambda \frac{\binom{\Lambda}{t + \lambda}}{\binom{\Lambda}{t}} \\
    0 \leq x_t & \coloneqq \sum_{n \in [N]} \sum_{\mathcal{T} \subseteq [\Lambda] : |\mathcal{T}| = t} \frac{\left| W_{n, \mathcal{T}} \right|}{NB}.
\end{align}

At this point, we seek to lower bound the minimum worst-case load $R^\star_{\text{comb}}$ by solving the following optimization problem
\begin{subequations}\label{eqn: Optimization Problem 1}
  \begin{alignat}{2}
    & \min_{\bm{x}}  & \quad & \sum_{t = 0}^{\Lambda} f(t) x_{t} \\
    & \text{subject to}  &  & \sum_{t = 0}^{\Lambda} x_{t} = 1 \label{eqn: File-Size Constraint 1} \\
    & & & \sum_{t = 0}^{\Lambda} t x_{t} \leq \frac{\Lambda M}{N} \label{eqn: Memory-Size Constraint 1}
  \end{alignat}
\end{subequations}
where \eqref{eqn: File-Size Constraint 1} and \eqref{eqn: Memory-Size Constraint 1} correspond to the file-size constraint and the cumulative cache-size constraint, respectively.

\subsubsection{Solving the Optimization Problem}

Since the auxiliary variable $x_{t}$ can be considered as a probability mass function, the optimization problem in \eqref{eqn: Optimization Problem 1} can be seen as the minimization of $\mathbb{E}[f(t)]$. Moreover, the following holds.
\begin{lemma}\label{lem: Strictly Decreasing Sequence}
  The function $f(t)$ is convex and decreasing in $t$.
\end{lemma}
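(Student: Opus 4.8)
The plan is to peel $f$ into its elementary summands and handle each one separately. Writing $g_\lambda(t) \coloneqq \binom{\Lambda}{t+\lambda}\big/\binom{\Lambda}{t}$ for $t \in [0:\Lambda]$, with the convention that $g_\lambda(t) = 0$ whenever $t+\lambda > \Lambda$, we have $f(t) = \sum_{\lambda=0}^{\Lambda} K_\lambda\, g_\lambda(t)$. Since each $K_\lambda$ is a non-negative integer and a non-negative combination of non-increasing (resp.\ convex) sequences is again non-increasing (resp.\ convex), it is enough to prove that for every fixed $\lambda$ the sequence $t \mapsto g_\lambda(t)$ is non-increasing and convex on $[0:\Lambda]$.

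For this I would pass to a real variable using the product representation
\begin{equation}
    g_\lambda(t) = \frac{t!\,(\Lambda-t)!}{(t+\lambda)!\,(\Lambda-t-\lambda)!} = \prod_{i=1}^{\lambda}\frac{\Lambda+1-t-i}{t+i} = \prod_{i=1}^{\lambda}\left(\frac{\Lambda+1}{t+i}-1\right),
\end{equation}
valid for $0 \le t \le \Lambda-\lambda$, which extends $g_\lambda$ to a smooth function on $(-1,\infty)$ that vanishes at $t = \Lambda-\lambda+1$. Each factor $\psi_i(t) \coloneqq \tfrac{\Lambda+1}{t+i}-1$ is non-negative, strictly decreasing and convex on $[0,\Lambda-\lambda+1]$, because $x \mapsto 1/x$ has these properties for $x > 0$. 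I would then show, by induction on $\lambda$ through the factorization $g_\lambda = g_{\lambda-1}\cdot\psi_\lambda$, that $g_\lambda$ is non-negative, non-increasing and convex on $[0,\Lambda-\lambda+1]$: the base cases $\lambda \in \{0,1\}$ are the constant $1$ and $g_1 = \psi_1$, and the inductive step uses the elementary fact that if $u,v$ are twice differentiable with $u,v \ge 0$, $u',v' \le 0$, $u'',v'' \ge 0$ on an interval, then $(uv)' = u'v+uv' \le 0$ and $(uv)'' = u''v+2u'v'+uv'' \ge 0$ there, i.e.\ the product of non-negative, non-increasing, convex functions is of the same kind. The domains nest correctly: one only needs $g_{\lambda-1}$ to enjoy these properties on $[0,\Lambda-\lambda+1]$, which sits inside the interval $[0,\Lambda-\lambda+2]$ furnished by the induction hypothesis.

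Finally I would transfer the conclusion back to integer sequences. A function convex (resp.\ non-increasing) on a real interval restricts to a convex (resp.\ non-increasing) sequence on the integers it contains, which yields the defining inequalities for $g_\lambda$ at all integers $t$ with $t+1 \le \Lambda-\lambda+1$; for the remaining integers $t \ge \Lambda-\lambda+1$ one has $g_\lambda(t) = g_\lambda(t+1) = 0 \le g_\lambda(t-1)$, so the discrete second difference stays non-negative and monotonicity is immediate. Summing over $\lambda$ with the weights $K_\lambda \ge 0$ shows $f$ is convex and non-increasing on $[0:\Lambda]$ (and strictly decreasing at any $t$ for which some $K_\lambda$ with $1 \le \lambda \le \Lambda-t$ is positive, as each such $g_\lambda$ is strictly decreasing there), which is the claim. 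I expect the only delicate part to be this bookkeeping: keeping the nested domains $[0,\Lambda-\lambda+1]$ straight through the induction so each use of the product rule is legitimate, and verifying that the smooth real extension coincides with the combinatorial sequence exactly up to the edge $t = \Lambda-\lambda+1$ where $\binom{\Lambda}{t+\lambda}$ first vanishes, so that discrete convexity is not broken there.
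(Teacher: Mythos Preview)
Your proposal is correct and follows essentially the same route as the paper: decompose $f$ into the summands $K_\lambda\binom{\Lambda}{t+\lambda}/\binom{\Lambda}{t}$, write each ratio as a product of the elementary rational factors $(\Lambda-t-i)/(t+\lambda-i)$, and exploit that these factors are non-negative, decreasing and convex so that the product inherits the same properties. The only cosmetic difference is that the paper expands the second derivative directly via the general Leibniz rule and checks the sign of each summand over the $\lambda$-weak compositions of $2$, whereas you package the same computation as a two-factor induction step $(uv)''=u''v+2u'v'+uv''\ge 0$; the paper also infers monotonicity from convexity plus the endpoint values $f(0)\ge f(\Lambda)$ rather than from $(uv)'\le 0$, but these are equivalent bookkeeping choices.
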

\begin{proof}
  The proof is reported in \refappendix{app: Proof of the Strictly Decreasing Sequence Lemma}.
\end{proof}
Taking advantage of \Cref{lem: Strictly Decreasing Sequence}, we can write $\mathbb{E}[f(t)] \geq f(\mathbb{E}[t])$ using Jensen's inequality. Then, since $f(t)$ is decreasing with increasing $t \in [0 : \Lambda]$, we can further write $f(\mathbb{E}[t]) \geq f(\Lambda M/N)$ taking advantage of the fact that $\mathbb{E}[t]$ is upper bounded as in \eqref{eqn: Memory-Size Constraint 1}. Consequently, $\mathbb{E}[f(t)] \geq f(\Lambda M/N)$, and thus for $t = \Lambda M/N$ the optimal worst-case load $R^\star_{\text{comb}}$ is lower bounded by $R_{\text{comb}, \text{LB}}$ which is a piece-wise linear curve with corner points
\begin{equation}\label{eqn: Lower Bound}
  (M, R_{\text{comb}, \text{LB}}) = \left(t\frac{N}{\Lambda}, \sum_{\lambda = 0}^{\Lambda - t}K_\lambda \frac{\binom{\Lambda}{t + \lambda}}{\binom{\Lambda}{t}} \right), \quad \forall t \in [0 : \Lambda].
\end{equation}
Since $R_{\text{comb}, \text{LB}} = R_{\text{comb}, \text{UB}}$, we can state that the optimal worst-case load $R^\star_{\text{comb}}$ under uncoded placement is a piece-wise linear curve with the same corner points as in~\eqref{eqn: Lower Bound}. This concludes the proof.\qed

\section{Proof of \texorpdfstring{\Cref{thm: Converse on Optimal Average Worst-Case Load}}{Theorem~\ref{thm: Converse on Optimal Average Worst-Case Load}}}\label{sec: Proof of Converse on Optimal Average Worst-Case Load}

The proof follows along the lines of the index coding approach presented in~\Cref{sec: Converse Proof of MACC with Combinatorial Topology}. We remind that we are considering now the scenario where there are $K$ users and each of them is connected to exactly $\lambda$ caches for a fixed value of $\lambda \in [\Lambda]$, where the connectivity is not constrained by further restrictions and where the connectivity is not known in advance (i.e., it is not known during placement). We remind the reader that we are interested in the connectivity ensemble $\mathcal{B}_{\lambda}$ as a whole, and that we want to develop a converse on the optimal average worst-case load $R^{\star}_{\text{avg}, \mathcal{B}_\lambda}$ under uncoded and fixed cache placement. Furthermore, we recall that we assume each of these connectivities to be equiprobable and that for any $b \in \mathcal{B}_\lambda$ the number of users takes the form
\begin{equation}
    K = K'_\lambda = \sum_{\mathcal{U} \subseteq [\Lambda] : |\mathcal{U}| = \lambda} K_{\mathcal{U}, b}.
\end{equation}

\subsection{Constructing the Index Coding Bound}\label{sec: Constructing the Index Coding Bound}

The first step of the proof consists of dividing generically --- and independently of the connectivity $b \in \mathcal{B}_\lambda$ --- each file into a maximum of $2^\Lambda$ non-overlapping subfiles as in~\eqref{eqn: File Splitting}, recalling that $W_{n, \mathcal{T}}$ represents the subfile which is cached uniquely and exactly by the caches in $\mathcal{T}$. Then, similarly to how we proceeded in~\Cref{sec: Converse Proof of MACC with Combinatorial Topology}, we always assume the demand vector $\bm{d} = (0, \dots, 0, \bm{d}_\lambda, 0, \dots, 0)$ to have distinct entries. Given a connectivity $b \in \mathcal{B}_\lambda$, we consider the index coding problem with $K' = K$ users and $N' = K 2^{\Lambda - \lambda}$ independent messages. Notice that, since each user is connected to exactly $\lambda$ caches, the number of desired subfiles is exactly equal to $2^{\Lambda - \lambda}$, hence $N'$ is the total number of subfiles requested by the $K$ users. Recalling that $W_{d_{\mathcal{U}_{k}}}$ denotes the file requested by the user identified by $\mathcal{U}_{k}$, the desired message set and the side information set are respectively given by
\begin{align}
    \mathcal{M}_{\mathcal{U}_k} & = \{W_{d_{\mathcal{U}_{k}}, \mathcal{T}} : \mathcal{T} \subseteq [\Lambda] \setminus \mathcal{U}\} \\
    \mathcal{A}_{\mathcal{U}_k} & = \{W_{n, \mathcal{T}} : n \in [N], \mathcal{T} \subseteq [\Lambda], \mathcal{T} \cap \mathcal{U} \neq \emptyset\}
\end{align}
for each user $\mathcal{U}_k$ with $\mathcal{U} \subseteq [\Lambda]$, $|\mathcal{U}| = \lambda$ and $k \in [K_{\mathcal{U}, b}]$. The side information graph consists again of a directed graph where each vertex is a subfile, and where there is a connection from the subfile $W_{d_{\mathcal{P}_{k_1}}}$ to the subfile $W_{d_{\mathcal{Q}_{k_2}}}$ if and only if $W_{d_{\mathcal{P}_{k_1}}} \in \mathcal{A}_{\mathcal{Q}_{k_2}}$ with $\mathcal{P} \subseteq [\Lambda]$, $\mathcal{Q} \subseteq [\Lambda]$, $|\mathcal{P}| = |\mathcal{Q}| = \lambda$, $k_1 \in [K_{\mathcal{P}, b}]$, $k_2 \in [K_{\mathcal{Q}, b}]$ and $\mathcal{P}_{k_1} \neq \mathcal{Q}_{k_2}$. Since our aim is to apply \Cref{lem: Acyclic Subgraph Converse Bound}, we need to consider acyclic sets of vertices $\mathcal{J}$ in the side information graph. In the spirit of~\Cref{lem: Acyclic Lemma}, we know that now the set
\begin{equation}
    \bigcup_{i \in [\lambda : \Lambda]} \bigcup_{\substack{\mathcal{U}^i \subseteq \{c_1, \dots, c_i\} : |\mathcal{U}^i| = \lambda,\\ c_i \in \mathcal{U}^i}} \bigcup_{k \in [K_{\mathcal{U}^i, b}]} \bigcup_{\mathcal{T}_i \subseteq [\Lambda] \setminus \{c_1, \dots, c_i\}} \left\{ W_{d_{\mathcal{U}^i_k, \mathcal{T}_i}} \right\}
\end{equation}
is acyclic for any demand vector $\bm{d}$ and for any permutation of the $\Lambda$ caches represented by the vector $\bm{c} = (c_1, \dots, c_\Lambda)$. Applying \Cref{lem: Acyclic Subgraph Converse Bound} yields the following lower bound
\begin{equation}\label{eqn: Index Coding Lower Bound 2}
    BR^\star_{b} \geq \sum_{i \in [\lambda : \Lambda]} \sum_{\substack{\mathcal{U}^i \subseteq \{c_1, \dots, c_i\} : |\mathcal{U}^i| = \lambda,\\ c_i \in \mathcal{U}^i}} \sum_{k \in [K_{\mathcal{U}^i, b}]} \sum_{\mathcal{T}_i \subseteq [\Lambda] \setminus \{c_1, \dots, c_i\}} \left| W_{d_{\mathcal{U}^i_k, \mathcal{T}_i}}\right|
\end{equation}
where the term $R^\star_{b}$ represents the optimal worst-case load given the connectivity $b \in \mathcal{B}_\lambda$.

\subsection{Counting the Connectivities}

As it will be of use later, we proceed to count how many possible connectivities exist in $\mathcal{B}_\lambda$, recalling that such ensemble includes connectivities for which each user connects to exactly $\lambda$ caches. Toward this, if we let $x_\mathcal{U}$ be a non-negative integer value that represents the total number of users connected to the caches in $\mathcal{U}$, then the number of connectivities in $\mathcal{B}_\lambda$ is equal to the number of non-negative integer solutions of the equation
\begin{equation}
    \sum_{\mathcal{U} \subseteq [\Lambda] : |\mathcal{U}| = \lambda}x_\mathcal{U} = K.
\end{equation}
This is simply equal to the number of ways we can express the integer $K$ as the sum of a sequence of $\binom{\Lambda}{\lambda}$ non-negative integers. Hence, such number corresponds to the number of $\binom{\Lambda}{\lambda}$-weak compositions~\cite{Stanley2011EnumerativeCombinatorics} of the integer $K$ and is given by
\begin{equation}
    |\mathcal{B}_\lambda| = \binom{K + \binom{\Lambda}{\lambda} - 1}{K}.
\end{equation}
This highlights the fact that a connectivity $b \in \mathcal{B}_\lambda$ is nothing but a specific way to distribute $K$ users among $\binom{\Lambda}{\lambda}$ possible \emph{states}, where a \emph{state} represents a set of $\lambda$ caches which a user can be connected to.

\subsection{Constructing the Optimization Problem}

Our goal is to develop an information-theoretic converse on the optimal average worst-case load $R^\star_{\text{avg}, \mathcal{B}_\lambda}$ over the ensemble of connectivities $\mathcal{B}_\lambda$. As it will be of use later, we point out that the cache placement minimizing the average worst-case load does not necessarily minimize the worst-case load of each connectivity $b \in \mathcal{B}_\lambda$, which in turn implies that
\begin{equation}\label{eqn: Average Worst-Case Inequality 1}
    R^\star_{\text{avg}, \mathcal{B}_\lambda} \geq \frac{1}{\left| \mathcal{B}_\lambda \right|} \sum_{b \in \mathcal{B}_\lambda} R^\star_{b}.
\end{equation}

To proceed, we split the ensemble of interest as $\mathcal{B}_\lambda = \mathcal{B}_{\lambda, \text{a}} \cup \mathcal{B}_{\lambda, \text{b}}$, where $\mathcal{B}_{\lambda, \text{a}}$ is the set of connectivities for which all the $K$ users are connected to the same $\lambda$ caches and where $\mathcal{B}_{\lambda, \text{b}} = \mathcal{B}_{\lambda} \setminus \mathcal{B}_{\lambda, \text{a}}$. For each connectivity $b \in \mathcal{B}_{\lambda, \text{a}}$, we will create several bounds as the one in~\eqref{eqn: Index Coding Lower Bound 2}. To do so, we consider any vector $\bm{d} \in \mathcal{D}$ and for each such vector we employ $\Lambda!$ times the permutation vector $\bm{c}_{b}$ whose first $\lambda$ positions describe the indices of the caches to which the $K$ users are connected\footnote{For example, let $\Lambda = 7$ and $\lambda = 2$. If we consider the connectivity $b_1 \in \mathcal{B}_{\lambda, \text{a}}$ for which all $K$ users are connected to the caches $\{3, 5\}$, then we consider the permutation vector $\bm{c}_{b_1} = (3, 5, 1, 2, 4, 6, 7)$. Similarly, if we consider the connectivity $b_2 \in \mathcal{B}_{\lambda, \text{a}}$ for which all $K$ users are connected to the caches $\{4, 7\}$, then we consider the permutation vector $\bm{c}_{b_2} = (4, 7, 1, 2, 3, 5, 6)$. For each connectivity $b \in \mathcal{B}_{\lambda, \text{a}}$, the vector $\bm{c}_{b}$ is employed $\Lambda!$ times for each demand vector $\bm{d} \in \mathcal{D}$.} according to the connectivity $b \in \mathcal{B}_{\lambda, \text{a}}$. We assume that the first $\lambda$ elements in $\bm{c}_b$ are put in ascending order and then, similarly, the remaining $\Lambda - \lambda$ elements of the vector $\bm{c}_b$ are also placed in ascending order. Instead, for each connectivity $b \in \mathcal{B}_{\lambda, \text{b}}$, we will create several bounds as the one in~\eqref{eqn: Index Coding Lower Bound 2} by considering any demand vector $\bm{d} \in \mathcal{D}$ and any permutation vector $\bm{c} \in \mathcal{C}$. Our aim is then to average all these bounds to eventually obtain a useful lower bound on the optimal average worst-case load. Considering that we have $|\mathcal{D}| = \binom{N}{K}K!$ and $|\mathcal{C}| = \Lambda !$, we aim to simplify the expression given by
\begin{equation}
    \begin{split}
        \binom{N}{K}K!\Lambda! B \sum_{b \in \mathcal{B}_\lambda} R^\star_b & \geq \sum_{b \in \mathcal{B}_{\lambda, \text{a}}} \sum_{\bm{d} \in \mathcal{D}} \Lambda! \sum_{k \in [K]} \sum_{\mathcal{T}_i \subseteq [\Lambda] \setminus \{c_{b, 1}, \dots, c_{b, \lambda}\}} \left| W_{d_{\{c_{b, 1}, \dots, c_{b, \lambda}\}_k, \mathcal{T}_i}}\right| + \\
                                                                    & + \sum_{b \in \mathcal{B}_{\lambda, \text{b}}} \sum_{\bm{d} \in \mathcal{D}} \sum_{\bm{c} \in \mathcal{C}} \sum_{i \in [\lambda : \Lambda]} \sum_{\substack{\mathcal{U}^i \subseteq \{c_1, \dots, c_i\} : |\mathcal{U}^i| = \lambda,\\ c_i \in \mathcal{U}^i}} \sum_{k \in [K_{\mathcal{U}^i, b}]} \sum_{\mathcal{T}_i \subseteq [\Lambda] \setminus \{c_1, \dots, c_i\}} \left| W_{d_{\mathcal{U}^i_k, \mathcal{T}_i}}\right|    
    \end{split}
\end{equation}
which can be rewritten by means of~\eqref{eqn: Average Worst-Case Inequality 1} as
\begin{equation}\label{eqn: Complete Lower Bound 2}
    \begin{split}
        R^\star_{\text{avg}, \mathcal{B}_\lambda} & \geq \frac{1}{\binom{N}{K}K!\Lambda! \left| \mathcal{B}_\lambda \right| B } \Bigg( \sum_{b \in \mathcal{B}_{\lambda, \text{a}}} \sum_{\bm{d} \in \mathcal{D}} \Lambda! \sum_{k \in [K]} \sum_{\mathcal{T}_i \subseteq [\Lambda] \setminus \{c_{b, 1}, \dots, c_{b, \lambda}\}} \left| W_{d_{\{c_{b, 1}, \dots, c_{b, \lambda}\}_k, \mathcal{T}_i}}\right| + \\
                                                                & + \sum_{b \in \mathcal{B}_{\lambda, \text{b}}} \sum_{\bm{d} \in \mathcal{D}} \sum_{\bm{c} \in \mathcal{C}} \sum_{i \in [\lambda : \Lambda]} \sum_{\substack{\mathcal{U}^i \subseteq \{c_1, \dots, c_i\} : |\mathcal{U}^i| = \lambda,\\ c_i \in \mathcal{U}^i}} \sum_{k \in [K_{\mathcal{U}^i, b}]} \sum_{\mathcal{T}_i \subseteq [\Lambda] \setminus \{c_1, \dots, c_i\}} \left| W_{d_{\mathcal{U}^i_k, \mathcal{T}_i}}\right| \Bigg).
    \end{split}
\end{equation}
Our next step is to simplify the expression in~\eqref{eqn: Complete Lower Bound 2} and part of doing so involves counting how many times each subfile $W_{n, \mathcal{T}}$ --- for any $n \in [N]$, $\mathcal{T} \subseteq [\Lambda]$ and $|\mathcal{T}| = t$ for some $t \in [0 : \Lambda]$ --- appears in~\eqref{eqn: Complete Lower Bound 2}.

Consider the connectivities in $\mathcal{B}_{\lambda, \text{a}}$. Assume that the file $W_n$ is demanded by the user $\mathcal{U}_k$ for some $k \in [K]$ and for some $\mathcal{U} \subseteq [\Lambda] \setminus \mathcal{T}$ with $|\mathcal{U}| = \lambda$. Since this corresponds to the connectivity $b \in \mathcal{B}_{\lambda, \text{a}}$ for which all the users are associated to the caches in $\mathcal{U}$, we employ $\Lambda!$ consecutive times the permutation vector $\bm{c}_{b}$ having in the first $\lambda$ positions the elements in $\mathcal{U}$. Consequently, since here the elements in $\mathcal{T}$ are by construction after the elements in $\mathcal{U}$ in the vector $\bm{c}_{b}$, we can deduce that the subfile $W_{n, \mathcal{T}}$ is counted $\Lambda!$ times --- namely, once for each of the $\Lambda!$ times that the vector $\bm{c}_{b}$ is employed --- whenever the file $W_n$ is requested by user $\mathcal{U}_k$. Recalling that any file is requested by user $\mathcal{U}_{k}$ a total of $\binom{N}{K}K!/N$ times, the subfile $W_{n, \mathcal{T}}$ is counted $\Lambda!\binom{N}{K}K!/N$ within the set of demands for which user $\mathcal{U}_k$ requests the file $W_n$. The same reasoning follows for each $k \in [K]$ and for each $\mathcal{U} \subseteq [\Lambda] \setminus \mathcal{T}$ with $|\mathcal{U}| = \lambda$, hence the subfile $W_{n, \mathcal{T}}$ is counted
\begin{equation}
    K\binom{\Lambda - t}{\lambda}\Lambda!\frac{\binom{N}{K}K!}{N}
\end{equation}
times when we focus on connectivities in $\mathcal{B}_{\lambda, \text{a}}$.

Consider now a specific connectivity $b \in \mathcal{B}_{\lambda, \text{b}}$. Assume that the file $W_n$ is demanded by the user $\mathcal{U}_{k}$ for some $k \in [K_{\mathcal{U}, b}]$ and for some $\mathcal{U} \subseteq [\Lambda] \setminus \mathcal{T}$ with $|\mathcal{U}| = \lambda$, and recall once more that such file is requested by user $\mathcal{U}_{k}$ a total of $\binom{N}{K}K!/N$ times. Within the set of demand vectors for which user $\mathcal{U}_{k}$ requests the file $W_n$, the subfile $W_{n, \mathcal{T}}$ is counted only when the elements in the set $\mathcal{U}$ appear in the vector $\bm{c}$ before the elements in the set $\mathcal{T}$. Since there is a total of $\lambda!t!(\Lambda - \lambda - t)!\binom{\Lambda}{t + \lambda}$ such vectors $\bm{c}$, the subfile $W_{n, \mathcal{T}}$ is counted a total of $\lambda!t!(\Lambda - \lambda - t)!\binom{\Lambda}{t + \lambda}\binom{N}{K}K!/N$ times within the set of distinct demands for which user $\mathcal{U}_{k}$ requests the file $W_n$. The same reasoning follows for each $k \in [K_{\mathcal{U}, b}]$, hence the subfile $W_{n, \mathcal{T}}$ is counted a total of $K_{\mathcal{U}, b} \lambda!t!(\Lambda - \lambda - t)!\binom{\Lambda}{t + \lambda}\binom{N}{K}K!/N $ times within the set of distinct demands for which the file $W_n$ is requested by the users connected to the caches in $\mathcal{U}$ for a given connectivity $b \in \mathcal{B}_{\lambda, \text{b}}$. Now, considering that $\mathcal{B}_{\lambda, \text{b}} = \mathcal{B} \setminus \mathcal{B}_{\lambda, \text{a}}$, it holds that $K_{\mathcal{U}, b} \in [0 : K - 1]$. Moreover, we can easily count how many connectivities $b\in\mathcal{B}_{\lambda, \text{b}}$ exist for which $K_{\mathcal{U}, b} = K - i$ where $i \in [K]$. Indeed, this number of connectivities is equal to the number of non-negative integer solutions of the equation
\begin{equation}
    \sum_{\mathcal{U}' \subseteq [\Lambda]: |\mathcal{U}'| = \lambda, \mathcal{U}' \neq \mathcal{U}} x_{\mathcal{U}'} = i
\end{equation}
which is equal to the number of $\left(\binom{\Lambda}{\lambda} - 1\right)$-weak compositions of the integer $i$. Such number is given by
\begin{equation}
    \binom{i + \binom{\Lambda}{\lambda} - 2}{i}.
\end{equation}
At this point, when we consider all connectivities in $\mathcal{B}_{\lambda, \text{b}}$, there are $\binom{\binom{\Lambda}{\lambda} - 1}{1}$ connectivities with $x_{\mathcal{U}} = K - 1$, there are $\binom{\binom{\Lambda}{\lambda}}{2}$ connectivities with $x_{\mathcal{U}} = K - 2$, there are $\binom{\binom{\Lambda}{\lambda} + 1}{3}$ connectivities with $x_{\mathcal{U}} = K - 3$, and so on. In the end, if we go over all possible connectivities in $\mathcal{B}_{\lambda, \text{b}}$, the subfile $W_{n, \mathcal{T}}$ is counted a total of \begin{equation}
    \lambda!t!(\Lambda - \lambda - t)!\binom{\Lambda}{t + \lambda}\frac{\binom{N}{K}K!}{N} \sum_{i = 1}^{K - 1} (K - i)\binom{i + \binom{\Lambda}{\lambda} - 2}{i}
\end{equation}
times within the set of demands for which the file $W_n$ is requested by users connected to the caches in $\mathcal{U}$. The same reasoning applies for each $\mathcal{U} \subseteq [\Lambda] \setminus \mathcal{T}$ with $|\mathcal{U}| = \lambda$, hence the subfile $W_{n, \mathcal{T}}$ is counted a total of
\begin{equation}
    \lambda!t!(\Lambda - \lambda - t)!\binom{\Lambda}{t + \lambda}\binom{\Lambda - t}{\lambda}\frac{\binom{N}{K}K!}{N} \sum_{i = 1}^{K - 1} (K - i)\binom{i + \binom{\Lambda}{\lambda} - 2}{i}
\end{equation}
times when we focus on connectivities in $\mathcal{B}_{\lambda, \text{b}}$.

Now, if we focus on all connectivities in $\mathcal{B}_\lambda = \mathcal{B}_{\lambda, \text{a}} \cup \mathcal{B}_{\lambda, \text{b}}$, we can see that subfile $W_{n, \mathcal{T}}$ appears in~\eqref{eqn: Complete Lower Bound 2} a total of 
\begin{equation}
    K\binom{\Lambda - t}{\lambda}\Lambda!\frac{\binom{N}{K}K!}{N} + \Lambda! \frac{\binom{\Lambda}{t + \lambda}}{\binom{\Lambda}{t}}\frac{\binom{N}{K}K!}{N} \sum_{i = 1}^{K - 1} (K - i)\binom{i + \binom{\Lambda}{\lambda} - 2}{i}
\end{equation}
times. The same reasoning follows for any $n \in [N]$ and for any $\mathcal{T} \subseteq [\Lambda]$ with $|\mathcal{T}| = t$. Thus, the expression in~\eqref{eqn: Complete Lower Bound 2} can be rewritten after some simplifications as
\begin{align}
    R^\star_{\text{avg}, \mathcal{B}_\lambda} & \geq \sum_{t = 0}^{\Lambda - \lambda + 1} \frac{1}{\left| \mathcal{B}_\lambda \right|}  \left( K\binom{\Lambda - t}{\lambda} + \frac{\binom{\Lambda}{t + \lambda}}{\binom{\Lambda}{t}} \sum_{i = 1}^{K - 1} (K - i)\binom{i + \binom{\Lambda}{\lambda} - 2}{i} \right) x_t \\
                                              & = \sum_{t = 0}^{\Lambda - \lambda + 1} f(t) x_t
\end{align}
where we define
\begin{align}
    f(t) & \coloneqq \frac{1}{\left| \mathcal{B}_\lambda \right|}  \left( K\binom{\Lambda - t}{\lambda} + \frac{\binom{\Lambda}{t + \lambda}}{\binom{\Lambda}{t}} \sum_{i = 1}^{K - 1} (K - i)\binom{i + \binom{\Lambda}{\lambda} - 2}{i} \right) \label{eqn: Objective Function 1} \\
    0 \leq x_t & \coloneqq \sum_{n \in [N]} \sum_{\mathcal{T} \subseteq [\Lambda] : |\mathcal{T}| = t} \frac{\left| W_{n, \mathcal{T}} \right|}{NB}.
\end{align}

At this point, we seek to lower bound the minimum worst-case load $R^\star_{\text{avg}, \mathcal{B}_\lambda}$ by solving the following optimization problem
\begin{subequations}\label{eqn: Optimization Problem 2}
  \begin{alignat}{2}
    & \min_{\bm{x}}  & \quad & \sum_{t = 0}^{\Lambda - \lambda + 1} f(t) x_{t} \\
    & \text{subject to}  &  & \sum_{t = 0}^{\Lambda} x_{t} = 1 \label{eqn: File-Size Constraint 2} \\
    & & & \sum_{t = 0}^{\Lambda} t x_{t} \leq \frac{\Lambda M}{N} \label{eqn: Memory-Size Constraint 2}
  \end{alignat}
\end{subequations}
where \eqref{eqn: File-Size Constraint 2} and \eqref{eqn: Memory-Size Constraint 2} correspond to the file-size constraint and the cumulative cache-size constraint, respectively.

\subsection{Solving the Optimization Problem}

Before proceeding to solve the optimization problem, we simplify the expression $f(t)$ as follows
\begin{align}
    f(t) & = \frac{K}{\left| \mathcal{B}_\lambda \right|} \binom{\Lambda - t}{\lambda} + \frac{1}{\left| \mathcal{B}_\lambda \right|} \frac{\binom{\Lambda}{t + \lambda}}{\binom{\Lambda}{t}} \sum_{i = 1}^{K - 1} (K - i)\binom{i + \binom{\Lambda}{\lambda} - 2}{i} \\
         & = \frac{K}{\left| \mathcal{B}_\lambda \right|} \binom{\Lambda - t}{\lambda} - \frac{K}{\left| \mathcal{B}_\lambda \right|} \frac{\binom{\Lambda}{t + \lambda}}{\binom{\Lambda}{t}} + \frac{1}{\left| \mathcal{B}_\lambda \right|} \frac{\binom{\Lambda}{t + \lambda}}{\binom{\Lambda}{t}} \sum_{i = 0}^{K - 1} (K - i)\binom{i + \binom{\Lambda}{\lambda} - 2}{i} \\
         & = \frac{K}{\left| \mathcal{B}_\lambda \right|} \binom{\Lambda - t}{\lambda} \left( 1 - \frac{\binom{\Lambda}{t + \lambda}}{\binom{\Lambda - t}{\lambda}\binom{\Lambda}{t}} \right) + \frac{1}{\left| \mathcal{B}_\lambda \right|} \frac{\binom{\Lambda}{t + \lambda}}{\binom{\Lambda}{t}} \sum_{i = 0}^{K - 1} (K - i)\binom{i + \binom{\Lambda}{\lambda} - 2}{i} \\
         & = \frac{K}{\left| \mathcal{B}_\lambda \right|} \binom{\Lambda - t}{\lambda} \left( 1 - \frac{1}{\binom{t + \lambda}{\lambda}} \right) + \frac{1}{\left| \mathcal{B}_\lambda \right|} \frac{\binom{\Lambda}{t + \lambda}}{\binom{\Lambda}{t}} \sum_{i = 0}^{K - 1} (K - i)\binom{i + \binom{\Lambda}{\lambda} - 2}{i} \\
         & = A_t + \frac{1}{\left| \mathcal{B}_\lambda \right|} \frac{\binom{\Lambda}{t + \lambda}}{\binom{\Lambda}{t}} \sum_{i = 0}^{K - 1} (K - i)\binom{i + \binom{\Lambda}{\lambda} - 2}{i}
\end{align}
where we define
\begin{equation}
    A_t \coloneqq \frac{K}{\left| \mathcal{B}_\lambda \right|} \binom{\Lambda - t}{\lambda} \left( 1 - \frac{1}{\binom{t + \lambda}{\lambda}} \right).
\end{equation}
Now we can further simplify $f(t)$ as follows
\begin{align}
    f(t) & = A_t + \frac{1}{\left| \mathcal{B}_\lambda \right|} \frac{\binom{\Lambda}{t + \lambda}}{\binom{\Lambda}{t}} \sum_{i = 0}^{K - 1} (K - i)\binom{i + \binom{\Lambda}{\lambda} - 2}{i} \\
         & = A_t + \frac{1}{\left| \mathcal{B}_\lambda \right|}\frac{\binom{\Lambda}{t + \lambda}}{\binom{\Lambda}{t}}  \left(K \sum_{i = 0}^{K - 1} \binom{i + \binom{\Lambda}{\lambda} - 2}{i} - \sum_{i = 0}^{K - 1} i\binom{i + \binom{\Lambda}{\lambda} - 2}{i} \right) \\
         & = A_t + \frac{1}{\left| \mathcal{B}_\lambda \right|}\frac{\binom{\Lambda}{t + \lambda}}{\binom{\Lambda}{t}} \left(K \binom{K + \binom{\Lambda}{\lambda} - 2}{K - 1} - \left( \binom{\Lambda}{\lambda} - 1 \right) \sum_{i = 0}^{K - 2} \binom{i + \binom{\Lambda}{\lambda} - 2}{i} \right) \label{eqn: Application of Hockey-Stick Identity 1} \\
         & = A_t + \frac{1}{\left| \mathcal{B}_\lambda \right|}\frac{\binom{\Lambda}{t + \lambda}}{\binom{\Lambda}{t}} \left(\frac{K^2}{\binom{\Lambda}{\lambda} - 1} \binom{K + \binom{\Lambda}{\lambda} - 2}{K} - \left( \binom{\Lambda}{\lambda} - 1 \right) \binom{K + \binom{\Lambda}{\lambda} - 2}{K - 2} \right) \label{eqn: Application of Hockey-Stick Identity 2} \\
         & = A_t + \frac{1}{\left| \mathcal{B}_\lambda \right|}\frac{\binom{\Lambda}{t + \lambda}}{\binom{\Lambda}{t}} \left(\frac{K^2}{\binom{\Lambda}{\lambda} - 1} \binom{K + \binom{\Lambda}{\lambda} - 2}{K} - \frac{K(K - 1)}{\binom{\Lambda}{\lambda}} \binom{K + \binom{\Lambda}{\lambda} - 2}{K} \right) \\
         & = A_t + \frac{1}{\left| \mathcal{B}_\lambda \right|}\frac{\binom{\Lambda}{t + \lambda}}{\binom{\Lambda}{t}} \frac{K}{\binom{\Lambda}{\lambda}} \binom{K + \binom{\Lambda}{\lambda} - 1}{K} \\
         & = \frac{K\binom{\Lambda}{t + \lambda}}{\binom{\Lambda}{\lambda}\binom{\Lambda}{t}} + A_t
\end{align}
where we used in~\eqref{eqn: Application of Hockey-Stick Identity 1} and in~\eqref{eqn: Application of Hockey-Stick Identity 2} the well-known hockey-stick identity which states that
\begin{equation}
    \sum_{i = k}^{n} \binom{i}{k} = \binom{n + 1}{k + 1}, \quad \forall n, k \in \mathbb{Z}^+, \quad n \geq k.
\end{equation}
At this point, we can rewrite $f(t)$ as
\begin{equation}
    f(t) = \frac{K\binom{\Lambda}{t + \lambda}}{\binom{\Lambda}{\lambda}\binom{\Lambda}{t}} + A_t.
\end{equation}

Now, since the auxiliary variable $x_t$ can be considered once again as a probability mass function, the optimization problem in~\eqref{eqn: Optimization Problem 2} can be seen as the minimization of $\mathbb{E}[f(t)]$. Moreover, the function $f(t)$ can be rewritten also as
\begin{equation}
    f(t) = K\frac{\left| \mathcal{B}_\lambda \right| - \binom{\Lambda}{\lambda}}{\binom{\Lambda}{\lambda}\left| \mathcal{B}_\lambda \right|} \frac{\binom{\Lambda}{t + \lambda}}{\binom{\Lambda}{t}} + \frac{K}{\left| \mathcal{B}_\lambda \right|}\binom{\Lambda - t}{\lambda}.
\end{equation}
Now, from \Cref{lem: Strictly Decreasing Sequence} it follows that the term $\binom{\Lambda}{t + \lambda}/\binom{\Lambda}{t}$ is convex and decreasing in $t$; then, it can be easily checked\footnote{This can be done following the same reasoning presented in~\refappendix{app: Proof of the Strictly Decreasing Sequence Lemma}, after writing down the combinatorial coefficient as a finite product and using the general Leibniz rule to show that its second derivative is non-negative.} that the term $\binom{\Lambda - t}{\lambda}$ is convex and decreasing as well; and finally, it holds that $\left| \mathcal{B}_\lambda \right| \geq \binom{\Lambda}{\lambda}$. Hence, $f(t)$ is a non-negative linear combination of convex and decreasing functions and so it is convex and strictly decreasing for increasing $t$. Consequently, by applying Jensen's inequality we can write $\mathbb{E}[f(t)] \geq f(\mathbb{E}[t]) \geq f(\Lambda M/N)$ after also considering that $\mathbb{E}[t]$ is upper bounded as in~\eqref{eqn: Memory-Size Constraint 2}. Thus, for $t = \Lambda M/N$ and since $K = K'_\lambda$, the optimal average worst-case load $R^\star_{\text{avg}, \mathcal{B}_\lambda}$ is lower bounded by $R_{\text{avg}, \mathcal{B}_\lambda, \text{LB}}$ which is a piece-wise linear curve with corner points
\begin{equation}
    (M, R_{\text{avg}, \mathcal{B}_\lambda, \text{LB}}) = \left(t\frac{N}{\Lambda},  \frac{K'_\lambda \binom{\Lambda}{t + \lambda}}{\binom{\Lambda}{\lambda}\binom{\Lambda}{t}} + A_t \right), \quad \forall t \in [0 : \Lambda - \lambda + 1].
\end{equation}
This concludes the proof.\qed

\section{Proof of \texorpdfstring{\Cref{thm: Converse on General Optimal Average Worst-Case Load}}{Theorem~\ref{thm: Converse on General Optimal Average Worst-Case Load}}}\label{sec: Proof of Converse on General Optimal Average Worst-Case Load}

Since the proof follows the structure of the proof in~\Cref{sec: Proof of Converse on Optimal Average Worst-Case Load}, we will make an effort to avoid repetitions when possible. In this setting, we focus on the ensemble $\mathcal{B}$, which captures all possible connectivities. Here, each of the $K$ users can connect to any set of caches without any specific constraint or structure. Our interest is treating the ensemble as a whole.

\subsection{Constructing the Index Coding Bound}

The mapping of the caching problem to the index coding problem is identical to what we presented in~\Cref{sec: Converse Proof of MACC with Combinatorial Topology} and in~\Cref{sec: Proof of Converse on Optimal Average Worst-Case Load}, hence we directly apply~\Cref{lem: Acyclic Subgraph Converse Bound} using the acyclic set presented in~\Cref{lem: Acyclic Lemma}, obtaining the bound
\begin{equation}\label{eqn: Index Coding Lower Bound 3}
    BR^\star_{b} \geq \sum_{k \in [K_{\emptyset, b}]} \sum_{\mathcal{T} \subseteq [\Lambda]} \left| W_{d_{\emptyset_k, \mathcal{T}}} \right| + \sum_{\lambda \in [\Lambda]} \sum_{i \in [\lambda : \Lambda]} \sum_{\substack{\mathcal{U}^i \subseteq \{c_1, \dots, c_i\} : |\mathcal{U}^i| = \lambda,\\ c_i \in \mathcal{U}^i}} \sum_{k \in [K_{\mathcal{U}^i, b}]} \sum_{\mathcal{T}_i \subseteq [\Lambda] \setminus \{c_1, \dots, c_i\}} \left| W_{d_{\mathcal{U}^i_k, \mathcal{T}_i}} \right|
\end{equation}
for a given connectivity $b \in \mathcal{B}$.

\subsection{Counting the Connectivities}

We are considering the set of connectivities $\mathcal{B}$ and we recall that there are $K$ users. Recall that $K$ is naturally fixed throughout the connectivites, and that consequently it holds that
\begin{equation}
    \sum_{\mathcal{U} \subseteq [\Lambda]} K_{\mathcal{U}, b} = K
\end{equation}
for each $b \in \mathcal{B}$. Toward evaluating the cardinality of the set $\mathcal{B}$, if we let $x_\mathcal{U}$ be a non-negative integer value counting the total number of users connected to the caches in $\mathcal{U}$, then the number of connectitivites in $\mathcal{B}$ is equal to the number of non-negative integer solutions of the equation
\begin{equation}
    \sum_{\mathcal{U} \subseteq [\Lambda]} x_\mathcal{U} = K.
\end{equation}
In this case, this is simply equal to the number of $2^{\Lambda}$-weak compositions~\cite{Stanley2011EnumerativeCombinatorics} of the integer $K$, which is given by
\begin{equation}
    \left| \mathcal{B} \right| = \binom{K + 2^{\Lambda} - 1}{K}.
\end{equation}
Similarly to what we already observed when counting the connectivities in $\mathcal{B}_\lambda$ for the proof of~\Cref{thm: Converse on Optimal Average Worst-Case Load} in~\Cref{sec: Proof of Converse on Optimal Average Worst-Case Load}, now a connectivity $b \in \mathcal{B}$ can be seen as a way to distribute $K$ users among $2^{\Lambda}$ possible states, where each state is an element of the power set of $[\Lambda]$, i.e., a set of $\lambda$ caches --- for some $\lambda \in [0 : \Lambda]$ --- which a user can be connected to.

\subsection{Constructing the Optimization Problem}

Toward building a converse bound on the optimal average worst-case load $R^\star_{\text{avg}, \mathcal{B}}$ over the connectivities in $\mathcal{B}$, we recall that as before we have
\begin{equation}\label{eqn: Average Worst-Case Inequality 2}
    R^\star_{\text{avg}, \mathcal{B}} \geq \frac{1}{\left| \mathcal{B} \right|} \sum_{b \in \mathcal{B}} R^\star_{b}.
\end{equation}

Let us split again the ensemble of interest as $\mathcal{B} = \mathcal{B}_{\text{a}} \cup \mathcal{B}_{\text{b}}$, where $\mathcal{B}_{\text{a}}$ is the set of connectivities for which all the $K$ users are connected to the same set of caches and where $\mathcal{B}_{\text{b}} = \mathcal{B} \setminus \mathcal{B}_{\text{a}}$. For each connectivity $b \in \mathcal{B}_{\text{a}}$, we will create several bounds as the one in~\eqref{eqn: Index Coding Lower Bound 3}. Toward this, we consider any vector $\bm{d} \in \mathcal{D}$ and for each such vector we employ $\Lambda!$ times the permutation vector $\bm{c}_b$ whose first $\lambda$ positions\footnote{Notice that it holds that $\lambda \in [0 : \Lambda]$, since we are considering the most general ensemble of connectivities.} hold the indices of the caches to which the $K$ users are connected according to the connectivity $b \in \mathcal{B}_{\text{a}}$. Also in this case, we assume that these first $\lambda$ positions in $\bm{c}_b$ are put in ascending order and that the same holds, separately, for the remaining $\Lambda - \lambda$ elements. On the other hand, for each connectivity $b \in \mathcal{B}_{\text{b}}$ we will create several lower bounds as the one in~\eqref{eqn: Index Coding Lower Bound 3} by considering any demand vector $\bm{d} \in \mathcal{D}$ and any permutation vector $\bm{c} \in \mathcal{C}$. Our aim is then to take the average of all such bounds to eventually obtain a useful lower bound on the optimal average worst-case load across connectivities in $\mathcal{B}$. Noticing that it holds that $\mathcal{B}_{\text{a}} = \bigcup_{\lambda = 0}^{\Lambda} \mathcal{B}_{\lambda, \text{a}}$ and recalling that we have $|\mathcal{D}| = \binom{N}{K}K!$ and $|\mathcal{C}| = \Lambda!$, we aim to simplify the expression given by
\begin{equation}
    \begin{split}
        \binom{N}{K}K!\Lambda! B \sum_{b \in \mathcal{B}} R^\star_b & \geq \sum_{\lambda = 0}^{\Lambda} \sum_{b \in \mathcal{B}_{\lambda, \text{a}}} \sum_{\bm{d} \in \mathcal{D}} \Lambda! \sum_{k \in [K]} \sum_{\mathcal{T}_i \subseteq [\Lambda] \setminus \{c_{b, 1}, \dots, c_{b, \lambda}\}} \left| W_{d_{\{c_{b, 1}, \dots, c_{b, \lambda}\}_k, \mathcal{T}_i}}\right| + \\
                                                                    & + \sum_{b \in \mathcal{B}_{\text{b}}} \sum_{\bm{d} \in \mathcal{D}} \sum_{\bm{c} \in \mathcal{C}} \Bigg(\sum_{k \in [K_{\emptyset, b}]} \sum_{\mathcal{T} \subseteq [\Lambda]} \left| W_{d_{\emptyset_k, \mathcal{T}}} \right| + \\ 
                                                      & + \sum_{\lambda \in [\Lambda]} \sum_{i \in [\lambda : \Lambda]} \sum_{\substack{\mathcal{U}^i \subseteq \{c_1, \dots, c_i\} : |\mathcal{U}^i| = \lambda, \\ c_i \in \mathcal{U}^i}} \sum_{k \in [K_{\mathcal{U}^i, b}]} \sum_{\mathcal{T}_i \subseteq [\Lambda] \setminus \{c_1, \dots, c_i\}} \left| W_{d_{\mathcal{U}^i_k, \mathcal{T}_i}} \right|\Bigg)  
    \end{split}
\end{equation}
which can be rewritten by means of~\eqref{eqn: Average Worst-Case Inequality 2} as
\begin{equation}\label{eqn: Complete Lower Bound 3}
    \begin{split}
        R^\star_{\text{avg}, \mathcal{B}} & \geq \frac{1}{\binom{N}{K}K!\Lambda! \left| \mathcal{B} \right| B } \Bigg[ \sum_{\lambda = 0}^{\Lambda} \sum_{b \in \mathcal{B}_{\lambda, \text{a}}} \sum_{\bm{d} \in \mathcal{D}} \Lambda! \sum_{k \in [K]} \sum_{\mathcal{T}_i \subseteq [\Lambda] \setminus \{c_{b, 1}, \dots, c_{b, \lambda}\}} \left| W_{d_{\{c_{b, 1}, \dots, c_{b, \lambda}\}_k, \mathcal{T}_i}}\right| + \\
                                                                & + \sum_{b \in \mathcal{B}_{\text{b}}} \sum_{\bm{d} \in \mathcal{D}} \sum_{\bm{c} \in \mathcal{C}} \Bigg(\sum_{k \in [K_{\emptyset, b}]} \sum_{\mathcal{T} \subseteq [\Lambda]} \left| W_{d_{\emptyset_k, \mathcal{T}}} \right| + \\ 
                                                      & + \sum_{\lambda \in [\Lambda]} \sum_{i \in [\lambda : \Lambda]} \sum_{\substack{\mathcal{U}^i \subseteq \{c_1, \dots, c_i\} : |\mathcal{U}^i| = \lambda, \\ c_i \in \mathcal{U}^i}} \sum_{k \in [K_{\mathcal{U}^i, b}]} \sum_{\mathcal{T}_i \subseteq [\Lambda] \setminus \{c_1, \dots, c_i\}} \left| W_{d_{\mathcal{U}^i_k, \mathcal{T}_i}} \right|\Bigg)\Bigg].
    \end{split}
\end{equation}
The next step is to simplify the expression in~\eqref{eqn: Complete Lower Bound 3}. Toward this, we proceed to count how many times each subfile $W_{n, \mathcal{T}}$ --- for any given $n \in [N]$, $\mathcal{T} \subseteq [\Lambda]$ and $|\mathcal{T}| = t$ for some $t \in [0 : \Lambda]$ --- appears in~\eqref{eqn: Complete Lower Bound 3}.

Consider the connectivities in $\mathcal{B}_{\text{a}}$. Assume that the file $W_n$ is demanded by the user $\mathcal{U}_k$ for some $k \in [K]$, for some $\mathcal{U} \subseteq [\Lambda] \setminus \mathcal{T}$ with $|\mathcal{U}| = \lambda$ and for some $\lambda \in [0 : \Lambda - t]$. Since this corresponds to the connectivity $b \in \mathcal{B}_{\text{a}}$ for which all the users are associated to the caches in $\mathcal{U}$, for each demand vector for which the file $W_n$ is requested by user $\mathcal{U}_k$ we employ $\Lambda!$ consecutive times the permutation vector\footnote{For such vector $\bm{c}_b$, we recall that the first $\lambda$ elements correspond to the elements in $\mathcal{U}$. We recall that these first $\lambda$ elements are arranged in ascending order as well as, separately, the remaining $\Lambda - \lambda$ elements.} $\bm{c}_{b}$. Since here, by construction, the elements in $\mathcal{T}$ are after the elements in $\mathcal{U}$ in the vector $\bm{c}_{b}$, we can deduce that the subfile $W_{n, \mathcal{T}}$ is counted $\Lambda!$ times --- namely, once for each of the $\Lambda!$ times that the vector $\bm{c}_{b}$ is employed --- whenever the file $W_n$ is requested by user $\mathcal{U}_k$. Recalling that any file is requested by user $\mathcal{U}_{k}$ a total of $\binom{N}{K}K!/N$ times, the subfile $W_{n, \mathcal{T}}$ is counted $\Lambda!\binom{N}{K}K!/N$ within the set of demands for which user $\mathcal{U}_k$ requests the file $W_n$. The same reasoning follows for each $k \in [K]$, for each $\mathcal{U} \subseteq [\Lambda] \setminus \mathcal{T}$ with $|\mathcal{U}| = \lambda$ and for each $\lambda \in [0 : \Lambda - t]$, hence the subfile $W_{n, \mathcal{T}}$ is counted
\begin{equation}
    \sum_{\lambda = 0}^{\Lambda - t} K\binom{\Lambda - t}{\lambda}\Lambda!\frac{\binom{N}{K}K!}{N}
\end{equation}
times across the set of connectivities in $\mathcal{B}_{\text{a}}$.

Let us now consider a specific connectivity $b \in \mathcal{B}_{\text{b}}$. Assume that the file $W_n$ is demanded by the user $\mathcal{U}_{k}$ for some $k \in [K_{\mathcal{U}, b}]$, for some $\mathcal{U} \subseteq [\Lambda] \setminus \mathcal{T}$ with $|\mathcal{U}| = \lambda$ and for some $\lambda \in [0 : \Lambda - t]$, recalling once again that such file is requested by user $\mathcal{U}_{k}$ a total of $\binom{N}{K}K!/N$ times. Within the set of demands for which user $\mathcal{U}_{k}$ requests the file $W_n$, the subfile $W_{n, \mathcal{T}}$ is counted only when the elements in the set $\mathcal{U}$ appear in the vector $\bm{c}$ before the elements in the set $\mathcal{T}$. Since there is a total of $\lambda!t!(\Lambda - \lambda - t)!\binom{\Lambda}{t + \lambda}$ such vectors $\bm{c}$, the subfile $W_{n, \mathcal{T}}$ is counted a total of $\lambda!t!(\Lambda - \lambda - t)!\binom{\Lambda}{t + \lambda}\binom{N}{K}K!/N$ times within the set of distinct demands for which user $\mathcal{U}_{k}$ requests the file $W_n$. The same reasoning follows for each $k \in [K_{\mathcal{U}, b}]$, hence the subfile $W_{n, \mathcal{T}}$ is counted a total of $K_{\mathcal{U}, b} \lambda!t!(\Lambda - \lambda - t)!\binom{\Lambda}{t + \lambda}\binom{N}{K}K!/N $ times within the set of distinct demands for which the file $W_n$ is requested by the users connected to the caches in $\mathcal{U}$ for a given connectivity $b \in \mathcal{B}$. Now, recalling that $\mathcal{B}_{\text{b}} = \mathcal{B} \setminus \mathcal{B}_{\text{a}}$, it holds that $K_{\mathcal{U}, b} \in [0 : K - 1]$. Moreover, we can easily count how many connectivities $b \in \mathcal{B}_{\text{b}}$ exist with $K_{\mathcal{U}, b} = K - i$ where $i \in [K]$. Indeed, this number of connectivities is equal to the number of non-negative integer solutions of the equation
\begin{equation}
    \sum_{\mathcal{U}' \subseteq [\Lambda]: \mathcal{U}' \neq \mathcal{U}} x_{\mathcal{U}'} = i
\end{equation}
which is equal to the number of $\left(2^{\Lambda} - 1\right)$-weak compositions of the integer $i$. Such number is given by
\begin{equation}
    \binom{i + 2^{\Lambda} - 2}{i}.
\end{equation}
Hence, going over all connectivities in $\mathcal{B}_{\text{b}}$, the subfile $W_{n, \mathcal{T}}$ is counted a total of \begin{equation}
    \lambda!t!(\Lambda - \lambda - t)!\binom{\Lambda}{t + \lambda}\frac{\binom{N}{K}K!}{N} \sum_{i = 1}^{K - 1} (K - i)\binom{i + 2^{\Lambda} - 2}{i}
\end{equation}
times within the set of distinct demands for which the file $W_n$ is requested by users connected to the $\lambda$ caches in $\mathcal{U}$. The same reasoning applies for each $\mathcal{U} \subseteq [\Lambda] \setminus \mathcal{T}$ with $|\mathcal{U}| = \lambda$ and for each $\lambda \in [0 : \Lambda - t]$, and thus the subfile $W_{n, \mathcal{T}}$ is counted a total of
\begin{equation}
    \sum_{\lambda = 0}^{\Lambda - t} \lambda!t!(\Lambda - \lambda - t)!\binom{\Lambda}{t + \lambda}\binom{\Lambda - t}{\lambda}\frac{\binom{N}{K}K!}{N} \sum_{i = 1}^{K - 1} (K - i)\binom{i + 2^{\Lambda} - 2}{i}
\end{equation}
times when we span across connectivities in $\mathcal{B}_{\text{b}}$.

Now, going through all connectivities in $\mathcal{B} = \mathcal{B}_{\text{a}} \cup \mathcal{B}_{\text{b}}$, we can see that the subfile $W_{n, \mathcal{T}}$ appears in~\eqref{eqn: Complete Lower Bound 3} a total of 
\begin{equation}
    \sum_{\lambda = 0}^{\Lambda - t} K\binom{\Lambda - t}{\lambda}\Lambda!\frac{\binom{N}{K}K!}{N} + \Lambda! \frac{\binom{\Lambda}{t + \lambda}}{\binom{\Lambda}{t}} \frac{\binom{N}{K}K!}{N} \sum_{i = 1}^{K - 1} (K - i)\binom{i + 2^{\Lambda} - 2}{i}
\end{equation}
times. The same reasoning follows for any $n \in [N]$ and for any $\mathcal{T} \subseteq [\Lambda]$ with $|\mathcal{T}| = t$. Thus, the expression in~\eqref{eqn: Complete Lower Bound 3} can be rewritten after some simplifications as
\begin{align}
    R^\star_{\text{avg}, \mathcal{B}} & \geq \sum_{t = 0}^{\Lambda} \sum_{\lambda = 0}^{\Lambda - t} \frac{1}{|\mathcal{B}|} \left( K\binom{\Lambda - t}{\lambda} + \frac{\binom{\Lambda}{t + \lambda}}{\binom{\Lambda}{t}} \sum_{i = 1}^{K - 1} (K - i)\binom{i + 2^{\Lambda} - 2}{i} \right) x_t \\
                                      & = \sum_{t = 0}^{\Lambda} f(t) x_t
\end{align}
where we define
\begin{align}
    f(t) & \coloneqq \sum_{\lambda = 0}^{\Lambda - t} \frac{1}{|\mathcal{B}|} \left( K\binom{\Lambda - t}{\lambda} + \frac{\binom{\Lambda}{t + \lambda}}{\binom{\Lambda}{t}} \sum_{i = 1}^{K - 1} (K - i)\binom{i + 2^{\Lambda} - 2}{i} \right) \label{eqn: Objective Function 2} \\
    0 \leq x_t & \coloneqq \sum_{n \in [N]} \sum_{\mathcal{T} \subseteq [\Lambda] : |\mathcal{T}| = t} \frac{\left| W_{n, \mathcal{T}} \right|}{NB}.
\end{align}

At this point, formulating the optimization problem as in~\eqref{eqn: Optimization Problem 1} and as in~\eqref{eqn: Optimization Problem 2}, we seek to solve the following
\begin{subequations}
  \begin{alignat}{2}
    & \min_{\bm{x}}  & \quad & \sum_{t = 0}^{\Lambda} f(t) x_{t} \\
    & \text{subject to}  &  & \sum_{t = 0}^{\Lambda} x_{t} = 1 \\
    & & & \sum_{t = 0}^{\Lambda} t x_{t} \leq \frac{\Lambda M}{N}.
  \end{alignat}
\end{subequations}

\subsection{Solving the Optimization Problem}

Before proceeding to solve the optimization problem, we aim to simplify the expression $f(t)$. Toward this, we notice that the only difference between the expression in~\eqref{eqn: Objective Function 1} and the expression in~\eqref{eqn: Objective Function 2} is that the latter is a summation over $\lambda$ of $\Lambda - t + 1$ terms, where each of these terms is the same as the expression in~\eqref{eqn: Objective Function 1} except for the considered ensemble $\mathcal{B}$ and for the term $2^{\Lambda}$ in place of $\binom{\Lambda}{\lambda}$ in the binomial coefficient which appears in the summation over $i$. This latter difference does not change the simplification steps presented in the previous proof for what concerns each summand in~\eqref{eqn: Objective Function 2}, hence we can directly conclude that $f(t)$ can be simplified as
\begin{align}
    f(t) & = \sum_{\lambda = 0}^{\Lambda - t}\frac{K}{| \mathcal{B} |} \binom{\Lambda - t}{\lambda} + \frac{1}{| \mathcal{B} |} \frac{\binom{\Lambda}{t + \lambda}}{\binom{\Lambda}{t}} \sum_{i = 1}^{K - 1} (K - i)\binom{i + 2^{\Lambda} - 2}{i} \\
         & =\sum_{\lambda = 0}^{\Lambda - t} \frac{K\binom{\Lambda}{t + \lambda}}{2^{\Lambda}\binom{\Lambda}{t}} + A_{t, \lambda}
\end{align}
where we define
\begin{equation}
    A_{t, \lambda} \coloneqq \frac{K}{| \mathcal{B} |} \binom{\Lambda - t}{\lambda} \left( 1 - \frac{1}{\binom{t + \lambda}{\lambda}} \right).
\end{equation}

Now, it can be shown that each summand 
\begin{equation}
    \frac{K\binom{\Lambda}{t + \lambda}}{2^{\Lambda}\binom{\Lambda}{t}} + A_{t, \lambda}
\end{equation}
is convex and decrasing for fixed value of $\lambda$, as we did previously in~\Cref{sec: Proof of Converse on Optimal Average Worst-Case Load}. Hence, we can conclude also in this case that the function $f(t)$ is a non-negative linear combination of convex and decreasing functions, and thus we can solve the optimization problem again applying Jensen's inequality and using the memory-size constraint. Consequently, for $t = \Lambda M/N$, following the same steps as in the proof of~\Cref{sec: Proof of Converse on Optimal Average Worst-Case Load}, we can now conclude that the optimal average worst-case load $\mathcal{B}$ is lower bounded by $R_{\text{avg}, \mathcal{B}, \text{LB}}$ which is a piece-wise linear curve with corner points
\begin{equation}
    (M, R_{\text{avg}, \mathcal{B}, \text{LB}}) = \left(t \frac{N}{\Lambda}, \sum_{\lambda = 0}^{\Lambda - t} \frac{K \binom{\Lambda}{t + \lambda}}{2^{\Lambda}\binom{\Lambda}{t}} + A_{t, \lambda} \right), \quad \forall t \in [0 : \Lambda]
\end{equation}
which concludes the proof.\qed

\section{Conclusions}\label{sec: Conclusions}

In this work, we identified the exact fundamental limits of multi-access caching for important general topologies as well as have derived information-theoretic converses that bound the performance across ensembles of connectivities.

Our first contribution was to derive the fundamental limits of multi-access caching for the here-introduced generalized combinatorial topology that stands out for the unprecedented coding gains that it allows. These gains --- first recorded in~\cite{Muralidhar2021Maddah-Ali-NiesenCaching} for a specific instance of this generalized topology --- are proven in this work to be optimal, and to hold not only for a much denser range of users, but also in the presence of a coexistence of users connected to different numbers of caches. As a direct consequence of identifying the exact fundamental limits of the generalized combinatorial topology, we now know that a basic and fixed MAN cache placement can \emph{optimally} handle any generalized combinatorial network regardless of having unknown numbers of users with unequal cache-connectivity capabilities.

Subsequently, we considered the optimal performance of different ensembles of connectivities, deriving a lower bound on the average performance for a large ensemble of interest as well as for the entire ensemble of all possible topologies. To the best of our knowledge, this is the first work that explores the fundamental limits of the average performance across ensembles of connectivities as well as the first work to consider the topology-agnostic multi-access setting. We hope this contribution applies toward better understanding the role of topology in defining the performance as well as the corresponding role of topology-awareness.
\begin{itemize}
    \item \emph{Practical Pertinence and Optimality of the Generalized Combinatorial Topology}. As the subpacketization constraint is one of the main bottlenecks limiting the actual use of coded caching in various real applications, the combinatorial topology is undoubtedly a promising expedient to overcome this bottleneck. Indeed, this multi-access approach allows to serve many users at a time while controlling the subpacketization parameter by carefully calibrating the number of caches $\Lambda$. This particular topology breaks the barrier of having coding gains that are close to $\Lambda\gamma + 1$, and it goes one step further by allowing the unprecedented gains that are now a polynomial power of the cache redundancy, to be achieved with a modest subpacketization, a reasonable amount of cache resources and a very modest connectivity investment. Indeed, the gains explode even when $\lambda$ is as small as $\lambda = 2$ and this increase --- we re-emphasize --- is without any increase in the caching resources $\Lambda\gamma$. Our generalized model extends even further the benefits of this topology, allowing for a denser range of users $K$ and supporting the coexistence of users connected to different numbers of caches. As a side-product of here identifying the fundamental limits of performance in this broad topology, our result shows that, when there are users connected to a different number of caches $\lambda$ with $\lambda \in [0 : \Lambda]$ in accordance to the combinatorial topology, then a MAN cache placement and a simple TDMA-like application of the scheme in~\cite{Muralidhar2021Maddah-Ali-NiesenCaching} is enough to achieve the minimum possible load in the worst-case scenario. This implies that treating each $\lambda$-setting separately is optimal and so there would not be any advantage in encoding across users connected to a different number of caches.
    \item \emph{An Agnostic Perspective and the Search for the Best Topology}. In real scenarios the actual connectivity could change over time. In this context, under the assumption that the cache placement procedure is performed only once regardless of the connectivity, the results in~\Cref{thm: Converse on Optimal Average Worst-Case Load} and in~\Cref{thm: Converse on General Optimal Average Worst-Case Load} are of particular interest since they lower bound the optimal average worst-case load in the agnostic scenario. Furthermore, interestingly, a closer inspection reveals that~\Cref{thm: Converse on Optimal Average Worst-Case Load} holds also under the standard MAN cache placement\footnote{The converse in~\Cref{thm: Converse on Optimal Average Worst-Case Load} holds under a generic uncoded cache placement, and the final expression is a result of optimizing over all fixed placements (and over all dynamic delivery schemes). However, a closer inspection reveals that, instead of assuming a generic uncoded placement as in~\Cref{sec: Constructing the Index Coding Bound}, the placement can alternatively be forced to be the one presented in~\Cref{sec: Achievability Proof of MACC with Generalized Combinatorial Topology}, and the resulting lower bound would be the same.} described in~\Cref{sec: Achievability Proof of MACC with Generalized Combinatorial Topology}. This brings to the fore the open question of whether, under the assumption of a MAN placement, the combinatorial topology is indeed the best possible topology. This question is supported by the just-now mentioned specific optimization solution that leads to~\Cref{thm: Converse on Optimal Average Worst-Case Load}, as well as by the large gains that the MAN placement achieves over this topology, irrespective of the fact that different users may be connected to a different number of caches.
\end{itemize}

This work is part of a sequence of works that suggest the importance of the multi-access caching paradigm not only for its role in boosting the gains in coded caching, which had remained relatively low due to the severe subpacketization bottleneck, but also for its role in several distributed communication paradigms where communication complexity is traded off with computational complexity (see \cite{Wan2021OnComputation} and other related works \cite{Ye2018Communication-ComputationCoding, Dutta2019Short-Dot:Products, Elmahdy2020OnLearning}) as a function of how servers are connected to data sources. Finally, while most works on the MACC focused mainly on the cyclic wrap-around topology, our work further legitimizes the effort of investigating connectivities different from the originally proposed MACC model. The bounds on the average performance over the ensemble of connectivities leave open the possibility that there may exist another topology performing uniformly better than the powerful optimal performance of the generalized combinatorial topology identified in \Cref{thm: MACC with Generalized Combinatorial Topology}. Finding such a topology, if it exists, would be indeed an exciting proposition given the already large gains that are associated to the combinatorial topology.

\appendices

\section{Proof of \texorpdfstring{\Cref{lem: Acyclic Lemma}}{Lemma~\ref{lem: Acyclic Lemma}}}\label{app: Proof of Acyclic Lemma}

We show how the set described in~\Cref{lem: Acyclic Lemma} is guaranteed to be acyclic. Given any connectivity $b \in \mathcal{B}$, let us start by considering the set
\begin{equation}\label{eqn: First Acyclic Set}
    \bigcup_{k \in [K_{\emptyset, b}]} \bigcup_{\mathcal{T} \subseteq [\Lambda]} \left\{ W_{d_{\emptyset_k, \mathcal{T}}} \right\}.
\end{equation}
Since each user $\emptyset_k$ with $k \in [K_{\emptyset, b}]$ is not connected to any cache, we can conclude that there is no connection among the vertices corresponding to the subfiles $W_{d_{\emptyset_k, \mathcal{T}}}$ for each $k \in [K_{\emptyset, b}]$ and $\mathcal{T} \subseteq [\Lambda]$. This directly means that the set in~\eqref{eqn: First Acyclic Set} is acyclic. Now consider the set of vertices
\begin{equation}\label{eqn: Second Acyclic Set}
    \bigcup_{\lambda \in [\Lambda]}  \bigcup_{i \in [\lambda : \Lambda]}  \bigcup_{\substack{\mathcal{U}^i \subseteq \{c_1, \dots, c_i\} : |\mathcal{U}^i| = \lambda,\\ c_i \in \mathcal{U}^i}}  \bigcup_{k \in [K_{\mathcal{U}^i, b}]}  \bigcup_{\mathcal{T}_i \subseteq [\Lambda] \setminus \{c_1, \dots, c_i\}} \left\{ W_{d_{\mathcal{U}^i_k, \mathcal{T}_i}} \right\}.
\end{equation}
Following the same reasoning as in the proof of~\cite[Lemma 1]{Wan2020AnPlacement}, for a specific permutation of caches $\bm{c} = (c_1, \dots, c_\Lambda)$ we will say that the subfile $W_{d_{\mathcal{U}^i_k, \mathcal{T}_i}}$ belongs to the $i$-th \emph{level}\footnote{The term \emph{level} carries the same meaning as in the proof of~\cite[Lemma 1]{Wan2020AnPlacement}, and its impact here is described mathematically in compliance with our setting.}, which will mean that $\mathcal{U}^i \subseteq \{c_1, \dots, c_i\}$ with $|\mathcal{U}^i| = \lambda$ and $c_i \in \mathcal{U}^i$, and that $\mathcal{T}_i \subseteq [\Lambda] \setminus \{c_1, \dots, c_i\}$ with $i \in [\lambda : \Lambda]$. As one may see, no user in the $i$-th level has access to the subfiles in its level, since $\mathcal{U}^i \cap \mathcal{T}_i = \emptyset$ for each $\mathcal{U}^i \subseteq \{c_1, \dots, c_i\}$ with $|\mathcal{U}^i| = \lambda$ and $c_i \in \mathcal{U}^i$, and for each $\mathcal{T}_i \subseteq [\Lambda] \setminus \{c_1, \dots, c_i\}$. Moreover, no user in the $i$-th level has access to the subfiles in higher levels, since $\mathcal{U}^i \cap \mathcal{T}_j = \emptyset$ for each $\mathcal{U}^i \subseteq \{c_1, \dots, c_i\}$ with $|\mathcal{U}^i| = \lambda$ and $c_i \in \mathcal{U}^i$, and for each $\mathcal{T}_j \subseteq [\Lambda] \setminus \{c_1, \dots, c_j\}$ with $j \in [i + 1 : \Lambda]$. Consequently, we can conclude that the set in \eqref{eqn: Second Acyclic Set} is acyclic. Moreover, since by definition no user $\emptyset_k$ with $k \in [K_{\emptyset, b}]$ is connected to any cache, there cannot be any cycle between the set in \eqref{eqn: First Acyclic Set} and the set in \eqref{eqn: Second Acyclic Set}, and thus the union set
\begin{equation}
    \bigcup_{k \in [K_{\emptyset, b}]} \bigcup_{\mathcal{T} \subseteq [\Lambda]} \left\{ W_{d_{\emptyset_k, \mathcal{T}}} \right\} \cup \bigcup_{\lambda \in [\Lambda]}  \bigcup_{i \in [\lambda : \Lambda]}  \bigcup_{\substack{\mathcal{U}^i \subseteq \{c_1, \dots, c_i\} : |\mathcal{U}^i| = \lambda,\\ c_i \in \mathcal{U}^i}}  \bigcup_{k \in [K_{\mathcal{U}^i, b}]}  \bigcup_{\mathcal{T}_i \subseteq [\Lambda] \setminus \{c_1, \dots, c_i\}} \left\{ W_{d_{\mathcal{U}^i_k, \mathcal{T}_i}} \right\}
\end{equation}
is also acyclic. This concludes the proof.\qed

\section{Proof of \texorpdfstring{\Cref{lem: Strictly Decreasing Sequence}}{Lemma~\ref{lem: Strictly Decreasing Sequence}}}\label{app: Proof of the Strictly Decreasing Sequence Lemma}

Let us write $f(t) = \sum_{\lambda = 0}^{\Lambda - t} f_\lambda(t) = K_0 + \sum_{\lambda = 1}^{\Lambda} f_\lambda(t)$, where we define
\begin{equation}
    f_\lambda(t) \coloneqq K_\lambda \frac{\binom{\Lambda}{t + \lambda}}{\binom{\Lambda}{t}}
\end{equation}
with $f_\lambda(t) = 0$ if $t > \Lambda - \lambda$. Regarding convexity, we will prove each $f_\lambda(t)$ to be convex for $t \in [0 : \Lambda]$, which will then tell us that $f(t)$ is also convex, since a non-negative linear combination of functions that are convex on the same interval is also convex. Notice that, since any linear function is both convex and concave, the constant term $K_0$ is convex with respect to $t$.

When showing the convexity of $f_\lambda(t)$, we can focus on the range $t \leq \Lambda - \lambda + 1$ since we know that $f_\lambda(t) = 0$ for any $t \geq \Lambda - \lambda + 1$. Doing so will automatically allow us to conclude that this same function $f_\lambda(t)$ is also convex over the entire\footnote{Indeed, if the function $f_\lambda(t)$ is convex in the interval $t \in [0 : \Lambda - \lambda + 1]$ and $f_\lambda(t) = 0$ for $t \in [\Lambda - \lambda + 1 : \Lambda]$ --- which implies also convexity in the interval $t \in [\Lambda - \lambda + 1 : \Lambda]$ since any linear function is both convex and concave --- then $f_\lambda(t)$ is convex also over the entire interval $t \in [0 : \Lambda]$, since it should be self-evident that the line segment between any two points $t_1, t_2$ with $t_1 \in [0 : \Lambda - \lambda + 1]$ and $t_2 \in [\Lambda - \lambda + 1 : \Lambda]$ lies above the graph between such two points.} interval $t \in [0 : \Lambda]$. The convexity of $f_\lambda(t)$ can be easily shown by verifying that the second derivative $f^{(2)}_\lambda(t)$ with respect to $t$ is non-negative for $t \in [0 : \Lambda - \lambda + 1]$. To see this, let us first note that 
\begin{align}
     f_\lambda(t) & = K_\lambda \frac{\binom{\Lambda}{t + \lambda}}{\binom{\Lambda}{t}} \\
                  & =  K_\lambda \prod_{i = 0}^{\lambda - 1} \frac{\Lambda - t - i}{t + \lambda - i} \\
                  & =  K_\lambda \prod_{i = 0}^{\lambda - 1} f_{\lambda, i}(t)
\end{align}
where $f_{\lambda, i}(t) \coloneqq (\Lambda - t - i)/(t + \lambda - i)$. Applying the general Leibniz rule allows us to rewrite the second derivative $f^{(2)}_\lambda(t)$ as
\begin{align}
    f^{(2)}_\lambda(t) & = \left(K_\lambda \prod_{i = 0}^{\lambda - 1} f_{\lambda, i}(t) \right)^{(2)} \\
                       & = K_\lambda \sum_{k_0 + \dots + k_{\lambda - 1} = 2} \binom{2}{k_0, \dots, k_{\lambda - 1}} \prod_{i = 0}^{\lambda - 1} f_{\lambda, i}^{(k_i)}(t)\label{eqn: Second Derivative}
\end{align}
where $k_\ell \in \mathbb{Z}^{+}_{0}$ for each $\ell \in [0 : \lambda - 1]$. Now we will prove that $f^{(2)}_\lambda(t) \geq 0$ by showing that $\prod_{i = 0}^{\lambda - 1} f_{\lambda, i}^{(k_i)}(t) \geq 0$ for any summand in~\eqref{eqn: Second Derivative}, noticing that we have a summand for every $\lambda$-weak composition of $2$. Such weak compositions can be divided into two classes.
\begin{itemize}
    \item The first class includes weak compositions where $k_m = 2$ for some $m \in [0 : \lambda - 1]$ and $k_n = 0$ for each $n \in [0 : \lambda - 1] \setminus \{m\}$. In this case, the term $\prod_{i = 0}^{\lambda - 1} f_{\lambda, i}^{(k_i)}(t)$ is given by the product of $\lambda - 1$ functions $f_{\lambda, n}^{(0)}(t)$ for $n \in [0 : \lambda - 1] \setminus \{m\}$ with $1$ additional function $f_{\lambda, m}^{(2)}(t)$, where $f_{\lambda, n}^{(0)}(t)$ is simply the function $f_{\lambda, n}(t)$ and where $f_{\lambda, m}^{(2)}(t)$ is the second derivative of $f_{\lambda, m}(t)$. Hence, if $f_{\lambda, n}^{(0)}(t) \geq 0$ and $f_{\lambda, m}^{(2)}(t) > 0$, then $\prod_{i = 0}^{\lambda - 1} f_{\lambda, i}^{(k_i)}(t) \geq 0$.
    \item The second class includes weak compositions where $k_{m_1} = k_{m_2} = 1$ for some $m_1, m_2 \in [0 : \lambda - 1]$ with $m_1 \neq m_2$ and $k_n = 0$ for each $n \in [0 : \lambda - 1] \setminus \{m_1, m_2\}$. In this case, the term $\prod_{i = 0}^{\lambda - 1} f_{\lambda, i}^{(k_i)}(t)$ is given by the product of $\lambda - 2$ functions $f_{\lambda, n}^{(0)}(t)$ for $n \in [0 : \lambda - 1] \setminus \{m_1, m_2\}$ with $2$ additional functions $f_{\lambda, m_1}^{(1)}(t)$ and $f_{\lambda, m_2}^{(1)}(t)$, where now $f_{\lambda, n}^{(1)}(t)$ is the first derivative of $f_{\lambda, n}(t)$. Hence, if $f_{\lambda, n}^{(0)}(t) \geq 0$, $f_{\lambda, m_1}^{(1)}(t) < 0$ and $f_{\lambda, m_2}^{(1)}(t) < 0$, then $\prod_{i = 0}^{\lambda - 1} f_{\lambda, i}^{(k_i)}(t) \geq 0$.
\end{itemize}
At this point, in order to prove that $\prod_{i = 0}^{\lambda - 1} f_{\lambda, i}^{(k_i)}(t) \geq 0$ for any $\lambda$-weak composition of $2$, we have to prove that the following inequalities
\begin{align}
    f^{(0)}_{\lambda, i}(t) & \geq 0 \\
    f^{(1)}_{\lambda, i}(t) & < 0 \\
    f^{(2)}_{\lambda, i}(t) & > 0
\end{align}
hold for each $i \in [0 : \lambda - 1]$, taking into account that 
\begin{align}
    f^{(1)}_{\lambda, i}(t) & = \frac{2i - \Lambda - \lambda}{(t + \lambda - i)^2} \\
    f^{(2)}_{\lambda, i}(t) & = -\frac{2}{t + \lambda - i}f^{(1)}_{\lambda, i}(t).
\end{align}
Since $i \leq \lambda - 1$ and $\lambda \leq \Lambda - t + 1$, it holds that $f^{(0)}_{\lambda, i}(t) \geq 0$. Moreover, it holds that 
\begin{align}
    2i - \Lambda - \lambda & \leq 2(\lambda - 1) - \Lambda - \lambda \\
                           & = \lambda - 2 - \Lambda \\
                           & \leq -2 \\
                           & < 0.
\end{align}
Consequently, we can conclude that $f^{(1)}_{\lambda, i}(t) < 0$, since the numerator is always negative and the denominator is always positive. We can also conclude that $f^{(2)}_{\lambda, i}(t) > 0$, since $f^{(2)}_{\lambda, i}(t)$ is equal to the first derivative $f^{(1)}_{\lambda, i}(t) < 0$ multiplied by a strictly negative term $-2/(t + \lambda - i)$. Thus, given that the second derivative $f^{(2)}_\lambda(t)$ equals the sum of non-negative terms, we can conclude that $f^{(2)}_\lambda(t) \geq 0$ and that $f_\lambda(t)$ is convex for each $\lambda \in [\Lambda]$ and $t \in [0 : \Lambda - \lambda + 1]$. As a consequence, $f(t)$ is convex over the entire range of interest.

The fact that $f(t)$ is decreasing in $t \in [0 : \Lambda]$ follows from the fact that $f(0) = K\geq f(\Lambda) = K_0$ and from the aforementioned convexity of $f(t)$. This concludes the proof.\qed

\printbibliography

\end{document}